\newcommand{\BlackBox}{\rule{1.5ex}{1.5ex}}  
\newenvironment{proof}{\par\noindent{\bf Proof\ }}{\hfill\BlackBox\\[2mm]}
\newtheorem{theorem}{Theorem}
\newtheorem{lemma}[theorem]{Lemma}
\newtheorem{proposition}[theorem]{Proposition}
\newtheorem{corollary}[theorem]{Corollary}
\DeclareMathOperator*{\argmin}{arg\,min}
\DeclareMathOperator*{\argmax}{arg\,max}
\DeclareMathOperator*{\sign}{sign}
\def\QEDopen{{\setlength{\fboxsep}{0pt}\setlength{\fboxrule}{0.2pt}\fbox{\rule[0pt]{0pt}{1.3ex}\rule[0pt]{1.3ex}{0pt}}}}
\def\QED{\QEDopen}
\journal{arXiv}
\begin{document}
\def \bbR {\mathbb R}
\def \bbN {\mathbb N}
\def \bbZ {\mathbb Z}

\def \mF {\mathcal{F}}
\def \mG {\mathcal{G}}
\def \mI {\mathcal{I}}
\def \mL {\mathcal{L}}
\def \mM {\mathcal{M}}
\def \mN {\mathcal{N}}
\def \mO {\mathcal{O}}
\def \mP {\mathcal{P}}
\def \mQ {\mathcal{Q}}
\def \mS {\mathcal{S}}
\def \mT {\mathcal{T}}

\def \ba {\bm{a}}
\def \bb {\bm{b}}
\def \bc {\bm{c}}
\def \bd {\bm{d}}
\def \bp {\bm{p}}
\def \bq {\bm{q}}
\def \bx {\bm{x}}
\def \by {\bm{y}}
\def \bz {\bm{z}}
\def \bw {\bm{w}}
\def \bu {\bm{u}}
\def \bv {\bm{v}}
\def \br {\bm{r}}
\def \bs {\bm{s}}
\def \bR {\bm{R}}
\def \bS {\bm{S}}
\def \bI {\bm{I}}
\def \bA {\bm{A}}
\def \bB {\bm{B}}
\def \bC {\bm{C}}
\def \bD {\bm{D}}
\def \bE {\bm{E}}
\def \bF {\bm{F}}
\def \bG {\bm{G}}
\def \bH {\bm{H}}
\def \bP {\bm{P}}
\def \bQ {\bm{Q}}
\def \bR {\bm{R}}
\def \bW {\bm{W}}
\def \bX {\bm{X}}

\def \tv {\tilde{v}}
\def \tbx {\tilde{\bx}}
\def \tbv {\tilde{\bv}}
\def \tby {\tilde{\by}}
\def \tbR {\tilde{\bR}}
\def \tbI {\tilde{\bI}}

\def \bxkk {\bx^{k+1}}
\def \bxk {\bx^{k}}
\def \bykk {\by^{k+1}}
\def \byk {\by^{k}}
\def \bwkk {\bw^{k+1}}
\def \bwk {\bw^{k}}
\def \bvkk {\bv^{k+1}}
\def \bvk {\bv^{k}}
\def \tbxkk {\tilde{\bx}^{k+1}}
\def \tbxk {\tilde{\bx}^{k}}
\def \tbykk {\tilde{\by}^{k+1}}
\def \tbyk {\tilde{\by}^{k}}
\def \tbwkk {\tilde{\bw}^{k+1}}
\def \tbwk {\tilde{\bw}^{k}}
\def \tbvkk {\tilde{\bv}^{k+1}}
\def \tbvk {\tilde{\bv}^{k}}

\def \bbRn {\bbR^{n}}
\def \bbRm {\bbR^{m}}
\def \bbRN {\bbR^{N}}
\def \bbRM {\bbR^{M}}
\def \bbNM {\bbN_M}
\def \bbNk {\bbN_k}
\def \bbNn {\bbN_n}
\def \bbNm {\bbN_m}

\def \st {\text{s.\ t.}}
\def \prox {\mathrm{prox}}
\def \Fix {\mathrm{Fix}}

\newcommand\leqs{\leqslant}
\newcommand\geqs{\geqslant}

\begin{frontmatter}

\title{A Krasnoselskii-Mann Proximity Algorithm for Markowitz Portfolios with Adaptive Expected Return Level}
\author[mathjnu]{Yizun Lin}
\ead{linyizun@jnu.edu.cn}

\author[mathjnu]{Yongxin He}
\ead{wheyongxin@163.com}

\author[mathjnu]{Zhao-Rong Lai\corref{mycorrespondingauthor}}
\cortext[mycorrespondingauthor]{Corresponding author}
\ead{laizhr@jnu.edu.cn}

\address[mathjnu]{Department of Mathematics, College of Information Science and Technology, Jinan University,\\ Guangzhou 510632, China}

\begin{abstract}
Markowitz's criterion aims to balance expected return and risk when optimizing the portfolio. The expected return level is usually fixed according to the risk appetite of an investor, then the risk is minimized at this fixed return level. However, the investor may not know which return level is suitable for her/him and the current financial circumstance. It motivates us to find a novel approach that adaptively optimizes this return level and the portfolio at the same time. It not only relieves the trouble of deciding the return level during an investment but also gets more adaptive to the ever-changing financial market than a subjective return level. In order to solve the new model, we propose an exact, convergent, and efficient Krasnoselskii-Mann Proximity Algorithm based on the proximity operator and Krasnoselskii-Mann momentum technique. Extensive experiments show that the proposed method achieves significant improvements over state-of-the-art methods in portfolio optimization. This finding may contribute a new perspective on the relationship between return and risk in portfolio optimization.
\end{abstract}

\begin{keyword}
Markowitz portfolio \sep adaptive expected return \sep $\ell^1$ regularization \sep Krasnoselskii-Mann algorithm
\end{keyword}

\end{frontmatter}

\section{Introduction}
\label{intro}
Portfolio optimization (PO) with machine learning methods has become a prospective approach in advancing the interdiscipline of financial engineering \cite{olpsjmlr,SSPO,SPOLC,egrmvgap}. Ever since the first proposal of the mean-variance (MV) approach by Markowitz \cite{PS}, his criterion has become the most popular one for many PO models \cite{POMVreview,valueatrisk,convalueatrisk,sparsepo,enetmeanvariance,meancvar}. In brief, the original MV (OMV) model is
\begin{equation}\label{eqn:omv1}
\begin{split}
&\hat{\bw}=\argmin_{\bw\in \mathbb{R}^N}\bw^\top\bm{\Sigma}\bw,\\
&\st\ \bw^\top\bm{1}_N=1,\ \bw^\top\bm{\mu}=\rho,
\end{split}
\end{equation}
where $\bw$ denotes the $N$-dimensional portfolio (with respect to $N$ assets); $\bm{\mu}$ and $\bm{\Sigma}$ denote the expected return and the return covariance of these $N$ assets, respectively. Constraint $\bw^\top\bm{1}_N=1$ ($\bm{1}_N$ denotes the vector of $N$ ones) is the self-financing constraint, which indicates that no additional money can be used and full re-investment is compulsory. Constraint $\bw^\top\bm{\mu}=\rho$ means that the expected portfolio return is fixed at a level of $\rho$. The objective is to minimize the portfolio variance $\bw^\top\bm{\Sigma}\bw$ (considered as the portfolio risk) at this return level.

Based on many theoretical and practical milestone researches in finance, such as the Capital Asset Pricing Model (CAPM, \cite{CAPM}), the mutual fund performance \cite{SHARPratio} and the efficient market theory \cite{efficientmarket}, a higher portfolio return $\bw^\top\bm{\mu}$ accompanies a higher portfolio risk $\bw^\top\bm{\Sigma}\bw$. Thus they are usually treated as a pair, and the corresponding Pareto optimals form the efficient frontier \cite{CAPM} of all the feasible portfolios. In this sense, different individuals may choose different return levels $\rho$ according to their risk appetites. This convention continues in both theoretical and practical portfolio management today.

On the other hand, machine learning methods have been extending the methodology scope of PO. For example, sparsity methods have been employed to increase portfolio concentration. Brodie et al. \cite{sparsepo} impose $\ell^1$-regularization \cite{LASSO,LARS} on the portfolio to make it sparse and stable. Lai et al. \cite{SSPO} adopt the alternating direction method of multipliers (ADMM, \cite{admm}) to solve a short-term sparse PO model. Luo et al. \cite{SSPOl0} find several closed-form solutions for a short-term sparse PO model with $\ell^0$-regularization. Different from constructing a sparse portfolio, Lai et al. \cite{SPOLC} focus on covariance estimation in PO. They construct a covariance estimate in the perspective of operators and operator spaces. The latter $3$ methods are based on the Exponential Growth Rate (EGR) criterion \cite{problemsetting1,uPS1}, which has a different investing philosophy from the MV criterion \cite{egrmvgap}. Therefore, machine learning methods for the MV criterion are still in great demand.

In the perspective of machine learning, we are inspired to investigate whether it is possible to use an adaptive and flexible return level $\rho$ that fits the ever-changing financial market. It also makes sense in finance: the investor may have no idea about what return level $\rho$ is suitable for her/him, or for the current financial market; All he/she wants may be just getting a reasonable return from the market and getting rid of the trouble to choose a subjective $\rho$. Nevertheless, it is nontrivial to optimize $\rho$ and $\bw$ simultaneously, especially to achieve satisfactory investing performance. It motivates us to develop a novel PO model named Markowitz Portfolio with Adaptive Expected Return Level (MPAERL), which can dynamically balance return and risk. Our main contributions can be summarized as follows.
\begin{itemize}
\item[1)] We develop a new PO model with adaptive expected return level, which including $\ell^1$-regularization, equality constraints and inequality constraints.
\item[2)] We propose a convergent and efficient Krasnoselskii-Mann Proximity Algorithm (KMPA)  which based on the proximity operator and the Krasnoselskii-Mann momentum technique to solve this new PO model.
\item[3)] Our proposed KMPA can be directly extended to solve a class of two-term convex optimization models with inequality constraints.
\end{itemize}

The rest of this paper presents the following contents. Section \ref{sec:relatework} introduces some related works in this field. Section \ref{sec:Model} establishes the MPAERL model. In section \ref{sec_KMPA}, we develop an efficient Krasnoselskii-Mann Proximity Algorithm (KMPA) to solve the MPAERL model. We analyze the convergence of the KMPA in section \ref{mainsec}. Section \ref{sec:experiment} conducts extensive experimental results to assess the performance of MPAERL. Section \ref{sec:conclusion} draws conclusions. Last, we provide the proofs of some technical results in the appendices.

\section{Related Works}\label{sec:relatework}
Brodie et al. \cite{sparsepo} propose the Sparse and Stable Markowitz Portfolios (SSMP) formulated in Lasso \cite{LASSO}
\begin{equation}\label{eqn:ssmp}
\begin{split}
&\hat{\bw}=\argmin_{\bw\in\bbRN}\left\{\frac{1}{T}\| \bR\bw - \rho {\bm1}_{T} \|_2^2+ \tau\| \bw \|_1\right\},\\
&\st \ \bw^\top\hat{\bm{\mu}}=\rho, \bw^\top{\bm1}_N=1,
\end{split}
\end{equation}
where $\bR\in \bbR^{T\times N}$ is the sample asset return matrix ($T$ trading times and $N$ assets), $\br^{(t)}$ denotes the $t$-th row of $\bR$ (i.e., the asset returns at time $t$), $\hat{\bm\mu}:=\frac{1}{T} \bR^\top{\bm1}_{T}$ is a column vector of sample mean returns, $\rho\in\bbR$ is a given expected return level, $\tau\geqs0$ is the regularization parameter, $\|\cdot\|_2$ is the $\ell^2$-norm and $\|\cdot\|_1$ is the $\ell^1$-norm. In this model, the portfolio risk is embedded in the quadratic form $\frac{1}{T}\| \bR\bw - \rho {\bm1}_{T} \|^2$, which computes the mean squared error of the sample portfolio return $\br^{(t)}\bw$ fitting the given level $\rho$. Therefore, SSMP tries to obtain a sparse portfolio that minimizes the risk at the return level $\rho$. To see the relationship between (\ref{eqn:ssmp}) and (\ref{eqn:omv1}), one can expand $\frac{1}{T}\| \bR\bw - \rho {\bm1}_{T} \|^2$ as $\frac{1}{T}(\bw^\top\bR^\top\bR \bw- 2\rho {\bm1}_{T}^\top\bR\bw+\rho^2 T)$, which is a quadratic function of $\bw$ with the symmetric matrix $\frac{1}{T}\bR^\top\bR$. On the other hand, the sample estimator $\hat{\bm{\Sigma}}$ for (\ref{eqn:omv1}) is $\frac{1}{T-1}\bR^\top(\bI_T-\frac{1}{T}{\bm1}_{T}{\bm1}_{T}^\top)\bR$, which is a centralized version of $\frac{1}{T}\bR^\top\bR$. By this way, (\ref{eqn:ssmp}) essentially follows the Markowitz's criterion.

Ho et al. \cite{enetmeanvariance} propose a Weighted Elastic Net Penalized Portfolio (WENPP) that replaces the $\ell^1$-regularization with the elastic net regularization \cite{elasticnet}
\begin{equation}\label{eqn:wenpp}
\begin{split}
\hat{\bw}=\argmin_{\bw\in\bbRN}\ &\bigg\{\bw^\top \hat{\bm{\Sigma}}\bw-\bw^\top \hat{\bm{\mu}}+\sum_{i=1}^N \tau_i |w_i|+\sum_{i=1}^N \kappa_i |w_i|^2\bigg\},
\end{split}
\end{equation}
where $\bw:=(w_1,w_2,\ldots,w_N)^\top$, $\{\tau_i\}_{i=1}^N$ and $\{\kappa_i\}_{i=1}^N$ are the mixing parameters for the $\ell^1$ and $\ell^2$ regularization, respectively. This model can be transformed into a Lasso one via variable changes.

Lai et al. \cite{SSPO} propose a Short-term Sparse Portfolio Optimization (SSPO) that minimizes the negative potential portfolio return with $\ell^1$-regularization
\begin{equation*}\label{eqn:sparsemodel2}
\hat{\bw}=\argmin_{\bw\in\bbRN} \left\{\bw^\top\bm{\varphi}+\tau\|\bw\|_1\right\},\quad\st\  \bw^\top{\bm1}_N=1,
\end{equation*}
where $\bm{\varphi}$ denotes the negative potential asset return, and $\tau$ is the regularization parameter.

Luo et al. \cite{SSPOl0} propose that if the portfolio is further constrained in the simplex
\begin{equation}\label{eqn:portvec}
\Delta_N:=\left\{\bw\in \bbR_+^N: \sum_{i=1}^N w_i=1\right\},
\end{equation}
where $\mathbb{R}_+^N$ is the $N$-dimensional nonnegative space, then the SSPO with $\ell^0$-regularization has closed-form solutions based on the following asset selection $\tilde{\mathbb{I}}_{\bm{\varphi}}^{min}$:
\begin{align}
\label{eqn:sparsemodell0}
&\hat{\bw}=\argmin_{\bw\in\bbRN} \left\{\bw^\top\bm{\varphi}+\tau\|\bw\|_0\right\}, \quad\st\: \bw\in \Delta_N,\\
\label{eqn:sparsemodell0-sol}
&\tilde{\mathbb{I}}_{\bm{\varphi}}^{min}:=\left\{i\in\bbN_{N}:\varphi_i\leqs\min_{j\in\bbN_{N}}\varphi_j+\epsilon\right\},
\end{align}
where $\bm{\varphi}:=(\varphi_1,\varphi_2,\ldots,\varphi_N)^\top$, $\bbN_N:=\{1,2,\ldots,N\}$, $\epsilon\geqs 0$ is a slack variable that allows more assets to be selected and takes the regularizing function of $\tau\|\bw\|_0$.

To fill the gap of covariance estimation in PO, Lai et al. \cite{SPOLC} propose a rank-one covariance estimate
\begin{equation*}\label{eqn:ro-estimate}
\hat{\mathbf{\Sigma}}_{\text{RO}}:=\bu_{1}\zeta_1^*\bu_{1}^\top
\end{equation*}
in the principal rank-one tangent space at the price relative matrix $\bX:=\bR+{\bm1}_{T\times N}$, where $\bu_{1}\in\bbR^{N}$ is the principal right eigenvector in the singular value decomposition of $\bX$ and $\zeta_1^*$ is a computed spectral energy. Then they propose a loss control PO scheme (SPOLC)
\begin{equation*}\label{eqn:spolc1}
\hat{\bw}=\argmax_{\bw \in \Delta_N}\left\{(\min_{1\leqs t\leqs T} \mathbf{x}^{(t)}\bw)-\gamma \bw^\top\hat{\bm{\Sigma}}_{\text{RO}}\bw\right\}
\end{equation*}
with $\hat{\bm{\Sigma}}_{\text{RO}}$, where $\mathbf{x}^{(t)}$ is the $t$-th row of $\bX$, $\min_{1\leqs t\leqs T} \mathbf{x}^{(t)}\bw$ represents the worst increasing factor in the considered time span. SPOLC exploits a trade-off between this worst increasing factor and the risk with a parameter $\gamma>0$, and shows robust performance to the downside risk.

To exploit trend representation in PO, Lai et al. \cite{RPRT} propose a Reweighted Price Relative Tracking (RPRT) system which automatically assigns and updates separate weights to the price relative predictions according to each asset's performance.
\begin{equation*}\label{eqn:rprtmodel}
\begin{split}
&\hat{\bw}_{t+1}=\argmax_{\bw \in \Delta_N} (\bw-\hat{\bw}_{t})^\top \bD_{t+1} (\hat{\bm{\varphi}}_{t+1}-\bar{\varphi}_{t+1}{\bm1}_N),\\
&\hspace{-2pt}\st\ (\bw-\hat{\bw}_{t})^\top (\bD_{t+1}^{-1})^2 (\bw-\hat{\bw}_{t})\leqs\frac{(\max\{\epsilon-\hat{\bw}_{t}^\top\hat{\bm{\varphi}}_{t+1},0\})^2}{\|\hat{\bm{\varphi}}_{t+1}-\bar{\varphi}_{t+1}{\bm1}_N\|_2^2},\qquad
\end{split}
\end{equation*}
where $\bD_{t+1}$ reweights the normalized price relative prediction $(\hat{\bm{\varphi}}_{t+1}-\bar{\varphi}_{t+1}{\bm1}_N)$. The constraint controls the generalized Mahalanobis distance between the candidate weight $\bw$ and the current weight $\hat{\bw}_{t}$ with the square inverse adjustment matrix $(\bD_{t+1}^{-1})^2$.

In recent years, researchers have focused not only on single-period portfolio strategies but also on multi-period investment strategies. The essence of multi-period investment strategies lies in the recognition that the investment outcomes of the current period can influence the risk tolerance or expected return level of the subsequent period. Consequently, the introduction of dynamic risk tolerance/expected-return constraint in portfolio selection was considered more valuable. Along this line of thinking, Wang et al. \cite{liu2015multi} investigated the multi-period portfolio optimization problem with dynamic risk and expected return levels within the mean-variance framework. Later, Gong et al. formulated two multi-period portfolio fuzzy optimization models with certain constraints in \cite{gong2022multi}, namely the wealth maximization model with constrained risk (MCFPS(I)) and the risk minimization model with constrained return (MCFPS(II)). Furthermore, a parameter $a$ was introduced to signify investors attitudes (optimistic, pessimistic, or neutral) towards the stock market.

\section{MPAERL Model}\label{sec:Model}
In this section, we propose the Markowitz Portfolio with Adaptive Expected Return Level (MPAERL). In the SSMP model (\ref{eqn:ssmp}), the expected return level $\rho$ is given manually and fixed according to the risk appetite of an investor, then the portfolio risk is minimized at this return level, which forms a return-risk balancing strategy. However, an investor may not know which return level is suitable for her/him. Besides, this fixed return level may not be suitable for the current financial circumstance. These problems motivate us to design an adaptive expected return level scheme and a more flexible return-risk balancing strategy. To be specific, we allow the expected return level $\rho$ change in an interval and optimize it simultaneously with the portfolio $\bw$ as follows:
\begin{equation}\label{mod_MPAERL1}
\begin{split}
&(\hat{\bw},\hat{\rho})=\argmin_{\bw\in\bbRN,\ \rho\in\bbR}\left\{\frac{1}{T}\|\bR\bw-\rho{\bm1}_T\|_2^2+\tau\|\bw\|_1\right\},\\
&\st\ \bw^\top\hat{\bm\mu}=\rho,\ \bw^\top{\bm1_N}=1,\ \rho_1\leqs\rho\leqs\rho_2,
\end{split}
\end{equation}
where $\rho_1,\rho_2\in(0,+\infty)$ are the given lower and upper bounds of $\rho$, respectively. By this way, we give $\rho$ a loose interval to adapt the financial circumstance, and address the relationships between the return, the risk and the portfolio in a unified framework (abbreviated as MPAERL). In this model, investors can easily adjust the lower bound $\rho_1$ and upper bound $\rho_2$ to suit their requirements, without the necessity of tuning the expected return level $\rho$.

Before developing an efficient algorithm to solve model \eqref{mod_MPAERL1}, we rewrite it as a more compact form. To this end, we let $\bI_N$ denote the $N\times N$ identity matrix, ${\bm0}_N$ denote the vector of $N$ zeros, and define
$$
\bv:=\left(\begin{array}{c}
\bw\\
\rho
\end{array}\right),\ \ \tilde{\bR}:=\left(\bR,\ -{\bm1}_T\right),\ \ \tilde{\bI}:=\left(\bI_N,\ {\bm0}_N\right),
$$
$$
\bA:=\left(\begin{array}{cc}
\hat{\bm\mu}^\top & -1\\
{\bm1}_{N}^\top & 0
\end{array}\right),\ \ \bb:=\left(\begin{array}{c}
0\\
1
\end{array}\right),
$$
$$
\bB:=\left(\begin{array}{cc}
{\bm0}_N^\top & 1\\
{\bm0}_N^\top & -1
\end{array}\right),\ \
\bc:=\left(\begin{array}{c}
\rho_1\\
-\rho_2
\end{array}\right).
$$
Then MPAERL can be rewritten as
\begin{equation}\label{mod_MPAERL2}
\begin{split}
&\hat{\bv}=\argmin_{\bv\in\bbR^{N+1}}\left\{\frac{1}{T}\|\tbR\bv\|_2^2+\tau\|\tbI\bv\|_1\right\},\\
&\st\ \bA\bv=\bb,\ \bB\bv\geqs\bc.
\end{split}
\end{equation}
Note that an equality constraint can be equivalently rewritten as two inequality constraints. By further defining
\begin{equation}\label{defDd}
\bD:=\left(\begin{array}{c}
\bA\\
-\bA\\
\bB
\end{array}\right)\ \ \mbox{and}\ \ \bd:=\left(\begin{array}{c}
\bb\\
-\bb\\
\bc
\end{array}\right),
\end{equation}
model \eqref{mod_MPAERL2} then becomes the following two-term optimization model with an inequality constraint:
\begin{equation}\label{mod_MPAERL3}
\hat{\bv}=\argmin_{\bv\in\bbR^{N+1}}\left\{\frac{1}{T}\|\tbR\bv\|_2^2+\tau\|\tbI\bv\|_1\right\},\quad\st\ \bD\bv\geqs\bd.
\end{equation}

\section{Krasnoselskii-Mann Proximity Algorithm}\label{sec_KMPA}
In this section, we develop an efficient Krasnoselskii-Mann proximity algorithm to solve model \eqref{mod_MPAERL3}. Note that $\bD\in\bbR^{6\times(N+1)}$. To simplify the notation and make the derivation more general, we let $m_1:=N+1$, $m_2:=6$, and define
\begin{equation}\label{def_fandg}
f(\bv):=\frac{1}{T}\|\tbR\bv\|_2^2,\ \ g(\bv):=\tau\|\tbI\bv\|_1,\ \ \mbox{for}\ \ \bv\in\bbR^{m_1}.
\end{equation}
We denote by $\Gamma_0(\bbRm)$ the class of all proper lower semicontinuous convex functions from $\bbRm$ to $\bbR\cup\{+\infty\}$. A function $\psi:\bbRm\to[-\infty,+\infty]$ is said to be proper if $-\infty\notin\psi(\bbRm)$ and $\{\bx\in\bbRm|\psi(\bx)<+\infty\}\neq\varnothing$. It is easy to see that $f\in\Gamma_0(\bbR^{m_1})$ and it is differentiable with a Lipschitz continuous gradient, and $g\in\Gamma_0(\bbR^{m_1})$. In fact, model \eqref{mod_MPAERL3} can be characterized as an equivalent fixed-point problem. To this end, we recall the definitions of proximity operator, subdifferential and conjugate function. Let $\psi\in\Gamma_0(\bbRm)$. The proximity operator of $\psi$ at $\bx\in\bbRm$ is defined by
\begin{equation*}\label{def_prox}
\prox_{\psi}(\bx):=\argmin_{\bu\in\bbRm}\left\{\frac{1}{2}\|\bu-\bx\|_2^2+\psi(\bu)\right\}.
\end{equation*}
The subdifferential of $\psi$ at $\bx\in\bbRm$ is defined by
\begin{equation*}\label{def_subdiff}
\partial\psi(\bx):=\{\by\in\bbRm|\psi(\bu)\geqs\psi(\bx)+\langle\by,\bu-\bx\rangle \ \mbox{for all}\ \bu\in\bbRm\},
\end{equation*}
where $\langle\cdot,\cdot\rangle$ is the inner product defined by $\langle\bx,\by\rangle:=\bx^\top\by$ for $\bx,\by\in\bbRm$. The conjugate function of $\psi$ is given by
\begin{equation}\label{def_conj}
\psi^*(\bx):=\sup_{\bu\in\bbRm}\{\langle\bx,\bu\rangle-\psi(\bu)\},\ \ \mbox{for}\ \ \bx\in\bbRm.
\end{equation}
For an operator $\mT:\bbRm\to\bbRm$, $\bx\in\bbRm$ is called a fixed point of $\mT$ if $\bx=\mT\bx$. We denote the set of all fixed points of $\mT$ by $\Fix(\mT)$. Given an initial vector $\bx^0\in\bbRm$, the fixed-point iteration (Picard iteration) of $\mT$ is given by $\bx^{k+1}=\mT\bx^{k}$. We also define the indicator function $\iota_{\bd}:\bbR^{m_2}\to\bbR\cup\{+\infty\}$ with respect to vector $\bd$ by
\begin{equation}\label{def_iotad}
\iota_{\bd}(\bx):=\begin{cases}
0, & if\ \bx\geq\bd,\\
+\infty, & else.
\end{cases}
\end{equation}
Note that $\iota_{\bd}$ is also convex since the set $\{\bx\in\bbR^{m_2}|\ \bx\geqs\bd\}$ is a convex set. Moreover, $\iota_{\bd}\in\Gamma_0(\bbR^{m_2})$.

We shall construct an operator $\mT_{\beta,\eta}:\bbR^{m_1+m_2}\to\bbR^{m_1+m_2}$ such that a solution of model \eqref{mod_MPAERL3} can be identified as a vector consisting of the first $m_1$ components of a fixed point of $\mT_{\beta,\eta}$. To this end, we let
\begin{equation}\label{def_EandP}
\bE:=\left(\begin{array}{cc}
\bI_{m_1} & -\beta\bD^\top\\
\eta\bD & \bI_{m_2}
\end{array}\right), \bP:=\left(\begin{array}{cc}
\beta\bI_{m_1} &\\
& \eta\bI_{m_2}
\end{array}\right),
\end{equation}
where $\beta$ and $\eta$ are introduced to add two degrees of freedom for controlling the averaged nonexpansiveness of the operator, thereby ensuring the convergence of the algorithm to be proposed subsequently. For
$\bz:=\left(\begin{array}{c}
\bv\\
\by
\end{array}\right)$
with $\bv\in\bbR^{m_1}$ and $\by\in\bbR^{m_2}$, we define function $r:\bbR^{m_1+m_2}\to\bbR$ and operator $\mF:\bbR^{m_1+m_2}\to\bbR^{m_1+m_2}$ by
\begin{equation}\label{def_randmF}
r(\bz):=f(\bv)\ \ \mbox{and}\ \ \mF(\bz):=\left(\begin{array}{c}
\prox_{\beta g}(\bv)\\
\prox_{\eta \iota_{\bd}^*}(\by)
\end{array}\right),
\end{equation}
respectively. Then the operator corresponding to the fixed-point characterization of model \eqref{mod_MPAERL3} is given by
\begin{equation}\label{def_tildeT}
\mT_{\beta,\eta}(\bz):=\mF(\bE\bz-\bP\nabla r(\bz)),\ \ \mbox{for}\ \ \bz\in\bbR^{m_1+m_2}.
\end{equation}
To establish this equivalent result, we recall three known facts (Theorem 16.3 of \cite{bauschke2017convex}, Proposition 2.6 of \cite{micchelli2011proximity} and Theorem 23.5 of \cite{rockafellar1970convex}) in the following lemma that indicate the relationships between minimizer, subdifferential, proximity operator and conjugation.

\begin{lemma}\label{lem_proxsubdiff}
Let $\psi\in\Gamma_0(\bbRm)$. Then the following facts hold:
\begin{itemize}
\item[$(i)$] {\bf(Fermat's rule).} $\hat{\bx}$ is a minimizer of $\psi$ if and only if ${\bm0}\in\partial\psi(\hat{\bx})$.
\item[$(ii)$] $\by\in\partial\psi(\bx)$ if and only if $\bx=\prox_{\psi}(\bx+\by)$.
\item[$(iii)$] $\by\in\partial\psi(\bx)$ if and only if $\bx\in\partial\psi^*(\by)$.
\end{itemize}
\end{lemma}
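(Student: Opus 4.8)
The statement to be proved is Lemma~\ref{lem_proxsubdiff}, which collects three classical facts. Since the excerpt explicitly attributes these to Theorem 16.3 of \cite{bauschke2017convex}, Proposition 2.6 of \cite{micchelli2011proximity}, and Theorem 23.5 of \cite{rockafellar1970convex}, the cleanest route is simply to cite those sources; but a self-contained proof is short, so I sketch one.

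\textbf{Plan.} For part $(i)$, I would unwind the definition of the subdifferential: ${\bm0}\in\partial\psi(\hat{\bx})$ means $\psi(\bu)\geqs\psi(\hat{\bx})+\langle{\bm0},\bu-\hat{\bx}\rangle=\psi(\hat{\bx})$ for all $\bu\in\bbRm$, which is literally the statement that $\hat{\bx}$ is a global minimizer. No convexity is even needed for this equivalence, though $\psi\in\Gamma_0(\bbRm)$ guarantees the objects are well-defined. For part $(ii)$, I would use the characterization of the proximity operator via its optimality condition: by part $(i)$ applied to the (strongly convex) function $\bu\mapsto\frac12\|\bu-\bx-\by\|_2^2+\psi(\bu)$, the point $\bx$ is its unique minimizer if and only if ${\bm0}\in(\bx-(\bx+\by))+\partial\psi(\bx)=-\by+\partial\psi(\bx)$, i.e.\ $\by\in\partial\psi(\bx)$; and the minimizer of that function is by definition $\prox_{\psi}(\bx+\by)$. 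So $\bx=\prox_{\psi}(\bx+\by)\iff\by\in\partial\psi(\bx)$. Here I need the sum rule for subdifferentials, $\partial(h+\psi)=\nabla h+\partial\psi$ when $h$ is a finite-valued smooth convex function — a standard fact I would invoke rather than reprove.

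\textbf{Part (iii)} is the one requiring the most care, and I expect it to be the main obstacle if a fully elementary argument is wanted. The direction $\by\in\partial\psi(\bx)\Rightarrow\bx\in\partial\psi^*(\by)$ follows from the Fenchel--Young inequality: $\by\in\partial\psi(\bx)$ is equivalent to the equality case $\psi(\bx)+\psi^*(\by)=\langle\bx,\by\rangle$, and that equality is symmetric in the pair $(\bx,\by)$ once one knows $\psi^{**}=\psi$ (which holds precisely because $\psi\in\Gamma_0(\bbRm)$, by the Fenchel--Moreau theorem). Indeed, from $\psi(\bx)+\psi^*(\by)=\langle\bx,\by\rangle$ and $\psi=\psi^{**}$ one gets $\psi^{**}(\bx)+\psi^*(\by)=\langle\bx,\by\rangle$, which is the equality case of Fenchel--Young for the pair $(\psi^*,\psi^{**})=(\psi^*,\psi)$ and hence says $\bx\in\partial\psi^*(\by)$. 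The reverse implication is the same argument with the roles of $\psi$ and $\psi^*$ swapped, again using $\psi^{**}=\psi$. So the real content is: (a) $\by\in\partial\psi(\bx)\iff\psi(\bx)+\psi^*(\by)=\langle\bx,\by\rangle$, which is immediate from the definition \eqref{def_conj} of the conjugate; and (b) the biconjugation identity $\psi^{**}=\psi$ for $\psi\in\Gamma_0$, which I would quote as Fenchel--Moreau rather than prove.

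\textbf{Recommended write-up.} Given that all three items are textbook results with the precise references already supplied in the text preceding the lemma, I would keep the proof to one or two sentences: state that $(i)$ is Fermat's rule (Theorem 16.3 of \cite{bauschke2017convex}), $(ii)$ is Proposition 2.6 of \cite{micchelli2011proximity}, and $(iii)$ is Theorem 23.5 of \cite{rockafellar1970convex}, optionally appending the three-line derivation of $(ii)$ from $(i)$ sketched above for the reader's convenience. The hard part is not any single deduction but resisting the temptation to reprove standard convex-analysis infrastructure (the sum rule, Fenchel--Moreau); the proof should lean on the cited theorems.
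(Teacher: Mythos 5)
Your proposal matches the paper exactly: the paper does not prove this lemma but simply recalls it as three known facts with the same citations (Theorem 16.3 of \cite{bauschke2017convex}, Proposition 2.6 of \cite{micchelli2011proximity}, Theorem 23.5 of \cite{rockafellar1970convex}), which is precisely your recommended write-up. Your optional self-contained sketches of $(i)$--$(iii)$ are also correct, so there is nothing to fix.
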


\begin{theorem}\label{thm_FPchar}
Let $\mT_{\beta,\eta}$ be defined by \eqref{def_tildeT}, $\bz:=\left(\begin{array}{c}
\bv\\
\by
\end{array}\right)$ with $\bv\in\bbR^{m_1}$ and $\by\in\bbR^{m_2}$.
\begin{itemize}
\item [$(i)$]If $\bv$ is a solution of model \eqref{mod_MPAERL3}, then for any $\beta,\eta\in(0,+\infty)$, there exists $\by\in\bbR^{m_2}$ such that $\bz\in\Fix(\mT_{\beta,\eta})$.
\item [$(ii)$]If there exist $\beta,\eta\in(0,+\infty)$ such that $\bz\in\Fix(\mT_{\beta,\eta})$, then $\bv$ is a solution of model \eqref{mod_MPAERL3}.
\end{itemize}
\end{theorem}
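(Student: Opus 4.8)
The plan is to unpack the fixed-point equation $\bz=\mT_{\beta,\eta}(\bz)$ into a pair of subdifferential inclusions, recognize those inclusions as the optimality (KKT) system of model \eqref{mod_MPAERL3}, and then prove the two implications separately: the ``easy'' direction $(ii)$ by a direct convexity estimate, and the ``hard'' direction $(i)$ by Fermat's rule together with subdifferential calculus. The role of Lemma~\ref{lem_proxsubdiff} is to carry out all the translations between proximity operators, subdifferentials, and conjugates.

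First I would compute $\mT_{\beta,\eta}$ block-wise. Since $\nabla r(\bz)=\big(\nabla f(\bv);\,\bm{0}\big)$ because $r$ ignores the $\by$-block, one gets $\bE\bz-\bP\nabla r(\bz)=\big(\bv-\beta\bD^\top\by-\beta\nabla f(\bv);\ \by+\eta\bD\bv\big)$, so by \eqref{def_randmF}--\eqref{def_tildeT} the condition $\bz\in\Fix(\mT_{\beta,\eta})$ amounts to the system $\bv=\prox_{\beta g}\!\big(\bv-\beta\bD^\top\by-\beta\nabla f(\bv)\big)$ and $\by=\prox_{\eta\iota_{\bd}^*}\!\big(\by+\eta\bD\bv\big)$. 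To the first equation I apply Lemma~\ref{lem_proxsubdiff}$(ii)$ with $\psi=\beta g$, together with $\partial(\beta g)=\beta\,\partial g$ for $\beta>0$, to obtain $-\bD^\top\by-\nabla f(\bv)\in\partial g(\bv)$, i.e. $\bm{0}\in\nabla f(\bv)+\partial g(\bv)+\bD^\top\by$. To the second equation I apply Lemma~\ref{lem_proxsubdiff}$(ii)$ with $\psi=\eta\iota_{\bd}^*$ to get $\bD\bv\in\partial\iota_{\bd}^*(\by)$, and then Lemma~\ref{lem_proxsubdiff}$(iii)$ converts this to $\by\in\partial\iota_{\bd}(\bD\bv)$. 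Hence, for any fixed $\beta,\eta>0$, $\bz\in\Fix(\mT_{\beta,\eta})$ holds if and only if
$$
\bm{0}\in\nabla f(\bv)+\partial g(\bv)+\bD^\top\by \qquad\text{and}\qquad \by\in\partial\iota_{\bd}(\bD\bv).
$$
Call this system $(\star)$; note it does not involve $\beta$ or $\eta$, which already explains the quantifiers appearing in both parts of the statement.

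For part $(ii)$ I assume $(\star)$. From $\by\in\partial\iota_{\bd}(\bD\bv)$ the subgradient inequality forces $\iota_{\bd}(\bD\bv)<+\infty$, i.e. $\bD\bv\geqs\bd$ (so $\bv$ is feasible), and also $\langle\by,\bu-\bD\bv\rangle\leqs0$ whenever $\bu\geqs\bd$; taking $\bu=\bD\bv'$ for an arbitrary feasible $\bv'$ gives $\langle\bD^\top\by,\bv'-\bv\rangle\leqs0$. Using $-\nabla f(\bv)-\bD^\top\by\in\partial g(\bv)$, adding the gradient inequality for the convex differentiable $f$ to the subgradient inequality for $g$, and invoking the previous line, I obtain for every feasible $\bv'$
$$
f(\bv')+g(\bv') \geqs f(\bv)+g(\bv)-\langle\bD^\top\by,\bv'-\bv\rangle \geqs f(\bv)+g(\bv),
$$
so $\bv$ minimizes the objective over the feasible set, i.e. $\bv$ solves model \eqref{mod_MPAERL3}. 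This direction uses only convexity and needs no constraint qualification.

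For part $(i)$, if $\bv$ solves model \eqref{mod_MPAERL3} then $\bv$ minimizes $F:=f+g+\iota_{\bd}\circ\bD$ on $\bbR^{m_1}$, so $\bm{0}\in\partial F(\bv)$ by Fermat's rule (Lemma~\ref{lem_proxsubdiff}$(i)$). Since $f$ and $g$ are finite and continuous on all of $\bbR^{m_1}$, the additive subdifferential rule gives $\partial F(\bv)=\nabla f(\bv)+\partial g(\bv)+\partial(\iota_{\bd}\circ\bD)(\bv)$; combining with the chain rule $\partial(\iota_{\bd}\circ\bD)(\bv)=\bD^\top\partial\iota_{\bd}(\bD\bv)$ produces a multiplier $\by\in\partial\iota_{\bd}(\bD\bv)$ satisfying $\bm{0}\in\nabla f(\bv)+\partial g(\bv)+\bD^\top\by$, which is exactly $(\star)$; reversing the equivalences established above then yields $\bz=(\bv;\by)\in\Fix(\mT_{\beta,\eta})$ for every $\beta,\eta>0$. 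I expect the chain rule to be the only real obstacle, for a subtle reason: the textbook Slater-type qualification --- existence of $\bv_0$ with $\bD\bv_0>\bd$ --- is \emph{unavailable} here, because in \eqref{defDd} the inequality $\bD\bv\geqs\bd$ silently encodes the equalities $\bA\bv=\bb$ via $\pm\bA\bv\geqs\pm\bb$, so no strictly feasible point can exist. The way around this is polyhedrality: $\iota_{\bd}$ is the indicator of the polyhedron $\{\bx:\bx\geqs\bd\}$ and $\bD$ is linear, hence $\iota_{\bd}\circ\bD$ is a polyhedral convex function, and the polyhedral subdifferential calculus (Theorem~23.9 of \cite{rockafellar1970convex}; see also \cite{bauschke2017convex}) validates the chain rule at every point of $\mathrm{dom}(\iota_{\bd}\circ\bD)$, i.e. under the sole requirement that model \eqref{mod_MPAERL3} be feasible, which is a standing assumption. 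The analogous polyhedrality of $g$ --- or simply its finiteness everywhere --- likewise disposes of the sum rule.
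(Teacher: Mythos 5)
Your proof is correct and follows the same core route as the paper: unpack $\bz\in\Fix(\mT_{\beta,\eta})$ via Lemma \ref{lem_proxsubdiff}$(ii)$--$(iii)$ into the inclusions ${\bm0}\in\nabla f(\bv)+\partial g(\bv)+\bD^\top\by$ and $\by\in\partial\iota_{\bd}(\bD\bv)$ (the paper's \eqref{FPyeq1}--\eqref{FPxeq1}), note that this system is independent of $\beta,\eta$, and connect it to optimality of $\bv$. You deviate in two worthwhile details. For item $(ii)$ you bypass subdifferential calculus and verify optimality by a direct convexity estimate (feasibility of $\bv$ plus $\langle\bD^\top\by,\bv'-\bv\rangle\leqs 0$ for every feasible $\bv'$, added to the gradient and subgradient inequalities for $f$ and $g$); the paper instead passes back through \eqref{Fermateq}, which is also valid since only the always-true inclusion direction of the sum/chain rule is needed there, but your version needs no calculus rules at all. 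For item $(i)$ you explicitly justify the step the paper invokes simply as ``the chain rule of the subdifferential'': you correctly point out that a Slater-type qualification is unavailable, because in \eqref{defDd} the constraint $\bD\bv\geqs\bd$ encodes the equalities $\bA\bv=\bb$ via $\pm\bA$, so no strictly feasible point exists, and you repair this through polyhedrality of $\iota_{\bd}$ composed with the linear map $\bD$ (Rockafellar's polyhedral chain rule), with feasibility supplied by the hypothesis that $\bv$ solves \eqref{mod_MPAERL3}. This makes your argument more careful than the paper's precisely at the point where a constraint qualification is required, while the overall structure and conclusion coincide with the paper's proof.
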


\begin{proof}
We first prove item $(i)$. According to the definition of $\iota_{\bd}$, model \eqref{mod_MPAERL3} is equivalent to
\begin{equation}\label{mod_threeterm}
\hat{\bv}=\argmin_{\bv\in\bbR^{m_1}}\left\{f(\bv)+g(\bv)+\iota_{\bd}(\bD\bv)\right\}.
\end{equation}
Suppose that $\bv$ is a solution of model \eqref{mod_MPAERL3}, that is, a solution of model \eqref{mod_threeterm}. By Fermat's rule (Fact $(i)$ of Lemma \ref{lem_proxsubdiff}) and the chain rule of the subdifferential, we have that
\begin{equation}\label{Fermateq}
{\bm0}\in\nabla f(\bv)+\partial g(\bv)+\bD^\top\partial\iota_{\bd}(\bD\bv),
\end{equation}
that is,
$$
-\beta\nabla f(\bv)\in\partial\beta g(\bv)+\beta\bD^\top\partial\iota_{\bd}(\bD\bv),\ \ \mbox{for any}\ \ \beta>0.
$$
Thus, there exists $\by\in\bbR^{m_2}$ such that
\begin{equation}\label{FPyeq1}
\by\in\partial\iota_{\bd}(\bD\bv)
\end{equation}
and
\begin{equation}\label{FPxeq1}
-\beta(\nabla f(\bv)+\bD^\top\by)\in\partial\beta g(\bv).
\end{equation}
Employing Fact $(ii)$ of Lemma \ref{lem_proxsubdiff} for \eqref{FPxeq1}, we see that
\begin{equation}\label{FPxeq2}
\bv=\prox_{\beta g}(\bv-\beta(\nabla f(\bv)+\bD^\top\by)).
\end{equation}
In addition, it follows from \eqref{FPyeq1} and Fact $(iii)$ of Lemma \ref{lem_proxsubdiff} that $\eta\bD\bv\in\partial\eta\iota_{\bd}^*(\by)$, which together with Fact $(ii)$ of Lemma \ref{lem_proxsubdiff} implies that
\begin{equation}\label{FPyeq2}
\by=\prox_{\eta\iota_{\bd}^*}(\by+\eta\bD\bv).
\end{equation}
Now $\bz=\mT_{\beta,\eta}(\bz)$ follows from \eqref{FPxeq2}, \eqref{FPyeq2} and the definition of operator $\mT_{\beta,\eta}$ immediately.

We next prove item $(ii)$. Suppose that $\bz$ is a fixed point of $\mT_{\beta,\eta}$ for some $\beta,\eta\in(0,+\infty)$. By the definition of $\mT_{\beta,\eta}$, we know that \eqref{FPxeq2} and \eqref{FPyeq2} hold. Hence, \eqref{FPyeq1} and \eqref{FPxeq1} hold, which yield \eqref{Fermateq}. Then it follows from Fermat's rule that $\bv$ is a solution of model \eqref{mod_threeterm}, that is, a solution of model \eqref{mod_MPAERL3}.
\end{proof}

According to Theorem \ref{thm_FPchar}, to solve model \eqref{mod_MPAERL3}, it suffices to find a fixed point of operator $\mT_{\beta,\eta}$. As shown in \cite{li2016fast}, the direct fixed-point iteration of $\mT_{\beta,\eta}$ may not converge since $\bE$ is expansive. To guarantee the convergence, we can revise operator $\mT_{\beta,\eta}$ by employing the matrix splitting technique and obtain a new operator that has the same fixed points as $\mT_{\beta,\eta}$ \cite{li2016fast,lin2019krasnoselskii}. Specifically, we let
$$
\bG:=\left(\begin{array}{cc}
\bI_{m_1} & -\beta\bD^\top\\
-\eta\bD & \bI_{m_2}
\end{array}\right),
$$
\begin{equation}\label{defmatW}
\bW:=\bP^{-1}\bG=\left(\begin{array}{cc}
\frac{1}{\beta}\bI_{m_1} & -\bD^\top\\
-\bD & \frac{1}{\eta}\bI_{m_2}
\end{array}\right)
\end{equation}
and define operators $\mT_{\bG}:\bbR^{m_1+m_2}\to\bbR^{m_1+m_2}$ and $\mT_{\bW}:\bbR^{m_1+m_2}\to\bbR^{m_1+m_2}$ by
\begin{equation}\label{def_operTG}
\mT_{\bG}:\bz\to\left\{
\begin{array}{c}
\bu:(\bz,\bu)\ \text{satisfies that}\bigr.\\
\bu=\mF((\bE-\bG)\bu+\bG\bz)
\end{array}
\right\},
\end{equation}
\begin{equation}\label{def_operTW}
\mT_{\bW}:=\mT_{\bG}\circ(\mI-\bW^{-1}\nabla r),
\end{equation}
where $\mI$ denote the identity operator. We note that $G$ and $E$ are essentially the same except for the sign of the lower left block. This is to ensure that $E-G$ is a strictly lower block triangular matrix, allowing the implicit fixed-point iteration $\bv^{k+1}=\mF\big((\bE-\bG)\bv^{k+1}+\bG\bv^k\big)$ to have an explicit form. We show in the following proposition that operator $\mT_{\bG}$ is well-defined, and the two fixed point sets $\Fix(\mT_{\bW})$ and $\Fix(\mT_{\beta,\eta})$ are equivalent.

\begin{proposition}\label{prop_TGwelldef}
Let $\mT_{\bG}$ and $\mT_{\bW}$ be defined by \eqref{def_operTG} and \eqref{def_operTW}, respectively. Then the following hold:
\begin{itemize}
\item[$(i)$] For any given $\bz\in\bbR^{m_1+m_2}$, there exists a unique $\bu\in\bbR^{m_1+m_2}$ such that $\mT_{\bG}(\bz)=\bu$.
\item[$(ii)$] $\Fix(\mT_{\bW})=\Fix(\mT_{\beta,\eta})$.
\end{itemize}
\end{proposition}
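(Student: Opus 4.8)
The plan is to establish item $(i)$ by unwinding the defining relation of $\mathcal{T}_{\bG}$ into an explicit formula, and then use that formula together with the fixed-point characterization already proved to get item $(ii)$. For item $(i)$, write $\bu=\left(\begin{array}{c}\bu_1\\ \bu_2\end{array}\right)$ with $\bu_1\in\bbR^{m_1}$, $\bu_2\in\bbR^{m_2}$, and similarly split $\bz$. First I would compute $\bE-\bG$: by \eqref{def_EandP} and the definition of $\bG$, the two matrices agree except in the lower-left block, so $\bE-\bG=\left(\begin{array}{cc}{\bm0}&{\bm0}\\ 2\eta\bD&{\bm0}\end{array}\right)$, which is strictly lower block triangular (this is exactly the point noted in the text before the proposition). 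Consequently the equation $\bu=\mF\big((\bE-\bG)\bu+\bG\bz\big)$ decouples: the first block reads $\bu_1=\prox_{\beta g}\big(\bv-\beta\bD^\top\by\big)$ (where $\bz=(\bv,\by)$), which determines $\bu_1$ uniquely and explicitly from $\bz$ alone since $\prox_{\beta g}$ is a single-valued operator (as $g\in\Gamma_0$); then the second block reads $\bu_2=\prox_{\eta\iota_{\bd}^*}\big(2\eta\bD\bu_1-\eta\bD\bv+\by\big)$, which determines $\bu_2$ uniquely from $\bu_1$ and $\bz$, again by single-valuedness of the proximity operator. Existence and uniqueness of $\bu$ follow, so $\mathcal{T}_{\bG}$ is a well-defined (single-valued) operator.

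For item $(ii)$, I would show the two inclusions $\Fix(\mathcal{T}_{\bW})\subseteq\Fix(\mathcal{T}_{\beta,\eta})$ and $\Fix(\mathcal{T}_{\beta,\eta})\subseteq\Fix(\mathcal{T}_{\bW})$ by direct algebraic manipulation of the fixed-point equations. Suppose $\bz=(\bv,\by)\in\Fix(\mathcal{T}_{\bW})$, i.e. $\bz=\mathcal{T}_{\bG}\big(\bz-\bW^{-1}\nabla r(\bz)\big)$. Here $\nabla r(\bz)=\left(\begin{array}{c}\nabla f(\bv)\\ {\bm0}\end{array}\right)$ by \eqref{def_randmF}. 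Using the explicit formula for $\mathcal{T}_{\bG}$ derived in part $(i)$ applied to the shifted argument $\bz-\bW^{-1}\nabla r(\bz)$, and using $\bW=\bP^{-1}\bG$ so that $\bW^{-1}=\bG^{-1}\bP$, I would expand both block equations. The key bookkeeping is to verify that the combination of the shift $-\bW^{-1}\nabla r$, the matrix $\bG$, and the extra $2\eta\bD\bu_1$ term conspires so that the first block becomes precisely $\bv=\prox_{\beta g}\big(\bv-\beta(\nabla f(\bv)+\bD^\top\by)\big)$, i.e. \eqref{FPxeq2}, and the second block becomes $\by=\prox_{\eta\iota_{\bd}^*}(\by+\eta\bD\bv)$, i.e. \eqref{FPyeq2}. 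These are exactly the two equations defining $\Fix(\mathcal{T}_{\beta,\eta})$ as established in the proof of Theorem \ref{thm_FPchar}, so $\bz\in\Fix(\mathcal{T}_{\beta,\eta})$. The reverse inclusion runs the same computation backwards: starting from \eqref{FPxeq2}--\eqref{FPyeq2}, one reconstructs the identity $\bz=\mathcal{T}_{\bG}(\bz-\bW^{-1}\nabla r(\bz))$ by reversing the algebraic steps, which is legitimate since every operation used (inverting $\bG$, which is a triangular-in-block unipotent-type matrix and hence invertible; applying the single-valued $\mathcal{T}_{\bG}$) is reversible.

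The main obstacle I anticipate is the second inclusion's bookkeeping: one must carefully track how the ``$-\bW^{-1}\nabla r$'' pre-composition interacts with the internal structure of $\mathcal{T}_{\bG}$, because $\nabla r$ only touches the first block while $\bW^{-1}=\bG^{-1}\bP$ is a full (block-triangular) matrix that mixes the blocks, so the gradient contribution reappears in the second block equation and must be checked to cancel against the $2\eta\bD\bu_1$ term. Getting the signs and the factors of $\beta,\eta$ right in the identity $(\bE-\bG)\bu+\bG(\bz-\bW^{-1}\nabla r(\bz))=\bE\bz-\bP\nabla r(\bz)$ when $\bu=\bz$ is the crux; once that algebraic identity is verified, part $(ii)$ reduces to comparing with the equations in the proof of Theorem \ref{thm_FPchar}. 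I would also remark that invertibility of $\bG$ (needed for $\bW^{-1}$ to make sense) follows because $\det\bG=\det(\bI_{m_1})\det(\bI_{m_2}+\eta\beta\bD\bD^\top)$ by the Schur complement, and $\bI_{m_2}+\eta\beta\bD\bD^\top$ is positive definite for $\beta,\eta>0$.
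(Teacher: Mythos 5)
Your proof follows essentially the same route as the paper's: for item $(i)$ you decouple the implicit equation into the two block equations $\bu_1=\prox_{\beta g}(\bv-\beta\bD^\top\by)$ and $\bu_2=\prox_{\eta\iota_{\bd}^*}(2\eta\bD\bu_1-\eta\bD\bv+\by)$ and invoke single-valuedness of proximity operators of functions in $\Gamma_0$ (the paper additionally records why $\eta\iota_{\bd}^*\in\Gamma_0(\bbR^{m_2})$), and for item $(ii)$ the whole argument reduces, exactly as in the paper, to the identity $\bG\bW^{-1}=\bP$, which turns $(\bE-\bG)\bz+\bG\big(\bz-\bW^{-1}\nabla r(\bz)\big)$ into $\bE\bz-\bP\nabla r(\bz)$ and hence identifies the fixed-point equation of $\mT_{\bW}$ with that of $\mT_{\beta,\eta}$; the block-by-block bookkeeping you worry about is unnecessary once this one-line identity is used. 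One correction to your closing remark: since both off-diagonal blocks of $\bG$ are negative ($-\beta\bD^\top$ and $-\eta\bD$), the Schur complement is $\bI_{m_2}-\eta\beta\bD\bD^\top$, not $\bI_{m_2}+\eta\beta\bD\bD^\top$, so $\bG$ (equivalently $\bW$) is \emph{not} invertible for arbitrary $\beta,\eta>0$ but only when, e.g., $\beta\eta\|\bD\|_2^2<1$ --- a condition that does hold under the parameter ranges imposed later for convergence (and is implicit whenever $\bW$ is positive definite), but should not be claimed unconditionally.
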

\begin{proof}
See Appendix \ref{sec:appendice1}.
\end{proof}

Now to solve model \eqref{mod_MPAERL3}, it suffices to find a fixed point of operator $\mT_{\bW}$, which can be obtained by the fixed-point iteration
\begin{equation}\label{eq:FPiterTW}
\bz^{k+1}=\mT_{\bW}(\bz^k)
\end{equation}
with a given initial vector $\bz^0\in\bbR^{m_1+m_2}$. The convergence of this fixed-point iteration shall be illustrated in next section. We then give the explicit form of the fixed-point iteration \eqref{eq:FPiterTW}. For $\bz^{k}=\left(\begin{array}{c}
\bv^{k}\\
\by^{k}
\end{array}\right)$ with $\bv^{k}\in\bbR^{m_1}$ and $\by^{k}\in\bbR^{m_2}$,
\begin{align}
\notag&\bz^{k+1}=\mT_{\bW}(\bz^{k})=\mT_{\bG}\left(\bz^{k}-\bW^{-1}\nabla r(\bz^{k})\right)\\
\notag\Leftrightarrow\ &\bz^{k+1}=\mF\left((\bE-\bG)\bz^{k+1}+\bG\left(\bz^{k}-\bW^{-1}\nabla r(\bz^{k})\right)\right)\\
\label{implTWiter}\Leftrightarrow\ &\bz^{k+1}=\mF\left((\bE-\bG)\bz^{k+1}+\bG\bz^{k}-\bP\nabla r(\bz^{k})\right)\\
\notag\Leftrightarrow\ &\begin{cases}
\bvkk=\prox_{\beta g}\left(\bvk-\beta(\nabla f(\bvk)+\bD^\top\byk)\right)\\
\bykk=\prox_{\eta\iota_{\bd}^*}\left(\byk+\eta\bD(2\bvkk-\bvk)\right)
\end{cases}
\end{align}
We remark that the second equivalence above holds since the definition of $\bW$ in \eqref{defmatW} implies that $\bG\bW^{-1}=\bP$. It is also worth mentioning that the elements of the block matrix $\bE-\bG$ are all zeros except those in the lower left block, which turns the implicit iteration in \eqref{implTWiter} into an explicit one (see the third equivalence).

For the computation of $\prox_{\eta\iota_{\bd}^*}$, we
need the well-known Moreau decomposition \cite{moreau1962fonctions}, which is recalled as a lemma.
\begin{lemma}[Moreau decomposition]\label{lem_moreaudec}
Let $\psi\in\Gamma_0(\bbRm)$. Then for any $\bx\in\bbRm$, $\bx=\prox_{\psi}(\bx)+\prox_{\psi^*}(\bx)$.
\end{lemma}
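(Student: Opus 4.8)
The plan is to derive the decomposition purely from the three equivalences collected in Lemma \ref{lem_proxsubdiff}, without invoking any variational computation. Fix $\bx\in\bbRm$ and set $\bu:=\prox_{\psi}(\bx)$; the goal is to show that $\bx-\bu=\prox_{\psi^*}(\bx)$, which is exactly the claimed identity.

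First I would translate the definition of $\bu$ into a subdifferential inclusion. Writing $\by:=\bx-\bu$, the identity $\bu=\prox_{\psi}(\bu+\by)$ holds trivially since $\bu+\by=\bx$; hence Fact $(ii)$ of Lemma \ref{lem_proxsubdiff} gives $\by\in\partial\psi(\bu)$, i.e. $\bx-\bu\in\partial\psi(\bu)$. Next I would apply Fact $(iii)$ of Lemma \ref{lem_proxsubdiff} to swap the roles of the two arguments: $\bx-\bu\in\partial\psi(\bu)$ is equivalent to $\bu\in\partial\psi^*(\bx-\bu)$.

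It remains to convert this last inclusion back into a proximity identity, this time for $\psi^*$. Since $\psi\in\Gamma_0(\bbRm)$, its conjugate $\psi^*$ also belongs to $\Gamma_0(\bbRm)$, so Fact $(ii)$ of Lemma \ref{lem_proxsubdiff} is applicable with $\psi$ replaced by $\psi^*$. Taking the base point $\bx-\bu$ and the subgradient $\bu$, the inclusion $\bu\in\partial\psi^*(\bx-\bu)$ is equivalent to $\bx-\bu=\prox_{\psi^*}\big((\bx-\bu)+\bu\big)=\prox_{\psi^*}(\bx)$. Combining $\bu=\prox_{\psi}(\bx)$ with $\bx-\bu=\prox_{\psi^*}(\bx)$ yields $\bx=\prox_{\psi}(\bx)+\prox_{\psi^*}(\bx)$, as desired.

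There is essentially no hard step in this argument; the only point that requires a word of justification is that $\psi^*\in\Gamma_0(\bbRm)$, which is needed so that Fact $(ii)$ of Lemma \ref{lem_proxsubdiff} may legitimately be invoked for $\psi^*$ --- this is a standard property of the Legendre--Fenchel conjugate of a proper lower semicontinuous convex function and can be cited from, e.g., \cite{bauschke2017convex} or \cite{rockafellar1970convex}. One could alternatively give a self-contained proof by characterizing $\prox_{\psi}(\bx)$ and $\prox_{\psi^*}(\bx)$ as the unique points $\bu,\bw$ with $\bu+\bw=\bx$ and $\bw\in\partial\psi(\bu)$ via the Fenchel--Young equality, but the chain of equivalences above is shorter and reuses exactly the facts already assembled in Lemma \ref{lem_proxsubdiff}.
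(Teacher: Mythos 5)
Your argument is correct, and it is worth noting that the paper itself offers no proof at all for this lemma: it is simply recalled from the literature (the Moreau decomposition, cited to \cite{moreau1962fonctions}) and used as a black box in the computation of $\prox_{\eta\iota_{\bd}^*}$. What you supply is therefore a genuinely self-contained derivation built entirely out of the equivalences already collected in Lemma \ref{lem_proxsubdiff}: you read off $\bx-\prox_{\psi}(\bx)\in\partial\psi(\prox_{\psi}(\bx))$ from Fact $(ii)$, flip it with Fact $(iii)$ to $\prox_{\psi}(\bx)\in\partial\psi^*(\bx-\prox_{\psi}(\bx))$, and convert back with Fact $(ii)$ applied to $\psi^*$ to obtain $\bx-\prox_{\psi}(\bx)=\prox_{\psi^*}(\bx)$. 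Each equivalence is applied with the correct base point and subgradient, and you correctly flag the one external ingredient, namely that $\psi^*\in\Gamma_0(\bbRm)$ (Fenchel--Moreau), which is needed both for Fact $(ii)$ to apply to $\psi^*$ and for $\prox_{\psi^*}$ to be well defined in the sense of Lemma \ref{lem:proxwelldef}. Compared with the paper's citation-only treatment, your route buys a proof that stays inside the toolkit the paper has already assembled, at the cost of one standard conjugacy fact; the more common textbook proof via the Fenchel--Young equality or the direct variational characterization of the two proximity operators would be an equally valid alternative, as you note, but is not shorter.
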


Define $\psi(\by):=\eta\iota_{\bd}\left(\by/\eta\right)$, $\by\in\bbR^{m_2}$. Then it is easy to verify from the definition of $\prox_{\psi}(\by)$ that
\begin{equation}\label{eqforproxlstar}
\prox_{\psi}(\by)=\eta\cdot\prox_{\frac{1}{\eta}\iota_{\bd}}\left(\frac{1}{\eta}\by\right).
\end{equation}
We next verify that $\psi^*=\eta\iota_{\bd}^*$. By the definition of conjugate function in \eqref{def_conj}, for any $\by\in\bbR^{m_2}$, we have that
\begin{align*}
\psi^*(\by)&=\sup_{\bu\in\bbR^{m_2}}\left\{\langle\by,\bu\rangle-\eta\iota_{\bd}\left(\frac{\bu}{\eta}\right)\right\}\\
&=\eta\sup_{\bu\in\bbR^{m_2}}\left\{\left\langle\by,\frac{\bu}{\eta}\right\rangle-\iota_{\bd}\left(\frac{\bu}{\eta}\right)\right\}\\
&=\eta\sup_{\bx\in\bbR^{m_2}}\left\{\langle\by,\bx\rangle-\iota_{\bd}(\bx)\right\}=\eta\iota_{\bd}^*(\by).
\end{align*}
Then the fact $\psi^*=\eta\iota_{\bd}^*$ together with Lemma \ref{lem_moreaudec} and \eqref{eqforproxlstar} implies that

\begin{align}
\notag\prox_{\eta\iota_{\bd}^*}(\by)&=\prox_{\psi^*}(\by)=\by-\prox_{\psi}(\by)\\
\label{equaprox}&=\eta(\mI-\prox_{\frac{1}{\eta}\iota_{\bd}})\left(\frac{1}{\eta}\by\right),\ \ \mbox{for}\ \ \by\in\bbR^{m_2}.
\end{align}

To implement iteration \eqref{eq:FPiterTW}, we still need the closed forms of $\prox_{\beta g}$ and $\prox_{\frac{1}{\eta}\iota_{\bd}}$, where $g$ and $\iota_{\bd}$ are defined by \eqref{def_fandg} and \eqref{def_iotad}, respectively. By the definition of the indicator function $\iota_{\bd}$, we know that $\iota_{\bd}=\frac{1}{\eta}\iota_{\bd}$, which together with the definition of proximity operator yields that
\begin{equation*}
\prox_{\frac{1}{\eta}\iota_{\bd}}(\by)=\prox_{\iota_{\bd}}(\by)=\max(\by,\bd),
\end{equation*}
where the maxima in the above equation is taken component-wise. In addition, it is easy to verify that for $\bv\in\bbR^{m_1}$,
\begin{align*}
&\prox_{\beta g}(\bv)=\prox_{\beta\tau\|\cdot\|_1\circ\tbI}(\bv)\\
=&\big(\prox_{\beta\tau|\cdot|}(v_1),\prox_{\beta\tau|\cdot|}(v_2),\ldots,\prox_{\beta\tau|\cdot|}(v_{m_1-1}),v_{m_1}\big)^\top,
\end{align*}
where
$$
\prox_{\beta\tau|\cdot|}(x)=\max(|x|-\beta\tau,0)\cdot\sign(x),\ \ \mbox{for}\ \ x\in\bbR
$$
is the soft thresholding operator (see Example 2.3 of \cite{micchelli2011proximity}).

Though the fixed-point iteration of $\mT_{\bW}$ can guarantee the convergence, we may also care about the speed of convergence. To accelerate the convergence speed while preserving the theoretical convergence, the Krasnoselskii-Mann (KM) momentum technique can be utilized. The use of KM momentum scheme obtains a better approximation of the solution by adding the current fixed-point update to the difference between the current fixed-point update and the update from prior iteration. Specifically, the KM iteration for solving model \eqref{mod_MPAERL3} is given by
\begin{equation}\label{KMiterscheme}
\bz^{k+1}=\mT_{\theta_k}\bz^{k}=\mT_{\bW}\bz^{k}+\theta_k\left(\mT_{\bW}\bz^{k}-\bz^{k}\right),
\end{equation}
where
\begin{equation}\label{def_Tthetak}
\mT_{\theta_k}:=(1+\theta_k)\mT_{\bW}-\theta_k\mI,\ \ k\in\bbN,
\end{equation}
$\bbN$ denotes the set of all nonnegative integers. Throughout this paper, the momentum parameter is set to $\theta_k=\frac{\varrho k}{k+\delta}$, where $\delta\in(0,+\infty)$, $\varrho\in(-1,1)$, $k\in\bbN$. According to the explicit form of iteration \eqref{implTWiter} and Equation \eqref{equaprox}, the KM iteration in \eqref{KMiterscheme} can be written as
\begin{equation*}\label{iterscheme1}
\begin{cases}
\tbvkk=\prox_{\beta g}\left(\bvk-\beta(\nabla f(\bvk)+\bD^\top\byk)\right)\\
\tbykk=\eta(\mI-\prox_{\frac{1}{\eta}\iota_{\bd}})\left(\frac{1}{\eta}\byk+\bD(2\tbvkk-\bvk)\right)\\
\theta_k=\frac{\varrho k}{k+\delta}\\
\bvkk=(1+\theta_k)\tbvkk-\theta_k\bvk\\
\bykk=(1+\theta_k)\tbykk-\theta_k\byk
\end{cases}.
\end{equation*}
The last two steps in the above iteration is the KM momentum scheme. We call this iterative scheme Krasnoselskii-Mann Proximity Algorithm (KMPA).

The setting of $\theta_k$ is grounded in reference \cite{lin2019krasnoselskii}, ensuring both convergence and the robustness of convergence. We also remark that the KMPA can be extended to solve portfolio optimization models with non-convex constraints, such as cardinality and bounding constraints, which are common in portfolio optimization to ensure diversification across a specified number of assets and to limit the capital allocated to each asset. However, in the non-convex case, it can only guarantee the convergence to a critical point or locally optimal solution rather than a globally optimal solution.

\section{Convergence Analysis of KMPA}\label{mainsec}
In this section, we analyze the convergence of KMPA. To this end, we recall the definitions of nonexpansiveness, firm nonexpansiveness and averaged nonexpansiveness. Let $\bH\in\bbR^{m\times m}$ be a symmetric positive definite matrix. The weighted norm $\|\cdot\|_{\bH}$ is defined by $\|\bx\|_{\bH}:=\langle\bx,\bH\bx\rangle^\frac{1}{2}$, for $\bx\in\bbRm$. An operator $\mT:\bbRm\to\bbRm$ is called nonexpansive with respect to $\bH$ if $\|\mT\bx-\mT\by\|_{\bH}\leqs\|\bx-\by\|_{\bH}$ for all $\bx,\by\in\bbRm$. If $\|\mT\bx-\mT\by\|_{\bH}^2\leqs\langle\mT\bx-\mT\by,\bx-\by\rangle_{\bH}$ for all $\bx,\by\in\bbRm$, we say that $\mT$ is firmly nonexpansive with respect to $\bH$. If there exists a nonexpansive operator $\mN:\bbRm\to\bbRm$ with respect to $\bH$ and $\alpha\in(0,1)$ such that $\mT=(1-\alpha)\mI+\alpha\mN$, we say that $\mT$ is $\alpha$-averaged nonexpansive with respect to $\bH$. We also recall the following KM theorem \cite{bauschke2017convex,krasnosel1955two,mann1953mean}, which is crucial for the proof of convergence. It is easy to see from this theorem that an averaged nonexpansive operator has a convergent fixed-point iteration.

\begin{theorem}[KM theorem]\label{thm_KM}
Let $\mN:\bbRm\to\bbRm$ be a nonexpansive operator with respect to some symmetric positive definite matrix, such that $\Fix(\mN)\neq\varnothing$. For $\{\alpha_k\}_{k\in\bbN}\subset[0,1]$ and $\bx^0\in\bbRm$, define
$$
\bx^{k+1}:=(1-\alpha_k)\bx^{k}+\alpha_k\mN\bx^{k},\ \ k\in\bbN.
$$
If $\sum_{k=0}^\infty\alpha_k(1-\alpha_k)=+\infty$, then $\{\bx^{k}\}_{k\in\bbN}$ converges to a fixed point of $\mN$.
\end{theorem}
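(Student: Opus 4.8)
The plan is to reduce the statement to the classical theory of nonexpansive maps and then exploit finite-dimensionality. First I would note that, since $\bH$ is symmetric positive definite, $\langle\bx,\by\rangle_{\bH}:=\langle\bx,\bH\by\rangle$ is a genuine inner product on $\bbRm$ whose induced norm is $\|\cdot\|_{\bH}$, so $(\bbRm,\langle\cdot,\cdot\rangle_{\bH})$ is a finite-dimensional Hilbert space on which $\mN$ is nonexpansive in the ordinary sense; every computation below is carried out with $\langle\cdot,\cdot\rangle_{\bH}$ and $\|\cdot\|_{\bH}$. The three ingredients I would assemble are: (a) Fej\'er monotonicity of the orbit $\{\bx^k\}$ relative to $\Fix(\mN)$; (b) asymptotic regularity, namely $\liminf_{k\to\infty}\|\bx^k-\mN\bx^k\|_{\bH}=0$; and (c) a compactness argument that promotes a cluster point of $\{\bx^k\}$ to its limit.

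For (a), I would fix any $\bx^*\in\Fix(\mN)$ (the hypothesis guarantees one exists) and apply the convex-combination identity $\|(1-\alpha)\ba+\alpha\bb\|_{\bH}^2=(1-\alpha)\|\ba\|_{\bH}^2+\alpha\|\bb\|_{\bH}^2-\alpha(1-\alpha)\|\ba-\bb\|_{\bH}^2$ with $\ba=\bx^k-\bx^*$, $\bb=\mN\bx^k-\mN\bx^*=\mN\bx^k-\bx^*$, and $\alpha=\alpha_k\in[0,1]$. Using $\|\mN\bx^k-\bx^*\|_{\bH}\leqs\|\bx^k-\bx^*\|_{\bH}$ this yields directly
$$
\|\bx^{k+1}-\bx^*\|_{\bH}^2\leqs\|\bx^k-\bx^*\|_{\bH}^2-\alpha_k(1-\alpha_k)\|\bx^k-\mN\bx^k\|_{\bH}^2 .
$$
Hence $\{\|\bx^k-\bx^*\|_{\bH}\}_{k}$ is nonincreasing (so it converges and $\{\bx^k\}$ is bounded), and telescoping the displayed inequality gives $\sum_{k=0}^\infty\alpha_k(1-\alpha_k)\|\bx^k-\mN\bx^k\|_{\bH}^2<+\infty$. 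If $\liminf_k\|\bx^k-\mN\bx^k\|_{\bH}$ were positive, this series would dominate a positive multiple of $\sum_k\alpha_k(1-\alpha_k)=+\infty$, a contradiction; this proves (b).

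For (c), I would use (b) to pick a subsequence along which $\|\bx^{k_i}-\mN\bx^{k_i}\|_{\bH}\to 0$, then, since $\{\bx^k\}$ is bounded in the finite-dimensional space $\bbRm$, pass to a further subsequence $\bx^{k_j}\to\bar\bx$; along it $\|\bx^{k_j}-\mN\bx^{k_j}\|_{\bH}\to 0$ still holds, and the continuity of $\mN$ (immediate from nonexpansiveness) forces $\bar\bx-\mN\bar\bx=\lim_j(\bx^{k_j}-\mN\bx^{k_j})=\bm{0}$, i.e. $\bar\bx\in\Fix(\mN)$. Finally I would invoke (a) once more with $\bx^*=\bar\bx$: the sequence $\{\|\bx^k-\bar\bx\|_{\bH}\}$ converges, and since it tends to $0$ along $\{k_j\}$ its full limit is $0$, so $\bx^k\to\bar\bx\in\Fix(\mN)$. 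The one genuinely delicate step is (c): in an infinite-dimensional Hilbert space boundedness yields only weak cluster points and one must invoke the demiclosedness of $\mI-\mN$ at $\bm{0}$; here finite-dimensionality (Bolzano--Weierstrass together with norm-continuity of $\mN$) trivializes that point, so once (a) and (b) are in hand the remainder is elementary.
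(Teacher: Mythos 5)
Your proposal is correct, but note that the paper does not actually prove this statement: Theorem~\ref{thm_KM} is recalled as a known result with citations to \cite{bauschke2017convex,krasnosel1955two,mann1953mean}, so there is no in-paper proof to compare against. What you have written is essentially the classical Fej\'er-monotonicity proof of the Krasnoselskii--Mann theorem (as in Bauschke--Combettes), correctly transplanted to the weighted inner product $\langle\cdot,\cdot\rangle_{\bH}$: the convex-combination identity gives the key inequality $\|\bx^{k+1}-\bx^*\|_{\bH}^2\leqs\|\bx^k-\bx^*\|_{\bH}^2-\alpha_k(1-\alpha_k)\|\bx^k-\mN\bx^k\|_{\bH}^2$, telescoping plus $\sum_k\alpha_k(1-\alpha_k)=+\infty$ forces $\liminf_k\|\bx^k-\mN\bx^k\|_{\bH}=0$, and you correctly observe that in $\bbRm$ Bolzano--Weierstrass together with the Lipschitz continuity of $\mN$ replaces the weak-compactness/demiclosedness step needed in infinite dimensions, after which Fej\'er monotonicity with respect to the cluster point $\bar\bx$ upgrades subsequential convergence to convergence of the whole sequence. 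Your observation that only $\liminf_k\|\bx^k-\mN\bx^k\|_{\bH}=0$ (rather than the full limit, which the textbook proof also establishes) is needed is accurate, since the final step uses only convergence along the chosen subsequence together with monotonicity of $\|\bx^k-\bar\bx\|_{\bH}$. The argument is complete and compatible with how the theorem is used in the paper, where the relevant matrix is $\bW$ and all relevant operators are (averaged) nonexpansive with respect to $\|\cdot\|_{\bW}$.
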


We then show the averaged nonexpansiveness of operator $T_{\bW}$. We denote by $\lambda_{\min}(\bW)$ the minimum eigenvalue of $\bW$, $L\in(0,+\infty)$ the Lipschitz constant of $\nabla f$, that is, $\|\nabla f(\bx)-\nabla f(\by)\|_2\leqs L\|\bx-\by\|_2$ for all $\bx,\by\in\bbR^{m_1}$, and define
\begin{equation}\label{def_zeta}
\zeta:=\frac{2\lambda_{\min}(\bW)}{4\lambda_{\min}(\bW)-L}.
\end{equation}

\begin{proposition}\label{prop1_TWaver}
Let $\bW$ and $\mT_{\bW}$ be defined by \eqref{defmatW} and \eqref{def_operTW}, respectively. If $\lambda_{\min}(\bW)>\frac{L}{2}$, then $\bW$ is symmetric positive definite and $\mT_{\bW}$ is $\zeta$-averaged nonexpansive with respect to $\bW$.
\end{proposition}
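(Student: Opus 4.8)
The plan is to factor $\mT_{\bW}=\mT_{\bG}\circ(\mI-\bW^{-1}\nabla r)$ and show that each of the two factors is averaged nonexpansive with respect to the weighted norm $\|\cdot\|_{\bW}$; the averagedness of the composition, with the precise constant $\zeta$, then follows from the standard composition rule for averaged operators. The positive definiteness of $\bW$ is immediate: $\bW$ is symmetric by \eqref{defmatW}, hence has real eigenvalues, so $\lambda_{\min}(\bW)>\frac{L}{2}>0$ forces all of them to be positive. I also note that $\lambda_{\min}(\bW)>\frac{L}{2}$ gives $4\lambda_{\min}(\bW)-L>0$ and $2\lambda_{\min}(\bW)<4\lambda_{\min}(\bW)-L$, so $\zeta\in(0,1)$.

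First I would treat the inner operator $\mI-\bW^{-1}\nabla r$. Since $r(\bz)=f(\bv)$ depends only on the first block, $\nabla r(\bz)=(\nabla f(\bv),{\bm 0})$, so $\nabla r$ is the gradient of a convex function and is $L$-Lipschitz with the same constant $L$ as $\nabla f$. By the Baillon--Haddad theorem \cite{bauschke2017convex}, $\nabla r$ is $\frac1L$-cocoercive in the Euclidean inner product; using $\bW^{-1}\preceq\lambda_{\min}(\bW)^{-1}\bI$ this transfers to the $\bW$-geometry and shows that $\bW^{-1}\nabla r$ is $\frac{\lambda_{\min}(\bW)}{L}$-cocoercive with respect to $\bW$. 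Hence, for $\bz,\bz'\in\bbR^{m_1+m_2}$,
\[
\|(\mI-\bW^{-1}\nabla r)\bz-(\mI-\bW^{-1}\nabla r)\bz'\|_{\bW}^2\leqs\|\bz-\bz'\|_{\bW}^2-\Big(\tfrac2L-\tfrac1{\lambda_{\min}(\bW)}\Big)\|\nabla r(\bz)-\nabla r(\bz')\|_2^2,
\]
and, writing $\alpha:=\frac{L}{2\lambda_{\min}(\bW)}\in(0,1)$ and using the decomposition $\mI-\bW^{-1}\nabla r=(1-\alpha)\mI+\alpha\big(\mI-\tfrac1\alpha\bW^{-1}\nabla r\big)$, one concludes that $\mI-\bW^{-1}\nabla r$ is $\alpha$-averaged nonexpansive with respect to $\bW$.

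Next I would show the outer operator $\mT_{\bG}$ is firmly nonexpansive (equivalently, $\frac12$-averaged) with respect to $\bW$. Put $h(\bz):=\beta g(\bv)+\eta\iota_{\bd}^*(\by)$, so that $h\in\Gamma_0(\bbR^{m_1+m_2})$ and $\mF=\prox_{h}=(\mI+\partial h)^{-1}$. By Proposition \ref{prop_TGwelldef}$(i)$, $\bu=\mT_{\bG}\bz$ is the unique solution of $\bu=\mF\big((\bE-\bG)\bu+\bG\bz\big)$, equivalently, $(\bE-\bG)\bu+\bG\bz-\bu\in\partial h(\bu)$. Taking two points $\bz,\bz'$ with images $\bu,\bu'$, subtracting the two inclusions and invoking monotonicity of $\partial h$ yields an inequality that---once one uses that $\bE-\bG$ is strictly lower block triangular and the identity $\bG\bW^{-1}=\bP$ recorded just after \eqref{implTWiter}---collapses to $\|\bu-\bu'\|_{\bW}^2\leqs\langle\bu-\bu',\bz-\bz'\rangle_{\bW}$, that is, $\bW$-firm nonexpansiveness. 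This matrix-splitting argument is the technical heart of the proof and follows the lines of \cite{li2016fast,lin2019krasnoselskii}; the delicate point is to verify that the non-symmetric splitting $\bE=\bG+(\bE-\bG)$ is precisely matched to the inner product induced by $\bW=\bP^{-1}\bG$, so that the monotonicity inequality reduces cleanly to the desired estimate.

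Finally I would combine the two factors: the composition of a $\frac12$-averaged operator with an $\alpha$-averaged operator (both with respect to the same $\bW$) is $\frac{1}{2-\alpha}$-averaged with respect to $\bW$ \cite{bauschke2017convex}. With $\alpha=\frac{L}{2\lambda_{\min}(\bW)}$ this gives $\frac{1}{2-\alpha}=\frac{2\lambda_{\min}(\bW)}{4\lambda_{\min}(\bW)-L}=\zeta$, so $\mT_{\bW}$ is $\zeta$-averaged nonexpansive with respect to $\bW$, as claimed. The main obstacle is the firm-nonexpansiveness step for $\mT_{\bG}$; the remaining pieces are a routine assembly of standard facts about cocoercive gradients and compositions of averaged operators.
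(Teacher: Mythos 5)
Your overall route coincides with the paper's: factor $\mT_{\bW}=\mT_{\bG}\circ(\mI-\bW^{-1}\nabla r)$, show the inner factor is $\frac{L}{2\lambda_{\min}(\bW)}$-averaged with respect to $\bW$ via the Baillon--Haddad theorem (your cocoercivity computation is correct and is just a repackaging of the paper's argument with $\widetilde{\mN}=\mI-\frac{1}{\alpha}\bW^{-1}\nabla r$), show $\mT_{\bG}$ is firmly nonexpansive, i.e.\ $\frac12$-averaged, with respect to $\bW$, and combine with the composition rule (Lemma \ref{CYavercomp}), which with $\alpha_1=\frac12$, $\alpha_2=\alpha$ indeed gives $\frac{1}{2-\alpha}=\zeta$. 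The positive definiteness of $\bW$ and the final assembly are fine.

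The step that does not go through as you describe it is exactly the one you flag as the technical heart. Setting $h(\bz):=\beta g(\bv)+\eta\iota_{\bd}^*(\by)$, the inclusion $(\bE-\bG)\bu+\bG\bz-\bu\in\partial h(\bu)$ is correct, but invoking plain \emph{Euclidean} monotonicity of $\partial h$ on two such inclusions yields, with $\bp:=\bu-\bu'$ and $\bq:=\bz-\bz'$, only
\begin{equation*}
\langle\bp,(\bI-\bE+\bG)\bp\rangle\leqs\langle\bp,\bG\bq\rangle,
\end{equation*}
an estimate weighted by the \emph{nonsymmetric} matrix $\bG$. This does not collapse to $\|\bp\|_{\bW}^2\leqs\langle\bp,\bq\rangle_{\bW}$: since $\bW=\bP^{-1}\bG$ carries the block weights $\frac1\beta$ and $\frac1\eta$, the cross terms on the two sides do not match (unless $\beta=\eta$), and the identity $\bG\bW^{-1}=\bP$ cannot be inserted inside the unweighted inner products. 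The missing ingredient is precisely what the paper isolates in Lemma \ref{lem_mFfirm}: apply firm nonexpansiveness of $\prox_{\beta g}$ and $\prox_{\eta\iota_{\bd}^*}$ blockwise and scale the two resulting inequalities by $\frac1\beta$ and $\frac1\eta$ before adding --- equivalently, use that $\mF$ is firmly nonexpansive with respect to $\bP^{-1}$, or view $\mF$ as the resolvent, in the $\bP^{-1}$-metric, of $\bP$ times the subdifferential of the separable sum $g(\bv)+\iota_{\bd}^*(\by)$. With that weighting, the inequality becomes $\|\bp\|_{\bP^{-1}}^2\leqs\langle\bp,\bE\bp+\bG(\bq-\bp)\rangle_{\bP^{-1}}$, the skew-symmetry of $\bP^{-1}(\bI-\bE)$ kills the $\bE$-term, and one lands exactly on $\|\bp\|_{\bW}^2\leqs\langle\bp,\bq\rangle_{\bW}$. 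So your plan is the paper's plan, but the reduction for $\mT_{\bG}$ must be run in the $\bP^{-1}$-weighted geometry rather than via Euclidean monotonicity of $\partial h$; as literally stated, that step fails.
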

\begin{proof}
See Appendix \ref{sec:appendice2}.
\end{proof}

We also recall Lemma 6.2 of \cite{li2015multi} as the following Lemma \ref{lem_pdequiv}.

\begin{lemma}\label{lem_pdequiv}
For symmetric positive definite matrices $\bE_1\in\bbR^{n\times n}$, $\bE_2\in\bbR^{m\times m}$ and an $m\times n$ real matrix $\bC$, let $\bF:=\left(\begin{array}{cc}
\bE_1&\bC^\top\\
\bC&\bE_2
\end{array}\right)$ and $\widetilde{\bC}:=\bE_2^{-\frac{1}{2}}\bC\bE_1^{-\frac{1}{2}}$. Then $\bF$ is positive definite if and only if $\|\widetilde{\bC}\|_2<1$.
\end{lemma}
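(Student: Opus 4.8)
The plan is to reduce the positive-definiteness of the $2\times 2$ block matrix $\bF$ to a statement about the spectral norm of the single rescaled block $\widetilde{\bC}$, by applying the congruence transformation induced by $\mathrm{diag}(\bE_1^{-1/2},\bE_2^{-1/2})$. Concretely, since $\bE_1$ and $\bE_2$ are symmetric positive definite, their inverse square roots $\bE_1^{-1/2}$ and $\bE_2^{-1/2}$ exist, are symmetric positive definite, and are invertible. Form the block-diagonal matrix $\bQ:=\mathrm{diag}(\bE_1^{-1/2},\bE_2^{-1/2})$, which is symmetric and invertible. Then $\bF$ is positive definite if and only if $\bQ\bF\bQ^\top$ is positive definite, because congruence by an invertible matrix preserves inertia (Sylvester's law of inertia). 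A direct block multiplication gives
\begin{equation*}
\bQ\bF\bQ^\top=\left(\begin{array}{cc}
\bI_n & \bE_1^{-\frac12}\bC^\top\bE_2^{-\frac12}\\
\bE_2^{-\frac12}\bC\bE_1^{-\frac12} & \bI_m
\end{array}\right)=\left(\begin{array}{cc}
\bI_n & \widetilde{\bC}^\top\\
\widetilde{\bC} & \bI_m
\end{array}\right),
\end{equation*}
using the symmetry of $\bE_1^{-1/2}$ and $\bE_2^{-1/2}$ to identify the off-diagonal blocks as $\widetilde{\bC}$ and $\widetilde{\bC}^\top$.

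It then remains to show that the matrix $\bI-\widetilde{\bC}$-type block form $\left(\begin{smallmatrix}\bI_n&\widetilde{\bC}^\top\\\widetilde{\bC}&\bI_m\end{smallmatrix}\right)$ is positive definite if and only if $\|\widetilde{\bC}\|_2<1$. I would do this via the Schur complement: the block matrix is positive definite iff $\bI_n\succ 0$ (trivially true) and the Schur complement $\bI_m-\widetilde{\bC}\,\bI_n^{-1}\,\widetilde{\bC}^\top=\bI_m-\widetilde{\bC}\widetilde{\bC}^\top$ is positive definite. Now $\bI_m-\widetilde{\bC}\widetilde{\bC}^\top\succ 0$ iff every eigenvalue of $\widetilde{\bC}\widetilde{\bC}^\top$ is strictly less than $1$, i.e.\ iff $\lambda_{\max}(\widetilde{\bC}\widetilde{\bC}^\top)<1$, which is exactly $\|\widetilde{\bC}\|_2^2<1$, hence $\|\widetilde{\bC}\|_2<1$. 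Chaining the two equivalences — $\bF\succ 0\iff\bQ\bF\bQ^\top\succ 0\iff\|\widetilde{\bC}\|_2<1$ — completes the argument.

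Since the result is quoted as Lemma~6.2 of \cite{li2015multi}, the honest approach is simply to cite it; but if a self-contained proof is wanted, the two ingredients above (congruence preserves definiteness, and the Schur-complement criterion for a block matrix with identity diagonal blocks) are both standard and short. The only point that needs a little care is making sure the off-diagonal blocks of $\bQ\bF\bQ^\top$ really come out as $\widetilde{\bC}$ and its transpose rather than some other product — this relies on $\bE_1^{-1/2}$ and $\bE_2^{-1/2}$ being \emph{symmetric}, so that $(\bE_2^{-1/2}\bC\bE_1^{-1/2})^\top=\bE_1^{-1/2}\bC^\top\bE_2^{-1/2}$. I do not anticipate any genuine obstacle here; the main thing is to state the Schur-complement lemma in the correct ``strict'' form so that positive definiteness (not just semidefiniteness) matches the strict inequality $\|\widetilde{\bC}\|_2<1$.
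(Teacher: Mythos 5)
Your proposal is correct. Note that the paper does not prove this lemma at all: it simply recalls it as Lemma~6.2 of \cite{li2015multi}, so there is no in-paper argument to compare against, and your self-contained proof supplies exactly what the paper leaves to the citation. Your route is the standard one and every step checks out: congruence by the invertible symmetric matrix $\bQ=\mathrm{diag}(\bE_1^{-1/2},\bE_2^{-1/2})$ preserves positive definiteness and turns $\bF$ into the block matrix with identity diagonal blocks and off-diagonal blocks $\widetilde{\bC}$, $\widetilde{\bC}^\top$ (your care about symmetry of $\bE_i^{-1/2}$ in identifying the transpose is the right point to flag); the strict Schur-complement criterion then reduces positive definiteness to $\bI_m-\widetilde{\bC}\widetilde{\bC}^\top\succ0$, which is equivalent to $\lambda_{\max}(\widetilde{\bC}\widetilde{\bC}^\top)<1$, i.e.\ $\|\widetilde{\bC}\|_2<1$. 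For the purposes of this paper the honest move is indeed just to cite \cite{li2015multi}, as you say, but your two-step argument would serve as a complete replacement proof if one were wanted.
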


\begin{corollary}\label{cor:neqlambdamin}
Let $\bW$ be defined by \eqref{defmatW} and $\xi\in(0,+\infty)$, If
$$
\beta\in\left(0,\frac{2\xi}{L}\right)\ \text{and}\ \eta\in\left(0,\frac{2\xi(2\xi-\beta L)}{4\beta\xi^2\|\bD\|_2^2+L(2\xi-\beta L)}\right),
$$
then $\lambda_{\min}(\bW)>\frac{L}{2\xi}$.
\end{corollary}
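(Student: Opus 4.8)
The plan is to reduce the claim $\lambda_{\min}(\bW)>\frac{L}{2\xi}$ to the positive definiteness of a shifted block matrix, and then apply Lemma~\ref{lem_pdequiv}. Recall from \eqref{defmatW} that
$$
\bW=\left(\begin{array}{cc}
\frac{1}{\beta}\bI_{m_1} & -\bD^\top\\
-\bD & \frac{1}{\eta}\bI_{m_2}
\end{array}\right).
$$
The condition $\lambda_{\min}(\bW)>\frac{L}{2\xi}$ is equivalent to $\bW-\frac{L}{2\xi}\bI_{m_1+m_2}$ being symmetric positive definite. So the first step is to write
$$
\bW-\frac{L}{2\xi}\bI_{m_1+m_2}=\left(\begin{array}{cc}
\left(\frac{1}{\beta}-\frac{L}{2\xi}\right)\bI_{m_1} & -\bD^\top\\
-\bD & \left(\frac{1}{\eta}-\frac{L}{2\xi}\right)\bI_{m_2}
\end{array}\right)
$$
and identify the diagonal blocks $\bE_1:=\left(\frac{1}{\beta}-\frac{L}{2\xi}\right)\bI_{m_1}$, $\bE_2:=\left(\frac{1}{\eta}-\frac{L}{2\xi}\right)\bI_{m_2}$, and off-diagonal block $\bC:=-\bD$.

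The second step is to check the hypotheses of Lemma~\ref{lem_pdequiv}: that $\bE_1$ and $\bE_2$ are symmetric positive definite. Positivity of $\bE_1$ requires $\frac{1}{\beta}>\frac{L}{2\xi}$, i.e. $\beta<\frac{2\xi}{L}$, which is exactly the stated range for $\beta$. Positivity of $\bE_2$ requires $\frac{1}{\eta}>\frac{L}{2\xi}$, i.e. $\eta<\frac{2\xi}{L}$; I would note that the stated upper bound on $\eta$ is no larger than $\frac{2\xi}{L}$ (this is a quick comparison, since $\frac{2\xi(2\xi-\beta L)}{4\beta\xi^2\|\bD\|_2^2+L(2\xi-\beta L)}<\frac{2\xi(2\xi-\beta L)}{L(2\xi-\beta L)}=\frac{2\xi}{L}$ once $2\xi-\beta L>0$), so $\bE_2$ is positive definite as well.

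The third step is the main computation: by Lemma~\ref{lem_pdequiv}, $\bW-\frac{L}{2\xi}\bI$ is positive definite if and only if $\|\widetilde{\bC}\|_2<1$, where $\widetilde{\bC}=\bE_2^{-1/2}\bC\bE_1^{-1/2}$. Since $\bE_1,\bE_2$ are scalar multiples of identity matrices, $\|\widetilde{\bC}\|_2^2=\frac{\|\bD\|_2^2}{\left(\frac{1}{\beta}-\frac{L}{2\xi}\right)\left(\frac{1}{\eta}-\frac{L}{2\xi}\right)}$, so the condition $\|\widetilde{\bC}\|_2<1$ becomes
$$
\|\bD\|_2^2<\left(\frac{1}{\beta}-\frac{L}{2\xi}\right)\left(\frac{1}{\eta}-\frac{L}{2\xi}\right).
$$
I would then solve this inequality for $\eta$: writing $a:=\frac{1}{\beta}-\frac{L}{2\xi}=\frac{2\xi-\beta L}{2\beta\xi}>0$, the inequality reads $\frac{1}{\eta}-\frac{L}{2\xi}>\frac{\|\bD\|_2^2}{a}$, i.e. $\frac{1}{\eta}>\frac{L}{2\xi}+\frac{\|\bD\|_2^2}{a}=\frac{aL+2\xi\|\bD\|_2^2}{2\xi a}$, hence $\eta<\frac{2\xi a}{aL+2\xi\|\bD\|_2^2}$. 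Substituting $a=\frac{2\xi-\beta L}{2\beta\xi}$ and simplifying numerator and denominator by multiplying through by $2\beta\xi$ yields exactly the stated bound $\eta<\frac{2\xi(2\xi-\beta L)}{4\beta\xi^2\|\bD\|_2^2+L(2\xi-\beta L)}$. Combining, the stated ranges of $\beta$ and $\eta$ force both $\bE_1,\bE_2\succ 0$ and $\|\widetilde{\bC}\|_2<1$, so Lemma~\ref{lem_pdequiv} gives $\bW-\frac{L}{2\xi}\bI\succ 0$, i.e. $\lambda_{\min}(\bW)>\frac{L}{2\xi}$.

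The only mildly delicate point — the "hard part," such as it is — is the algebraic bookkeeping in the last step: correctly clearing denominators when inverting the inequality for $\eta$ and matching it to the published bound, and making sure the sign condition $2\xi-\beta L>0$ (already guaranteed by the range of $\beta$) is invoked so that the division preserves the inequality direction. Everything else is a direct application of Lemma~\ref{lem_pdequiv} with diagonal blocks.
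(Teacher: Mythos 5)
Your proposal is correct and follows essentially the same route as the paper: reduce the claim to positive definiteness of $\bW-\frac{L}{2\xi}\bI_{m_1+m_2}$, apply Lemma~\ref{lem_pdequiv} to get the condition $\left(\frac{1}{\beta}-\frac{L}{2\xi}\right)\left(\frac{1}{\eta}-\frac{L}{2\xi}\right)>\|\bD\|_2^2$, and verify it from the stated ranges of $\beta$ and $\eta$. The only cosmetic difference is that you solve the inequality for $\eta$ to recover the published bound, whereas the paper substitutes the upper bound on $\eta$ and checks that the product collapses to exactly $\|\bD\|_2^2$; the algebra is the same either way.
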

\begin{proof}
To prove that $\lambda_{\min}(\bW)>\frac{L}{2\xi}$, it suffices to show that
$$
\bW-\frac{L}{2\xi}\bI_{m_1+m_2}=\left(\begin{array}{cc}
\left(\frac{1}{\beta}-\frac{L}{2\xi}\right)\bI_{m_1} & -\bD^\top\\
-\bD & \left(\frac{1}{\eta}-\frac{L}{2\xi}\right)\bI_{m_2}
\end{array}\right)
$$
is positive definite. Since $\beta\in\left(0,\frac{2\xi}{L}\right)$ and $\eta\in\left(0,\frac{2\xi(2\xi-\beta L)}{4\beta\xi^2\|\bD\|_2^2+L(2\xi-\beta L)}\right)$, we have that $\frac{1}{\beta}-\frac{L}{2\xi}>0$ and $\frac{1}{\eta}-\frac{L}{2\xi}>0$. Let $\widetilde{\bD}:=\frac{1}{\sqrt{\left(\frac{1}{\beta}-\frac{L}{2\xi}\right)\left(\frac{1}{\eta}-\frac{L}{2\xi}\right)}}\bD$.
It follows from Lemma \ref{lem_pdequiv} that $\bW-\frac{L}{2\xi}\bI_{m_1+m_2}$ is positive definite if and only if $\|\widetilde{\bD}\|_2<1$, that is,
$$
\left(\frac{1}{\beta}-\frac{L}{2\xi}\right)\left(\frac{1}{\eta}-\frac{L}{2\xi}\right)>\|\bD\|_2^2.
$$
Using the facts $\beta\in\left(0,\frac{2\xi}{L}\right)$ and $\eta\in\left(0,\frac{2\xi(2\xi-\beta L)}{4\beta\xi^2\|\bD\|_2^2+L(2\xi-\beta L)}\right)$ again, we obtain
\begin{align*}
\left(\frac{1}{\beta}-\frac{L}{2\xi}\right)\left(\frac{1}{\eta}-\frac{L}{2\xi}\right)&>\left(\frac{1}{\beta}-\frac{L}{2\xi}\right)\left(\frac{4\beta\xi^2\|\bD\|_2^2+L(2\xi-\beta L)}{2\xi(2\xi-\beta L)}-\frac{L}{2\xi}\right)\\
&=\frac{2\xi-\beta L}{2\beta\xi}\cdot\frac{2\beta\xi\|\bD\|_2^2}{2\xi-\beta L}=\|\bD\|_2^2,
\end{align*}
which completes the proof.
\end{proof}

According to Proposition \ref{prop1_TWaver} and Corollary \ref{cor:neqlambdamin}, we have the following Proposition \ref{prop_TWaver} that provides the ranges of $\beta$ and $\eta$ to guarantee the averaged nonexpansiveness of $\mT_{\bW}$.

\begin{proposition}\label{prop_TWaver}
Let $\bW$, $\mT_{\bW}$ and $\zeta$ be defined by \eqref{defmatW}, \eqref{def_operTW} and \eqref{def_zeta}, respectively. If $\beta\in\left(0,\frac{2}{L}\right)$ and $\eta\in\left(0,\frac{2(2-\beta L)}{4\beta\|\bD\|_2^2+L(2-\beta L)}\right)$, then $\mT_{\bW}$ is $\zeta$-averaged nonexpansive with respect to $\bW$.
\end{proposition}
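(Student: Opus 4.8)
The plan is to derive Proposition~\ref{prop_TWaver} as a direct corollary of Proposition~\ref{prop1_TWaver} and Corollary~\ref{cor:neqlambdamin} by specializing the free parameter $\xi$ to the value $\xi=1$. First I would invoke Corollary~\ref{cor:neqlambdamin} with $\xi=1$: under that choice the hypotheses read $\beta\in\left(0,\frac{2}{L}\right)$ and $\eta\in\left(0,\frac{2(2-\beta L)}{4\beta\|\bD\|_2^2+L(2-\beta L)}\right)$, which are exactly the hypotheses stated in the present proposition, and the conclusion becomes $\lambda_{\min}(\bW)>\frac{L}{2}$. Thus the parameter ranges assumed here immediately force $\lambda_{\min}(\bW)>\frac{L}{2}$.

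Once $\lambda_{\min}(\bW)>\frac{L}{2}$ is in hand, the second step is simply to apply Proposition~\ref{prop1_TWaver}: that proposition states that whenever $\lambda_{\min}(\bW)>\frac{L}{2}$, the matrix $\bW$ is symmetric positive definite and $\mT_{\bW}$ is $\zeta$-averaged nonexpansive with respect to $\bW$, where $\zeta$ is defined by \eqref{def_zeta}. Chaining these two implications yields precisely the assertion of Proposition~\ref{prop_TWaver}. So the proof is a two-line composition of already-established results, and I would write it as such.

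There is essentially no obstacle here, since the heavy lifting — the averaged-nonexpansiveness estimate and the eigenvalue bound via Lemma~\ref{lem_pdequiv} — has already been carried out in Proposition~\ref{prop1_TWaver} and Corollary~\ref{cor:neqlambdamin}. The only point requiring a moment's care is the bookkeeping: checking that substituting $\xi=1$ into the formulas of Corollary~\ref{cor:neqlambdamin} reproduces verbatim the interval $\eta\in\left(0,\frac{2(2-\beta L)}{4\beta\|\bD\|_2^2+L(2-\beta L)}\right)$ and the threshold $\frac{L}{2}$, and that the definition of $\zeta$ in \eqref{def_zeta} is the one used in both statements. These are routine verifications, so the write-up consists of stating ``Take $\xi=1$ in Corollary~\ref{cor:neqlambdamin} to get $\lambda_{\min}(\bW)>\frac{L}{2}$, then apply Proposition~\ref{prop1_TWaver}.''
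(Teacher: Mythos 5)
Your proposal is correct and matches the paper's own proof exactly: the paper likewise observes that by Proposition \ref{prop1_TWaver} it suffices to verify $\lambda_{\min}(\bW)>\frac{L}{2}$, which follows from Corollary \ref{cor:neqlambdamin} with $\xi=1$. Nothing is missing; the bookkeeping check you mention (substituting $\xi=1$ into the interval for $\eta$) is indeed the only verification needed.
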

\begin{proof}
According to Proposition \ref{prop1_TWaver}, to prove the desired result, it suffices to verify that $\lambda_{\min}(\bW)>\frac{L}{2}$, which follows from Corollary \ref{cor:neqlambdamin} with $\xi=1$ immediately.
\end{proof}

It has been shown in Proposition \ref{prop_TWaver} that $\mT_{\bW}$ is $\zeta$-averaged nonexpansive with respect to the symmetric positive definite matrix $\bW$ for appropriate choices of $\beta$ and $\eta$. Then by employing Theorem \ref{thm_KM}, we know that the sequence generated by the fixed-point iteration of $\mT_{\bW}$ converges to a fixed point of $\mT_{\bW}$, that is, a minimizer of model \eqref{mod_MPAERL3}. In addition, we have the following convergence result for KMPA.

\begin{theorem}
Let $\mT_{\bW}$ be defined by \eqref{def_operTW}, $\xi:=1-\max\{\varrho,0\}$, $\bz^{0}\in\bbR^{m_1+m_2}$ be any initial vector, $\{\bz^{k}\}_{k\in\bbN}$ be the sequence generated by \eqref{KMiterscheme} and $\bx^k:=\left(z_1^{k},z_2^{k},\ldots,z_{m_1}^{k}\right)^\top$, $k\in\bbN$. If $\beta\in\left(0,\frac{2\xi}{L}\right)$ and $\eta\in\left(0,\frac{2\xi(2\xi-\beta L)}{4\beta\xi^2\|\bD\|_2^2+L(2\xi-\beta L)}\right)$, then $\{\bx^{k}\}_{k\in\bbN}$ converges to a solution of model \eqref{mod_MPAERL3}.
\end{theorem}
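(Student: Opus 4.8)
The plan is to rewrite the KM iteration \eqref{KMiterscheme} as an ordinary Krasnoselskii--Mann iteration for a single nonexpansive operator and then apply Theorem \ref{thm_KM}. First I would note that $\xi=1-\max\{\varrho,0\}\in(0,1]$ since $\varrho\in(-1,1)$, so the stated ranges for $\beta$ and $\eta$ are precisely the hypotheses of Corollary \ref{cor:neqlambdamin}; applying it gives $\lambda_{\min}(\bW)>\frac{L}{2\xi}\ge\frac{L}{2}$, and then Proposition \ref{prop1_TWaver} shows that $\bW$ is symmetric positive definite and that $\mT_{\bW}$ is $\zeta$-averaged nonexpansive with respect to $\bW$, i.e. $\mT_{\bW}=(1-\zeta)\mI+\zeta\mN$ for some operator $\mN$ that is nonexpansive with respect to $\bW$, with $\zeta\in(0,1)$. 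I would also record the quantitative equivalence $\lambda_{\min}(\bW)>\frac{L}{2\xi}\Longleftrightarrow\zeta(2-\xi)<1$ (a one-line manipulation of the definition \eqref{def_zeta} of $\zeta$, using $4\lambda_{\min}(\bW)-L>0$), which is the fact that ties the constants together.

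Substituting $\mT_{\bW}=(1-\zeta)\mI+\zeta\mN$ into the definition \eqref{def_Tthetak} of $\mT_{\theta_k}$ gives $\mT_{\theta_k}=(1-\alpha_k)\mI+\alpha_k\mN$ with $\alpha_k:=(1+\theta_k)\zeta$, so \eqref{KMiterscheme} becomes $\bz^{k+1}=(1-\alpha_k)\bz^{k}+\alpha_k\mN\bz^{k}$, which is exactly the iteration in Theorem \ref{thm_KM}. It then remains to verify $\{\alpha_k\}\subset[0,1]$, $\sum_{k}\alpha_k(1-\alpha_k)=+\infty$, and $\Fix(\mN)\neq\varnothing$. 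For the first two, I would use that $\theta_k=\varrho\frac{k}{k+\delta}$ is monotone in $k$ with $\theta_0=0$ and limit $\varrho$, so $\min\{\varrho,0\}\le\theta_k\le\max\{\varrho,0\}=1-\xi$ for all $k\in\bbN$; combined with $\varrho>-1$ this yields $0<\zeta\big(1+\min\{\varrho,0\}\big)\le\alpha_k\le\zeta(2-\xi)<1$, the last inequality being the equivalence recorded above. Hence $\{\alpha_k\}$ stays in a compact subinterval $[\underline\alpha,\bar\alpha]$ of $(0,1)$, so by concavity of $t\mapsto t(1-t)$ we get $\alpha_k(1-\alpha_k)\ge\min\{\underline\alpha(1-\underline\alpha),\bar\alpha(1-\bar\alpha)\}>0$ and the series diverges. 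For $\Fix(\mN)\neq\varnothing$: since $\zeta\neq0$, $\Fix(\mN)=\Fix(\mT_{\bW})$, which equals $\Fix(\mT_{\beta,\eta})$ by Proposition \ref{prop_TGwelldef}$(ii)$; and $\Fix(\mT_{\beta,\eta})\neq\varnothing$ by Theorem \ref{thm_FPchar}$(i)$ because model \eqref{mod_MPAERL3} is solvable (its objective is proper, lower semicontinuous, convex, and coercive on the feasible set, the coercivity coming from the term $\tau\|\tbI\bv\|_1$ together with the boundedness of $\rho$ there).

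Finally, Theorem \ref{thm_KM} applied to $\mN$ and $\{\alpha_k\}$ gives $\bz^{k}\to\bz^{*}$ for some $\bz^{*}\in\Fix(\mN)=\Fix(\mT_{\beta,\eta})$. Letting $\bv^{*}\in\bbR^{m_1}$ be the vector of the first $m_1$ components of $\bz^{*}$, Theorem \ref{thm_FPchar}$(ii)$ shows $\bv^{*}$ solves model \eqref{mod_MPAERL3}; and since $\bx^{k}$ is by definition the vector of the first $m_1$ components of $\bz^{k}$, the convergence $\bz^{k}\to\bz^{*}$ forces $\bx^{k}\to\bv^{*}$, completing the proof. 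I expect the main obstacle to be the bookkeeping in the middle paragraph: establishing the exact equivalence $\lambda_{\min}(\bW)>\frac{L}{2\xi}\Longleftrightarrow\zeta(2-\xi)<1$ and matching it against the sharp bound $\theta_k\le\max\{\varrho,0\}=1-\xi$ so that $\alpha_k$ stays strictly below $1$ — this is precisely where the choice $\xi=1-\max\{\varrho,0\}$ and the particular step-size intervals in the hypothesis are forced, and a slightly off constant would break the application of the KM theorem.
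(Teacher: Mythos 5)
Your proposal is correct and follows essentially the same route as the paper: the same reduction of \eqref{KMiterscheme} to a KM iteration with $\alpha_k=(1+\theta_k)\zeta$ via the $\zeta$-averagedness from Proposition \ref{prop1_TWaver} and Corollary \ref{cor:neqlambdamin}, the same verification that $\alpha_k$ stays in a compact subinterval of $(0,1)$ (your equivalence $\lambda_{\min}(\bW)>\frac{L}{2\xi}\Leftrightarrow\zeta(2-\xi)<1$ is exactly the paper's bound $\zeta'=(1+\max\{\varrho,0\})\zeta<1$), and the same existence and identification of fixed points through Proposition \ref{prop_TGwelldef}$(ii)$ and Theorem \ref{thm_FPchar}. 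No substantive differences to report.
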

\begin{proof}
Since $\varrho\in(-1,1)$, we have $\xi\in(0,1]$. Corollary \ref{cor:neqlambdamin} yields that $\lambda_{\min}(\bW)>\frac{L}{2\xi}>\frac{L}{2}$. Then it follows from Proposition \ref{prop1_TWaver} that $\bW$ is positive definite and $\mT_{\bW}$ is $\zeta$-averaged nonexpansive with respect to $\bW$, where $\zeta$ is defined by \eqref{def_zeta}. This implies that there exists a nonexpansive operator $\mM$ with respect to $\bW$ such that $\mT_{\bW}=(1-\zeta)\mI+\zeta\mM$. Hence
\begin{align}
\notag \mT_{\theta_k}&=(1+\theta_k)[(1-\zeta)\mI+\zeta\mM]-\theta_k\mI\\
\label{eq_Tthetakaver2}&=[1-(1+\theta_k)\zeta]\mI+(1+\theta_k)\zeta\mM,
\end{align}
for all $k\in\bbN$. To employ Theorem \ref{thm_KM} to prove this theorem, it suffices to verify that $(1+\theta_k)\zeta\in[0,1]$ and
$$
\sum_{k=0}^\infty(1+\theta_k)\zeta[1-(1+\theta_k)\zeta]=+\infty.
$$
Recall that $\theta_k=\frac{\varrho k}{k+\delta}\in(-|\varrho|,|\varrho|)$. This yields that $(1+\theta_k)\zeta>(1-|\varrho|)\zeta>0$. Let $\zeta':=(1+\max\{\varrho,0\})\zeta$. Then $\zeta'>0$ and $(1+\theta_k)\zeta<\zeta'$. In addition, the inequality $\lambda_{\min}(\bW)>\frac{L}{2\xi}$, combined with $\xi=1-\max\{\varrho,0\}$, gives that $\max\{\varrho,0\}<1-\frac{L}{2\lambda_{\min}(\bW)}$, and hence $1+\max\{\varrho,0\}<\frac{4\lambda_{\min}(\bW)-L}{2\lambda_{\min}(\bW)}$. Then we see from the definitions of $\zeta'$ and $\zeta$ that
$$
\zeta'=(1+\max\{\varrho,0\})\frac{2\lambda_{\min}(\bW)}{4\lambda_{\min}(\bW)-L}<1.
$$
Now we conclude that
\begin{equation}\label{neq:rangexiprime}
0<(1-|\varrho|)\zeta<(1+\theta_k)\zeta<\zeta'<1.
\end{equation}
Note that $(1-|\varrho|)\zeta(1-\zeta')$ is a positive constant. Then \eqref{neq:rangexiprime} implies that
$$
\sum_{k=0}^\infty(1+\theta_k)\zeta[1-(1+\theta_k)\zeta]>\sum_{k=0}^\infty(1-|\varrho|)\zeta(1-\zeta')=+\infty.
$$
To employ Theorem \ref{thm_KM} for proving the convergence of $\{\bz^{k}\}_{k\in\bbN}$, we still need the existence of fixed point of $\mT_{\bW}$. This can be achieved as long as model \eqref{mod_MPAERL3} has a solution (see Theorem \ref{thm_FPchar} $(i)$ and Proposition \ref{prop_TGwelldef} $(ii)$). Note that the objective function in model \eqref{mod_MPAERL3} is proper, lower semicontinuous, convex and coercive. The constraint set of this model is closed and convex. Then it follows from Proposition 11.15 of \cite{bauschke2017convex} that model \eqref{mod_MPAERL3} has a solution, and hence $\Fix(\mT_{\bW})\neq\varnothing$.

Now according to Theorem \ref{thm_KM}, we know that $\{\bz^{k}\}_{k\in\bbN}$ converges to a fixed point $\bz^*$ of $\mT_{\bW}$. We also know from Theorem \ref{thm_FPchar} and Item $(ii)$ of Proposition \ref{prop_TGwelldef} that the vector composed by the first $m_1$ components of $\bz^*$ is a solution of model \eqref{mod_MPAERL3}. Therefore, $\{\bx^{k}\}_{k\in\bbN}$ converges to a solution of model \eqref{mod_MPAERL3}.
\end{proof}

To close this section, we summarize the whole MPAERL strategy as the following Algorithm 1.

\hspace{-1.5em}\rule[0em]{16.6cm}{0.1em}\\
\text{{\bf Algorithm 1.} Whole MPAERL strategy}
\hspace{-18.6em}\rule[-1em]{16.6cm}{0.1em}\\

\noindent{\bf Input:} Given the sample asset price relative matrix $\bX\in\bbR^{T\times N}$, the regularization parameter $\tau\geqs0$, the lower bound $\rho_1$ and the upper bound $\rho_2$ of the expected return level, the momentum parameters $\varrho\in(-1,1)$ and $\delta>0$. Set the tolerance $tol=10^{-8}$ and the maximum iteration number $MaxIter=10^4$.\vspace{0.25em}

\hspace{-1.5em}{\bf Initialization:} Compute the sample asset return matrix $\bR=\bX-{\bm1}_{T\times N}$ and the sample mean return vector $\hat{\bm\mu}=\frac{1}{T} \bR^\top{\bm1}_{T}$. Set $\bv^0=\frac{1}{N}{\bm1}_{N+1}$, $v_{N+1}^0=\frac{1}{2}(\rho_1+\rho_2)$, $\by^{0}=\bD\bv^0$; and let $\tilde{\bR}=\left(\bR,\ -{\bm1}_T\right)$, $\bD$ and $\bd$ be given by \eqref{defDd}.\vspace{0.25em}

\hspace{-1.5em}1. Compute the Lipschitz constant $L=\frac{2}{T}\|\tilde{\bR}^\top\tilde{\bR}\|_2$.

\hspace{-1.5em}2. $\xi=1-\max\{\varrho,0\}$, $\beta=\frac{\xi}{L}$, $\eta=\frac{\xi(2\xi-\beta L)}{4\beta\xi^2\|\bD\|_2^2+L(2\xi-\beta L)}$ and $k=0$.\vspace{0.25em}

\hspace{-1.5em}{\bf repeat}

3. $\tbvkk=\prox_{\beta\tau\|\cdot\|_1\circ\tbI}\left(\bvk-\beta\left(\frac{2}{T}\tbR^\top\tbR\bvk+\bD^\top\byk\right)\right)$

4. $\tbykk=\eta(\mI-\prox_{\iota_{\bd}})\left(\frac{1}{\eta}\byk+\bD(2\tbvkk-\bvk)\right)$

5. $\theta_k=\frac{\varrho k}{k+\delta}$

6. $\bvkk=(1+\theta_k)\tbvkk-\theta_k\bvk$

7. $\bykk=(1+\theta_k)\tbykk-\theta_k\byk$

8. $k=k+1$\vspace{0.25em}

\hspace{-1.5em}{\bf until} $\frac{\|\bv^{k}-\bv^{k-1}\|_2}{\|\bv^{k-1}\|_2}\leqs tol$ or $k>MaxIter$.

\hspace{-1.5em}9. $\hat{\bw}=\bv^{k}(1:N)$.

\hspace{-1.5em}{\bf Output}: The portfolio $\hat{\bw}$.

\hspace{-1.5em}\rule[0.5em]{16.6cm}{0.1em}

We remark that the KMPA can be directly extended to solve general constrained optimization models of the form
\begin{equation}\label{mod_general}
\hat{\bx}=\argmin_{\bx\in\bbRm}\left\{f(\bx)+g(\bx)\right\},\quad\st\ \bQ\bx\geqs\bq,
\end{equation}
where $f\in\Gamma_0(\bbRm)$ is differentiable with a Lipschitz continuous gradient, $g\in\Gamma_0(\bbRm)$ has a closed form of its proximity operator, $\bQ\in\bbR^{n\times m}$ and $\bq\in\bbR^{n}$. Model \eqref{mod_MPAERL3} is a special case of model \eqref{mod_general} with $n:=6$, $m:=N+1$, $\bQ:=\bD$, $\bq:=\bd$, and $f$, $g$ given by \eqref{def_fandg}.

\section{Experimental Results}\label{sec:experiment}

In this section, we present the performance of the proposed algorithm. We conduct extensive experiments on 6 benchmark data sets from Kenneth R. French's Data Library\footnote{\url{http://mba.tuck.dartmouth.edu/pages/faculty/ken.french/data_library.html}} (a standard and widely-used data library for long-term PO), named FF25, FF25EU, FF32, FF48, FF100 and FF100MEOP. FF25 contains 25 portfolios (they can also be considered as ``assets'' in our experiments) formed on BE/ME (book equity to market equity) and investment from the US market. FF25EU contains 25 portfolios formed on ME and prior return from the European market. FF32 contains 32 portfolios developed by BE/ME and investment from the US market. FF48 contains 48 industry portfolios from the US market. FF100 contains 100 portfolios formed on ME and BE/ME, while FF100MEOP contains 100 portfolios formed on ME and operating profitability, all from the US market. All these data sets are monthly price relative sequences, which is a conventional frequency setting for long-term PO. Their profiles are shown in Table \ref{tab:infodataset}.

\begin{table}[h]
\footnotesize
\centering
\caption{Information of 6 benchmark data sets from real-world financial markets.}
\label{tab:infodataset}
\vspace{0.5em}
\begin{tabular}{|c|@{}c@{}|c|@{}c@{}|c|}
\hline
Data Set & Region & Time & Months & Assets\\
\hline
FF25 &US & $Jul/1971 \sim May/2023$ & 623 & 25\\
FF25EU &EU & $Nov/1990 \sim May/2023$ & 391 & 25\\
FF32 &US & $Jul/1971\sim May/2023$  & 623 &  32\\
FF48 &US   & $Jul/1971 \sim May/2023$ & 623 & 48\\
FF100 &US & $Jul/1971 \sim May/2023$ & 623 & 100\\
FF100MEOP &US & $Jul/1971 \sim May/2023$ & 623 & 100\\
\hline
\end{tabular}
\end{table}

We compare the proposed MPAERL with 9 state-of-the-art PO models (introduced in Section \ref{sec:relatework}): SSMP \cite{sparsepo}, SSPO \cite{SSPO}, SPOLC \cite{SPOLC}, RPRT \cite{RPRT}, S1, S2, S3 \cite{SSPOl0}, MCFPS(I) and MCFPS(II) \cite{gong2022multi}, as well as 2 trivial baseline models: 1/N \cite{1Nstrategy} and Market \cite{olpsjmlr}. S1, S2 and S3 are $3$ slightly different algorithms that solve (\ref{eqn:sparsemodell0}) and (\ref{eqn:sparsemodell0-sol}), in which S1 is deterministic but S2 and S3 are randomized. Additionally,  \cite{gong2022multi} employs a genetic algorithm with inherent randomness to address the MCFPS(I) and MCFPS(II) models.  Thus we run S2, S3 and the genetic algorithm used to solve model MCFPS(I) and MCFPS(II) for 10 times and report their average results in this section. The 1/N strategy rebalances the portfolio to be equally weighted on each trading period, while the Market strategy sets an equally weighted portfolio at the beginning and does not rebalance till the end.

We adopt the moving-window trading framework \cite{egrmvgap} in the experiments, which is consistent with practical portfolio management. In brief, a window size $T$ and the initial wealth $S^{(0)}=1$ are preset for a strategy, then the price relatives in the time window $t=[1:T]$ are used to update the portfolio $\hat{\bw}^{(T+1)}$ for the next trading period. Then we proceed to $(T+1)$ and update the cumulative wealth $S^{(T+1)}=(\mathbf{x}^{(T+1)}\cdot \hat{\bw}^{(T+1)})S^{(T)}$. In the next round, the price relatives in the time window $t=[2:(T+1)]$ are used to update the portfolio $\hat{\bw}^{(T+2)}$, and the above procedure is repeated, till the last period $\mathscr{T}$ of the investment. The equal-weight portfolio can be used at the beginning where there are insufficient samples to run a strategy. By this way, we obtain a backtest sequence $\{S^{(t)}\}_{t=0}^{\mathscr{T}}$ of cumulative wealths, which can be used to compute several evaluation scores for the investing performance and the risk assessment.

\subsection{Parameter Setting}
Before setting the parameters, we conduct sensitivity analyses for the parameters in MPAERL by using two important evaluation indicators Cumulative Wealth (CW) and Sharpe Ratio (SR), whose definitions are provided later in Section \ref{subsec:CW} and \ref{subsec:SR}, respectively. Table \ref{tab:sensreguparam} presents the CW and SR of MPAERL under various regularization parameters, indicating that our model is not sensitive to the regularization parameter around 1, thus we casually set $\tau=1$. Table \ref{tab:CWSRrho1} shows the CW and SR of MPAERL with different lower bounds, which demonstrate that the CW and SR obtained by MPAERL exhibit a certain degree of variation with changes in the lower bound $\rho_1$. After comprehensive consideration of the performance of both CW and SR metrics, we select $\rho_1=0.03$ as the lower bound of expected return level for all subsequent experiments. Table \ref{tab:CWSRrho2} shows that MPAERL is not sensitive to $\rho_2$ around 0.1. So we casually set $\rho_2=0.1$ in the subsequent experiments.

\begin{table*}[h]
\setlength{\tabcolsep}{1.3mm}
\footnotesize
\centering
\caption{Cumulative wealths and Sharpe ratios of MPAERL with different regularization parameters.}
\label{tab:sensreguparam}
\vspace{0.5em}
\begin{tabular}{|c|c|c|c|c|c|c|c|c|c|c|c|c|}
\hline
\multirow{2}{*}{$\tau$} & \multicolumn{2}{c|}{FF25} & \multicolumn{2}{c|}{FF25EU} &
\multicolumn{2}{c|}{FF32} & \multicolumn{2}{c|}{FF48} & \multicolumn{2}{c|}{FF100} & \multicolumn{2}{c|}{FF100MEOP}\\
\cline{2-13}
&CW &SR &CW &SR &CW &SR &CW &SR &CW &SR &CW &SR\\
\hline
0.01&1286.72 &0.2513 &94.64 &0.2505 &1331.23 &0.2483 &1971.45 &0.2644 &1231.69 &0.2454 &1015.61 &0.2431\\
0.1&1013.23 &0.2425 &105.31 &0.2545 &1752.29 &0.2535 &2312.21 &0.2491 &1758.28 &0.2508 &1578.15 &0.2504\\
0.3&994.28 &0.2421 &102.60 &0.2531 &1814.57 &0.2555 &2342.43 &0.2493 &1778.29 &0.2503 &1578.73 &0.2504\\
0.5&1001.95 &0.2424 &102.51 &0.2530 &1805.33 &0.2554 &2343.12 &0.2493 &1777.91 &0.2503 &1577.52 &0.2504\\
0.7&995.77 &0.2422 &102.47 &0.2530 &1802.72 &0.2555 &2343.22 &0.2493 &1777.44 &0.2503 &1555.63 &0.2502\\
0.9&999.77 &0.2423 &99.18 &0.2511 &1793.55 &0.2552 &2343.41 &0.2493 &1776.90 &0.2503 &1555.63 &0.2502\\
1 &998.54 &0.2423 &102.66 &0.2531 &1802.79 &0.2553 &2343.57 &0.2493 &1776.51 &0.2503 &1578.05 &0.2504\\
1.2&1000.16 &0.2423 &102.46 &0.2530 &1802.67 &0.2553 &2344.02 &0.2493 &1775.60 &0.2503 &1555.67 &0.2504\\
1.5&998.63 &0.2423 &99.24 &0.2512 &1793.50 &0.2552 &2344.82 &0.2493 &1774.58 &0.2502 &1577.72 &0.2504\\
2&997.80 &0.2423 &102.64 &0.2531 &1814.88 &0.2555 &2345.90 &0.2493 &1772.80 &0.2502 &1555.52 &0.2504\\
3&996.43 &0.2422 &102.79 &0.2532 &1794.18 &0.2552 &2348.22 &0.2493 &1770.18 &0.2502 &1555.32 &0.2502\\
4&996.67 &0.2422 &102.91 &0.2532 &1795.00 &0.2552 &2348.49 &0.2493 &1768.99 &0.2501 &1576.10 &0.2502\\
\hline
\end{tabular}
\end{table*}

Since SSMP and MPAERL are both based on the Markowitz's criterion, we empirically set the same regularization parameter $\tau$ and tune the same window size $T=18$ for these two methods. The expected return $\rho$ in SSMP is set as $0.066$ according to \cite{sparsepo}.
Based on the convergence analysis of KMPA in Section \ref{mainsec}, we always let $\delta=3$, $\varrho=0.8$, $\xi=1-\varrho$, $L=\frac{2}{T}\|\tilde{\bR}^\top\tilde{\bR}\|_2$,
$\beta=\frac{\xi}{L}$ and $\eta=\frac{\xi(2\xi-\beta L)}{4\beta\xi^2\|\bD\|_2^2+L(2\xi-\beta L)}$. We repeat Algorithm 1 until the equality tolerance $\frac{\|\bv^{k}-\bv^{k-1}\|_2}{\|\bv^{k-1}\|_2}<10^{-8}$
or the maximum iteration number $10,000$ is reached. For the MCFPS(I) and MCFPS(II) models, the invested proportion of the risk-free asset was set to 0, while the remaining parameters were set as in \cite{gong2022multi}.
Additionally, based on their performance on the CW, an evaluation indicator to be introduced in the next subsection, the parameter $a$ was set to 1 for the MCFPS(I) model and 1.5 for the MCFPS(II) model.
As for other compared methods, we set their parameters by the defaults in their original papers.

\begin{table*}[htbp]
\setlength{\tabcolsep}{1.0mm}
\footnotesize
\centering
\caption{Cumulative wealths and Sharpe ratios of MPAERL with different lower bounds of the expected return level.}
\label{tab:CWSRrho1}
\vspace{0.5em}
\begin{tabular}{|c|c|c|c|c|c|c|c|c|c|c|c|c|}
\hline
\multirow{2}{*}{$Bound$} & \multicolumn{2}{c|}{FF25} & \multicolumn{2}{c|}{FF25EU} &
\multicolumn{2}{c|}{FF32} & \multicolumn{2}{c|}{FF48} & \multicolumn{2}{c|}{FF100} & \multicolumn{2}{c|}{FF100MEOP}\\
\cline{2-13}
&CW &SR &CW &SR &CW &SR &CW &SR &CW &SR &CW &SR\\
\hline
0.01$\sim$0.1 &272.23 &0.2305 &13.89 &0.1692 &358.54 &0.2472  &167.23 &0.2290 &468.94 &0.2402 &300.88 &0.2295\\
0.02$\sim$0.1 &610.75 &0.2427 &41.47 &0.2179 &989.20 &0.2611 &682.84 &0.2518 &1439.66 &0.2643 &942.43 &0.2523\\
0.03$\sim$0.1 &998.54 &0.2423 &102.66 &0.2531 &1802.79 &0.2553 &2343.57 &0.2493 &1776.51 &0.2503 &1578.05 &0.2504\\
0.04$\sim$0.1 &1537.01 &0.2474 &96.41 &0.2420 &2393.61 &0.2485 &3131.85 &0.2255 &1445.58 &0.2258 &1826.97 &0.2382\\
0.05$\sim$0.1 &1347.24 &0.2395 &105.71 &0.2424 &1231.32 &0.2254 &2510.55 &0.2028 &827.92 &0.2034  &1033.22 &0.2140\\
0.06$\sim$0.1 &1416.94 &0.2405 &104.34 &0.2407 &1217.17 &0.2233 &1808.75 &0.1869 &589.78 &0.1907 &632.93 &0.1978\\
\hline
\end{tabular}
\end{table*}

\begin{table*}[htbp]
\setlength{\tabcolsep}{1.0mm}
\footnotesize
\centering
\caption{Cumulative wealths and Sharpe ratios of MPAERL with different upper bounds of the expected return level.}
\label{tab:CWSRrho2}
\vspace{0.5em}
\begin{tabular}{|c|c|c|c|c|c|c|c|c|c|c|c|c|}
\hline
\multirow{2}{*}{$Bound$} & \multicolumn{2}{c|}{FF25} & \multicolumn{2}{c|}{FF25EU} &
\multicolumn{2}{c|}{FF32} & \multicolumn{2}{c|}{FF48} & \multicolumn{2}{c|}{FF100} & \multicolumn{2}{c|}{FF100MEOP}\\
\cline{2-13}
&CW &SR &CW &SR &CW &SR &CW &SR &CW &SR &CW &SR\\
\hline
0.03$\sim$0.08 &997.80 &0.2423  &102.60 &0.2531 &1814.49 &0.2555 &2341.75 &0.2493 &1778.88 &0.2503 &1555.14 &0.2502\\
0.03$\sim$0.09 &1009.96 &0.2427 &98.95 &0.2510 &1802.88 &0.2553 &2342.73 &0.2493 &1777.74 &0.2503 &1577.45 &0.2504\\
0.03$\sim$0.1 &998.54 &0.2423 &102.66 &0.2531 &1802.79 &0.2553 &2343.57 &0.2493 &1776.51 &0.2503 &1578.05 &0.2504\\
0.03$\sim$0.11 &1001.48&0.2424 &102.80 &0.2532 &1814.48 &0.2555 &2344.48 &0.2493 &1775.26 &0.2502 &1578.14 &0.2504\\
0.03$\sim$0.12 &993.87 &0.2421 &102.70 &0.2531 &1814.53 &0.2555 &2345.12 &0.2493 &1774.11 &0.2502 &1556.15 &0.2502\\
0.03$\sim$0.13 &999.16 &0.2423 &102.71 &0.2531 &1802.76 &0.2553 &2346.42 &0.2493 &1772.81 &0.2502 &1556.33 &0.2502\\
\hline
\end{tabular}
\end{table*}

\subsection{Cumulative Wealth}\label{subsec:CW}
The cumulative wealth (CW) sequence $\{S^{(t)}\}_{t=0}^{\mathscr{T}}$ is the most important evaluation score for a strategy throughout an investment. We plot the CW sequences for different strategies on the benchmark data sets in Figure \ref{fig:alldatasetCW}. It shows that the proposed MPAERL outperforms other competitors to a large extent in most time of the investment. The final CWs for different strategies are given in Table \ref{tab:finalCW}, which show that MPAERL achieves the highest scores on all the benchmark data sets. Its final CWs are more than doubling the second highest CWs on FF25EU, FF32, FF100 and FF100MEOP. In particular, MPAERL outperforms the 2 trivial strategies 1/N and Market on FF100, where SSMP could not beat them. It indicates that the proposed adaptive expected return level scheme is effective.

\begin{figure*}[htbp]
\centering
\subfigure[FF25]{
\includegraphics[width=0.48\linewidth]{./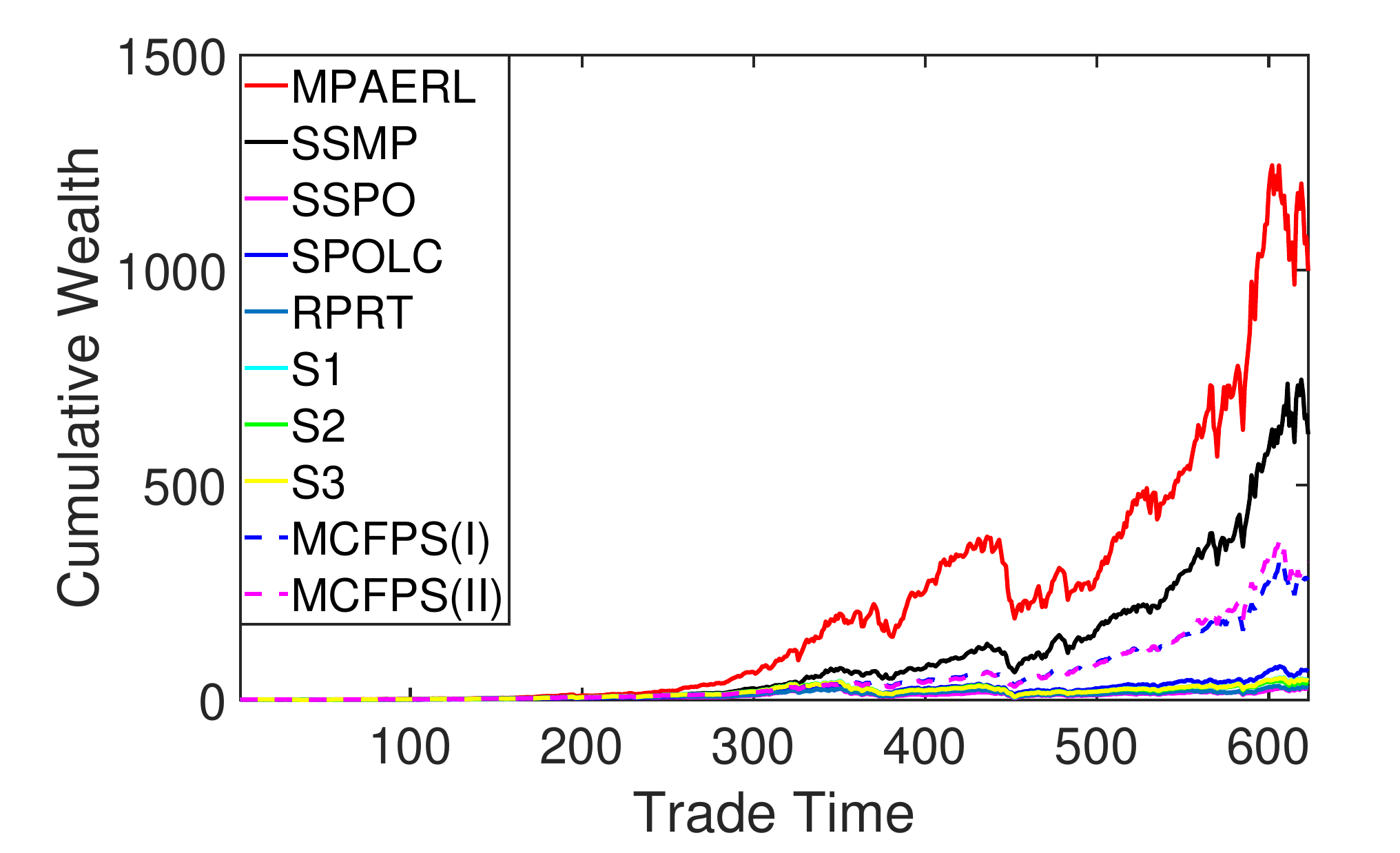}}
\hspace{-1.2em}\subfigure[FF25EU]{
\includegraphics[width=0.48\linewidth]{./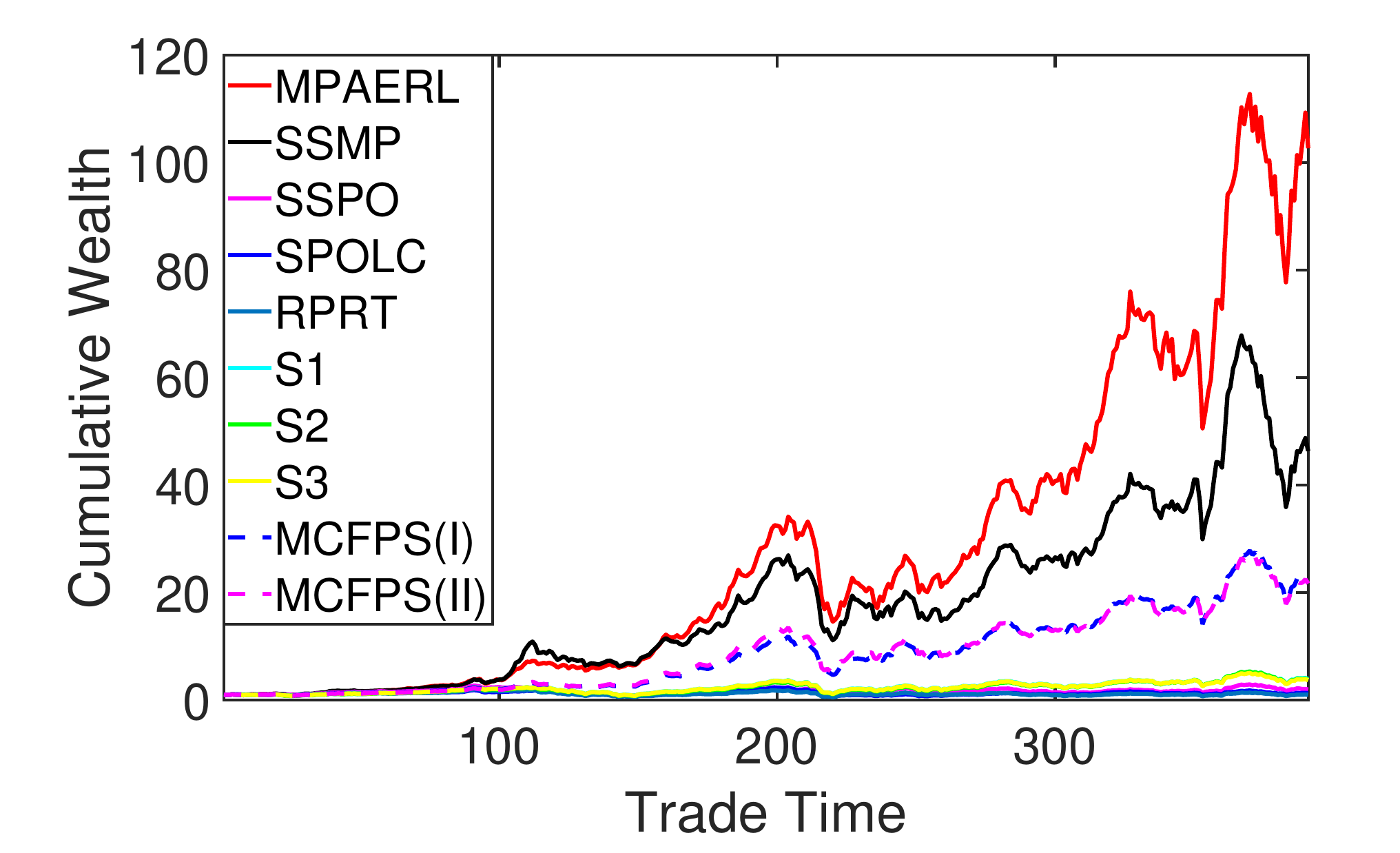}}\\
\subfigure[FF32]{
\includegraphics[width=0.48\linewidth]{./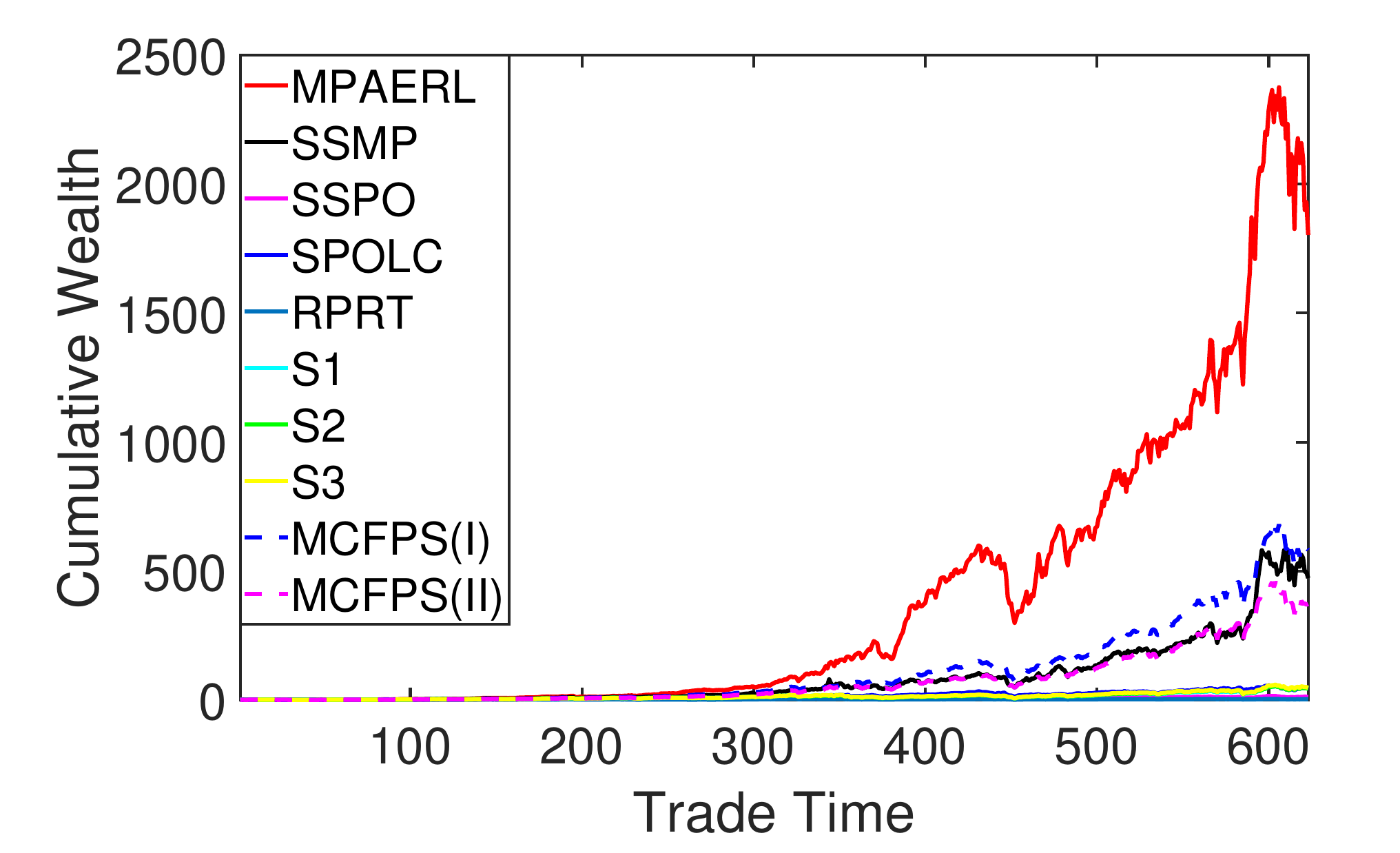}}
\subfigure[FF48]{
\hspace{-1.2em}\includegraphics[width=0.48\linewidth]{./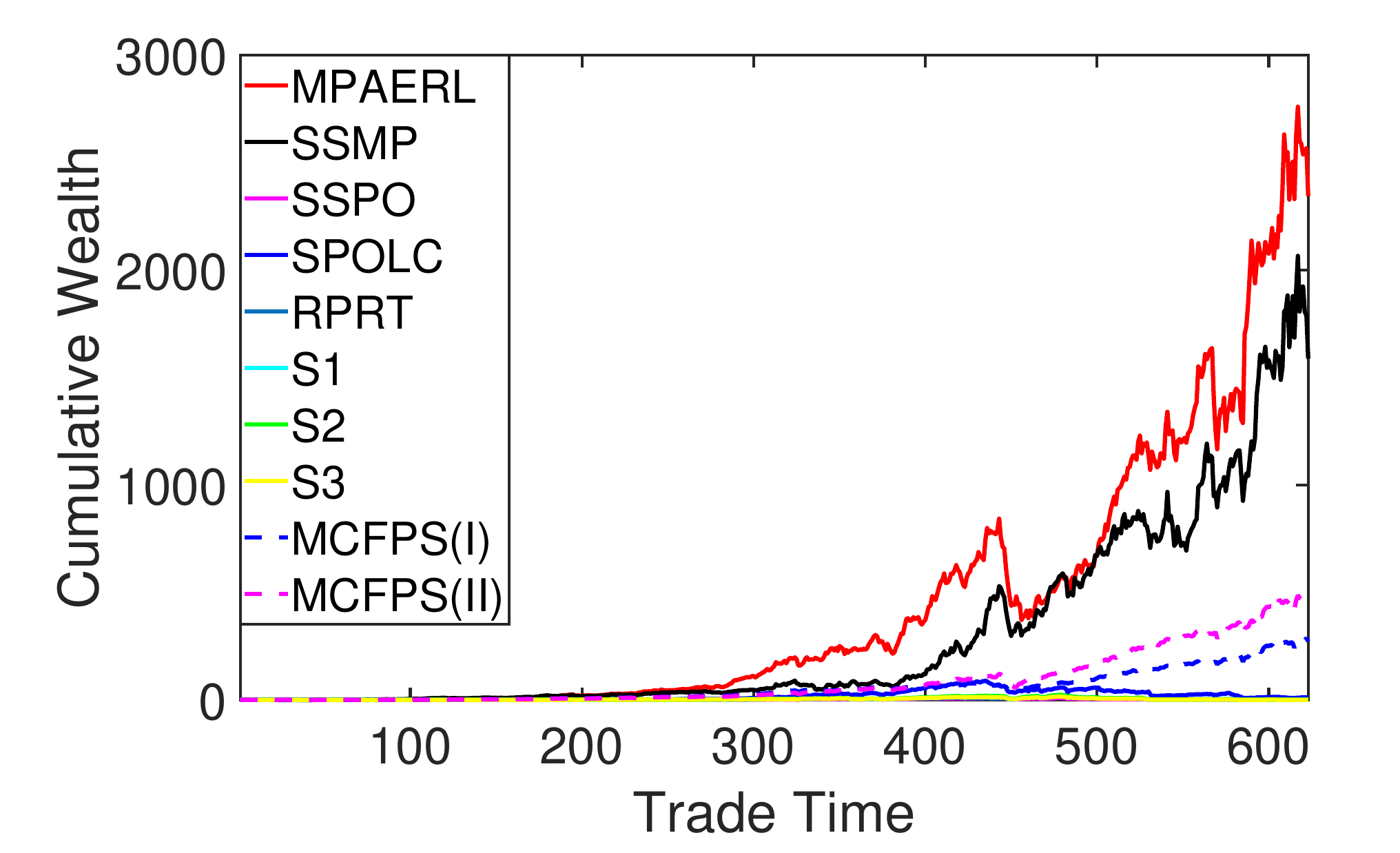}}\\
\subfigure[FF100]{
\includegraphics[width=0.48\linewidth]{./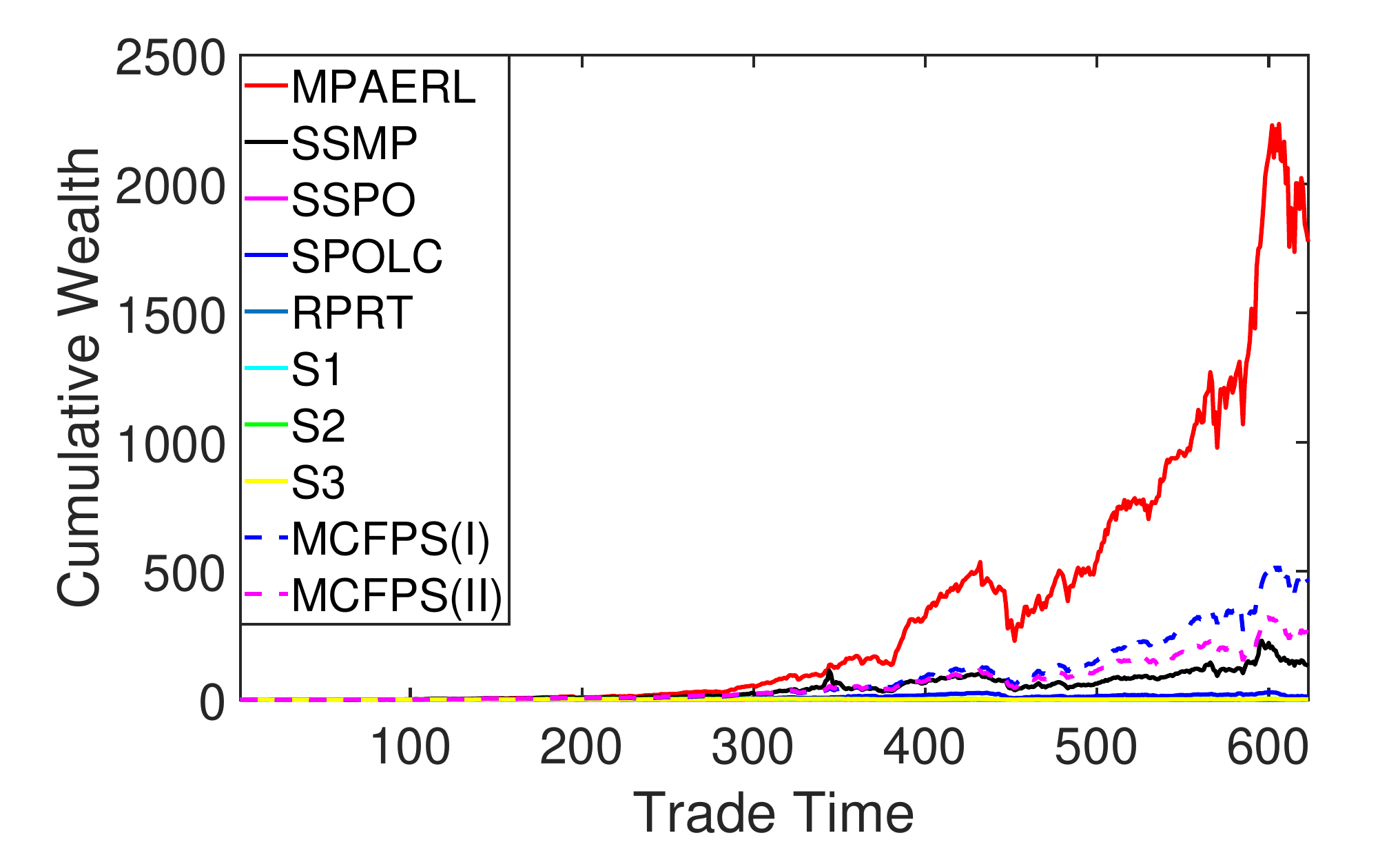}}
\subfigure[FF100MEOP]{
\hspace{-1.2em}\includegraphics[width=0.48\linewidth]{./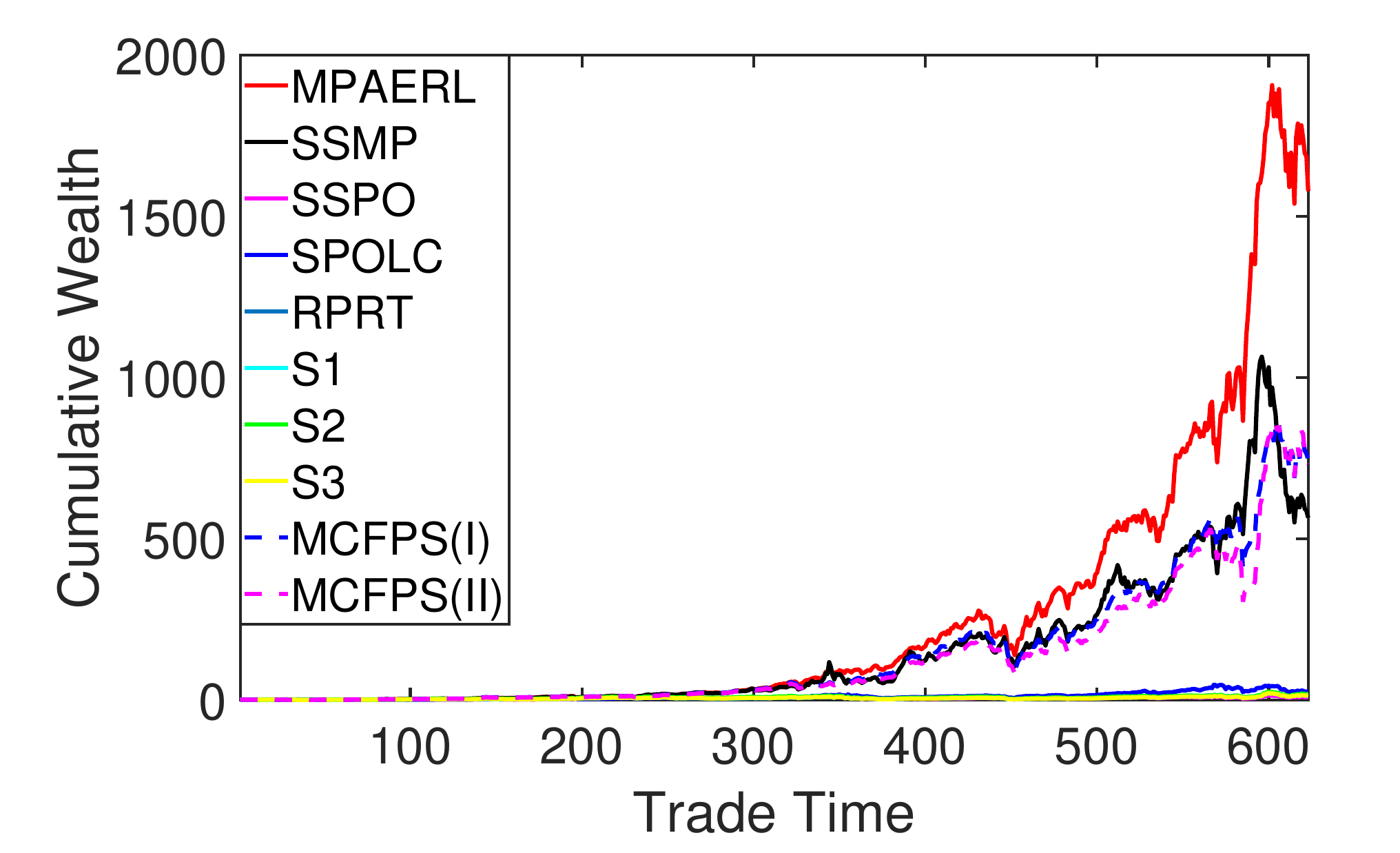}}
\caption{Cumulative wealths of different strategies with respect to trade time on 6 benchmark data sets.}
\label{fig:alldatasetCW}
\end{figure*}

\begin{table}[htbp]
\vspace{1em}
\setlength{\tabcolsep}{1mm}
\footnotesize
\centering
\caption{Final cumulative wealths of different strategies on 6 benchmark data sets.}
\label{tab:finalCW}
\vspace{0.5em}
\begin{tabular}{|c|c|c|c|c|c|c|}
\hline
Strategy & FF25 & FF25EU & FF32 & FF48 & FF100 & FF100MEOP\\
\hline
1/N &355.98 &13.05 &424.42 &235.48 &364.87 &348.70\\
Market &413.79 &43.13 &543.80 &199.85 &505.68 &419.44\\
\hline
SSMP &618.17 &46.34 &472.58 &1588.12 &132.84 &565.56\\
SSPO &25.70 &1.95 &11.58 &0.84 &1.13 &8.70\\
SPOLC &66.71 &1.25 &47.61 &11.90 &13.00 &26.54\\
RPRT &28.30 &0.96 &3.03 &2.34 &0.34 &17.19\\
S1 &45.45 &3.88 &44.13 &0.96 &2.22 &13.96\\
S2 &44.47 &3.91 &40.90 &0.93 &1.94 &7.43\\
S3 &44.81 &3.74 &40.55 &0.96 &2.15 &14.94\\
MCFPS(I) &279.33 &21.96 &581.68 &272.61 &458.42 &746.63\\
MCFPS(II) &321.21 &21.67 &372.83 &471.44 &262.31 &731.91\\
\hline
MPAERL &{\bf998.54} &{\bf102.66} &{\bf1802.79} &{\bf2343.57} &{\bf1776.51} &{\bf1578.05}\\
\hline
\end{tabular}
\end{table}

\newpage
\subsection{$\alpha$ Factor}
In the finance industry, it is also important to evaluate the relative performance of a nontrivial strategy with respect to the Market strategy. The reason is that a portfolio is established from the underlying financial market. If all the asset prices in the financial market drop, the CW cannot rise no matter how we manage the portfolio. In this case, if a nontrivial strategy performs not as badly as the market, it can be considered as effective. Based on the Capital Asset Pricing Model (CAPM) \cite{CAPM}, the $\alpha$ factor \cite{portalpha0} can be used to evaluate this relative performance. Denote ${r}_{s}$ and ${r}_{m}$ as the returns for a nontrivial strategy and the Market strategy, respectively. The $\alpha$ factor can be computed as follows:
\begin{align}
\label{eqn:alphafactor}
&E({r}_{s})=\beta E({r}_{m})+\alpha,\\
\label{eqn:betaestimate}
\notag&\hspace{-2em}\hat{\beta}=\frac{\hat{c}({r}_{s},{r}_{m})}{\hat{\sigma}^2({r}_{m})},\quad \hat{\alpha}=\bar{r}_{s}-\hat{\beta}\bar{r}_{m},
\end{align}
where $E(\cdot)$ denotes the mathematical expectation, $\hat{c}(\cdot,\cdot)$ and $\hat{\sigma}(\cdot)$ denote the sample covariance and the sample standard deviation (STD) computed on the $\mathscr{T}$ trading months, respectively. $\bar{r}_{s}$ denotes the sample mean, which can be computed by $\bar{r}_{s}=\frac{1}{\mathscr{T}}\sum_{t=1}^\mathscr{T} {r}_{s}^{(t)}$ where ${r}_{s}^{(t)}=S_{s}^{(t)}/S_{s}^{(t-1)}-1$ and $S_{s}^{(t)}$ is the CW of this nontrivial strategy on the $t$-th trading month. $\bar{r}_{m}$ can be computed likewise. Since (\ref{eqn:alphafactor}) is essentially a linear regression model, a right-tailed t-test can be implemented to see whether $\alpha$ is significantly greater than $0$. If so, this nontrivial strategy is significantly better than the Market strategy.

The $\alpha$ factors and the p-values for different strategies are given in Table \ref{tab:alphafactor}. MPAERL outperforms all the competitors to a large extent on all the benchmark data sets. Besides, it is the only nontrivial strategy that achieves positive $\alpha$ factors on all the data sets. Moreover, its p-values are all smaller than $0.02$, which indicates that its $\alpha$ factors are greater than $0$ at a confidence level of $98\%$ on all $6$ data sets. To summarize, MPAERL outperforms other state-of-the-art competitors and the Market strategy significantly on the $\alpha$ factor.

\renewcommand\arraystretch{1.1}
\begin{table*}[htbp]
\setlength{\tabcolsep}{0.5mm}
\footnotesize
\centering
\caption{$\alpha$ Factors (with p-values of t-tests) of different strategies on 6 benchmark data sets.}
\label{tab:alphafactor}
\vspace{0.5em}
\begin{tabular}{|c|c|c|c|c|c|c|c|c|c|c|c|c|}
\hline
\multirow{2}{*}{Strategy} & \multicolumn{2}{c|}{FF25} & \multicolumn{2}{c|}{FF25EU} &
\multicolumn{2}{c|}{FF32} & \multicolumn{2}{c|}{FF48} & \multicolumn{2}{c|}{FF100} & \multicolumn{2}{c|}{FF100MEOP}\\
\cline{2-13}
&$\alpha$ &p-value &$\alpha$ &p-value &$\alpha$ &p-value &$\alpha$ &p-value &$\alpha$ &p-value &$\alpha$ &p-value\\
\hline

SSMP &0.0013 &0.1025 &0.0002 &0.4392 &-0.0004 &0.6155 &0.0053 &0.0087 &-0.0019 &0.8778 &0.0006 &0.3542\\
SSPO &-0.0053 &0.9997 &-0.0089 &1.0000 &-0.0065 &1.0000 &-0.0073 &0.9786 &-0.0114 &1.0000 &-0.0063 &0.9967\\
SPOLC &-0.0030 &0.9879 &-0.0093 &1.0000 &-0.0030 &0.9850 &-0.0034 &0.8982 &-0.0054 &0.9969 &-0.0036 &0.9735\\
RPRT &-0.0051 &0.9991 &-0.0106 &1.0000 &-0.0086 &1.0000 &-0.0065 &0.9608 &-0.0128 &1.0000 &-0.0048 &0.9782\\
S1 &-0.0047 &0.9999 &-0.0069 &1.0000 &-0.0050 &1.0000 &-0.0080 &0.9921 &-0.0104 &1.0000 &-0.0060 &0.9988\\
S2 &-0.0049 &0.9975 &-0.0069 &0.9998 &-0.0051 &0.9982 &-0.0080 &0.9822 &-0.0108 &1.0000 &-0.0068 &0.9982\\
S3 &-0.0048 &0.9997 &-0.0070 &1.0000 &-0.0051 &0.9999 &-0.0080 &0.9917 &-0.0104 &1.0000 &-0.0059 &0.9979\\
MCFPS(I) &0.0006 &0.2189 &-0.0017 &0.9523 &0.0013 &0.0500 &0.0021 &0.0176 &0.0005 &0.3258 &0.0016 &0.0377\\
MCFPS(II) &0.0006 &0.2493 &-0.0018 &0.9734 &0.0001 &0.4624 &0.0025 &0.0231 &-0.0009 &0.7497 &0.0013 &0.1323\\
\hline
MPAERL &{\bf0.0024} &0.0151 &{\bf0.0023} &0.0117 &{\bf0.0033} &0.0015 &{\bf0.0059} &0.0002 &{\bf0.0036} &0.0015 &{\bf0.0035} &0.0007\\
\hline
\end{tabular}
\end{table*}

\subsection{Sharpe Ratio}\label{subsec:SR}
Besides return, an investor should also consider the risk of the portfolio. The sample STD of the portfolio return $\hat{\sigma}({r}_{s})$ is a basic risk measurement in the finance industry. Furthermore, Sharpe Ratio (SR) \cite{SHARPratio} is a kind of risk-adjusted return based on CAPM:
\begin{equation*}
\label{eqn:sharperatio}
SR=\frac{\bar{r}_{s}-{r}_{f}}{\hat{\sigma}({r}_{s})},
\end{equation*}
where ${r}_{f}$ denotes the return of some risk-free asset. Since we do not consider risk-free assets in this paper, we let ${r}_{f}=0$. Then SR becomes a quotient of return over risk.

The (monthly) SRs of different strategies are shown in Table \ref{tab:sharpratio}. Note that we need not necessarily annualize the SRs to make comparisons, thus we directly present the computed monthly SRs. The results show that the 2 trivial strategies 1/N and Market outperform other state-of-the-art competitors except the MCFPS and the proposed MPAERL on 4 data sets and 6 data sets, respectively. The reason is that these 2 trivial strategies aim to diversify the risk over all the assets, which is essentially a risk control scheme. Previous researches \cite{1Nstrategy} also verify that such trivial strategies are very competitive in the risk-adjusted return. Moreover, while MCFPS(I) surpasses the 2 trivial strategies 1/N and Market on 3 data sets, our MPAERL outperforms both trivial and nontrivial strategies on all data sets.  MPAERL not only allows for an adaptive expected return level but also reduces the risk at this level, and this return-risk balance can be dynamically adaptive to the ever-changing financial market.

\begin{table}[htbp]
\setlength{\tabcolsep}{1mm}
\footnotesize
\centering
\caption{Sharpe Ratios of different strategies on 6 benchmark data sets.}
\label{tab:sharpratio}
\vspace{0.5em}
\begin{tabular}{|c|c|c|c|c|c|c|c|}
\hline
Strategy & FF25 & FF25EU & FF32 & FF48 & FF100 & FF100MEOP\\
\hline
1/N &0.2278 &0.1576 &0.2236 &0.2059 &0.2089 &0.2077\\
Market &0.2287 &0.2254 &0.2240 &0.2093 &0.2187 &0.2132\\
\hline
SSMP &0.2251 &0.2102 &0.1890 &0.2080 &0.1495 &0.1840\\
SSPO &0.1132 &0.0602 &0.0925 &0.0517 &0.0486 &0.0851\\
SPOLC &0.1438 &0.0412 &0.1363 &0.0905 &0.0931 &0.1120\\
RPRT &0.1142 &0.0334 &0.0620 &0.0677 &0.0276 &0.0978\\
S1 &0.1299 &0.0879 &0.1259 &0.0511 &0.0591 &0.0944\\
S2 &0.1212 &0.0860 &0.1176 &0.0478 &0.0524 &0.0809\\
S3 &0.1277 &0.0857 &0.1222 &0.0512 &0.0584 &0.0951\\
MCFPS(I) &0.2241 &0.1861 &0.2404 &0.2272 &0.2066 &0.2272\\
MCFPS(II)&0.2207 &0.1827 &0.2120 &0.2280 &0.1854 &0.2149\\
\hline
MPAERL &{\bf0.2423} &{\bf0.2531} &{\bf0.2553} &{\bf0.2493} &{\bf0.2503} &{\bf0.2504}\\
\hline
\end{tabular}
\end{table}

\subsection{Maximum Drawdown}
In the finance industry, it is important to examine the extreme loss of a strategy during an investment as part of the risk assessment. A widely-used metric is the maximum drawdown (MDD) \cite{MDD} that measures the maximum percentage loss of CW from a peak to a subsequent valley in the whole investment
\begin{equation*}
MDD:=\max\limits_{l\in[1,\mathscr{T}]}\frac{\max\limits_{t\in[1,l]}{S}^{(t)}-{S}^{(l)}}{\max\limits_{t\in[1,l]}{S}^{(t)}}=1-\min\limits_{l\in[1,\mathscr{T}]}\left(\frac{{S}^{(l)}}{\max\limits_{t\in[1,l]}{S}^{(t)}}\right).
\end{equation*}
It lets the current time $l$ pass from $1$ to $\mathscr{T}$, and searches the past time $t\in[1,l]$ for the peak and the valley CWs to compute the maximum percentage loss. Note that the MDD is a nonnegative value, i.e., the absolute value of the actual percentage loss. As the investing time $\mathscr{T}$ increases, MDD would not decrease. Hence it is difficult to keep a relatively low MDD in a long investment.

MDDs of different strategies are shown in Table \ref{tab:maxDD}. MPAERL outperforms other state-of-the-art competitors on 5 out of 6 data sets, which shows a good capability of downside risk control. In general, the risk inevitably increases as the return increases for any strategy, but MPAERL enjoys high CWs while keeping competitive MDDs at the same time. Hence MPAERL is effective in balancing return and risk due to its adaptive expected return level scheme.

\begin{table}[htbp]
\setlength{\tabcolsep}{1mm}
\footnotesize
\centering
\caption{Maximum drawdowns of different strategies on 6 benchmark data sets.}
\label{tab:maxDD}
\vspace{0.5em}
\begin{tabular}{|c|c|c|c|c|c|c|c|}
\hline
Strategy & FF25 & FF25EU & FF32 & FF48 & FF100 & FF100MEOP\\
\hline
SSMP &0.5096 &0.5865 &0.5252 &{\bf0.4683} &0.7083 &0.5653\\
SSPO &0.8456 &0.7570 &0.6848 &0.9587 &0.8586 &0.8427\\
SPOLC &0.6892 &0.7087 &0.6267 &0.9024 &0.7792 &0.7107\\
RPRT &0.8141 &0.7509 &0.7324 &0.9383 &0.9352 &0.7945\\
S1 &0.8439 &0.6920 &0.6716 &0.9661 &0.8735 &0.8195\\
S2 &0.8560 &0.7030 &0.7146 &0.9758 &0.8869 &0.8558\\
S3 &0.8479 &0.6945 &0.6753 &0.9657 &0.8777 &0.8125\\
MCFPS(I) &0.5077 &0.6078 &0.5159 &0.5538 &0.6495 &0.5797\\
MCFPS(II) &0.5317 &0.6256 &0.5250 &0.5632 &0.6246 &0.5602\\
\hline
MPAERL &{\bf0.5012} &{\bf0.5730} &{\bf0.5006} &0.5586 &{\bf0.5703} &{\bf0.4970}\\
\hline
\end{tabular}
\end{table}

\subsection{Transaction Cost}
The transaction cost is an important practical issue for a strategy to be adopted in the real-world investment. We introduce the proportional transaction cost model \cite{UPtc,OLMAR,RMR2,egrmvgap} to fix the CW at the beginning of the $t$-th trading month as follows:
$$
S_\mathscr{T}^{\nu}= S^{(0)}\prod_{t=1}^{\mathscr{T}}[(\mathbf{x}^{(t)}\hat{\bw}^{(t)})\cdot
(1-\frac{\nu}{2}\sum_{i=1}^{N}|\hat{w}_i^{(t)}-\tilde{w}_{i}^{(t-1)}|)],\ \hspace{1em}\tilde{w}_{i}^{(t-1)}= \frac{\hat{w}_{i}^{(t-1)}\cdot\text{x}_{i}^{(t-1)}}{\mathbf{x}^{(t-1)}\hat{\bw}^{(t-1)}},
$$
where $\mathbf{x}^{(t)}$ is the asset price relative vector of the $t$-th month (i.e., the $t$-th row of the price relative matrix $\bX$), $\tilde{w}_{i}^{(t-1)}$ is the adjusted portfolio of Asset $i$ at the end of the $(t-1)$-th month and $\tilde{\bw}^{(0)}$ is set as the vector ${\bm0}_N$. Given the transaction cost rate $\nu\in[0,1)$, the term $\frac{\nu}{2}\sum_{i=1}^{N}|\hat{w}_i^{(t)}-\tilde{w}_{i}^{(t-1)}|$ computes the proportional transaction cost when the adjusted portfolio $\tilde{\bw}^{(t-1)}$ is updated as the next portfolio $\hat{\bw}^{(t)}$.

We let $\nu$ change in $0\sim 0.5\%$ ($0.5\%$ is a rather high transaction cost rate in the real-world finance industry) and plot the final CWs of different strategies in Figure \ref{fig:trancostCW}. It shows that MPAERL outperforms other state-of-the-art competitors on all 6 data sets in all the cases, which suggests that MPAERL is also effective in controlling transaction costs while managing the portfolio adaptively.

\begin{figure*}[htbp]
\centering
\hspace{-0.8em}\subfigure[FF25]{
\includegraphics[width=0.343\linewidth]{./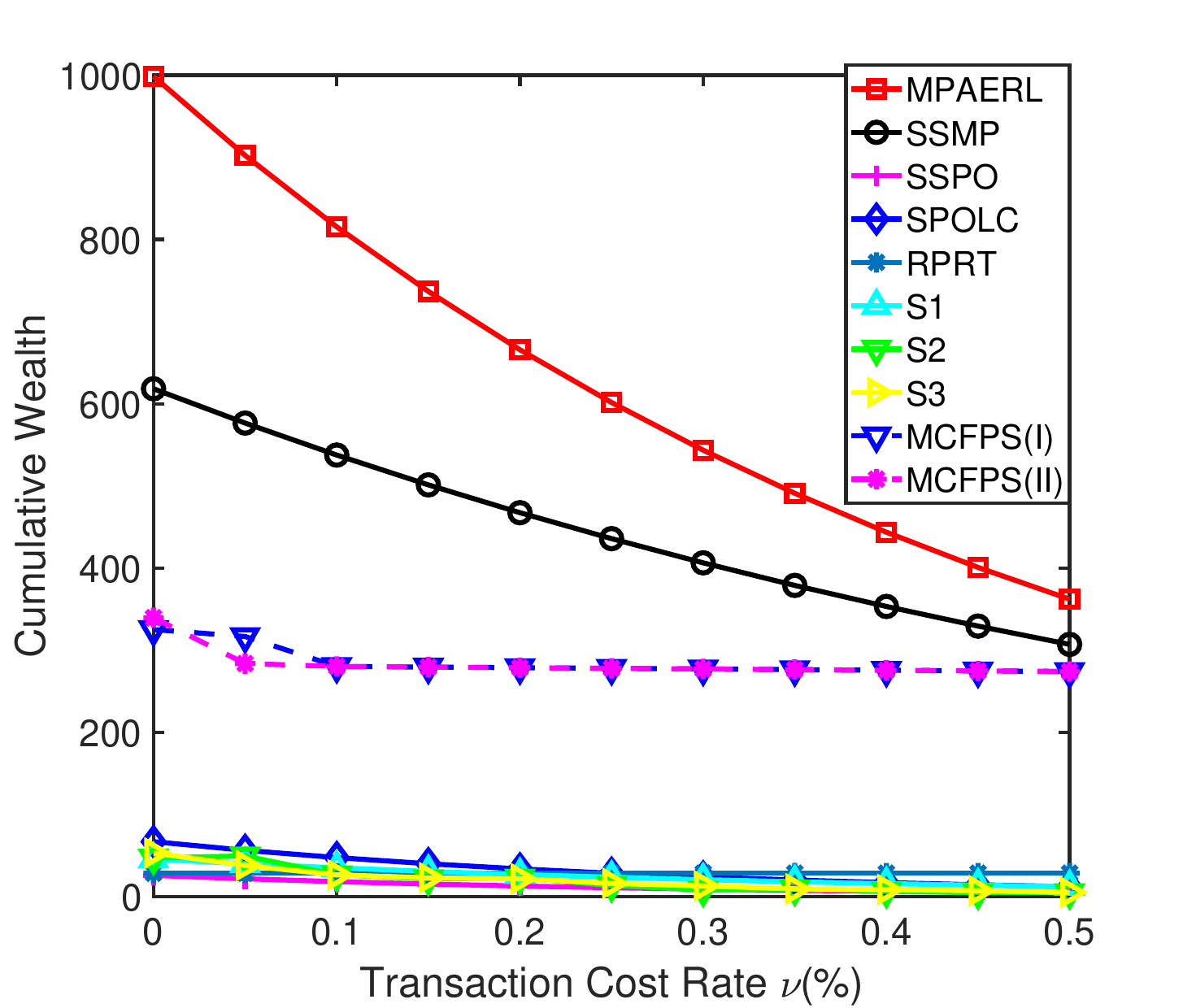}}
\hspace{-1.2em}\subfigure[FF25EU]{
\includegraphics[width=0.343\linewidth]{./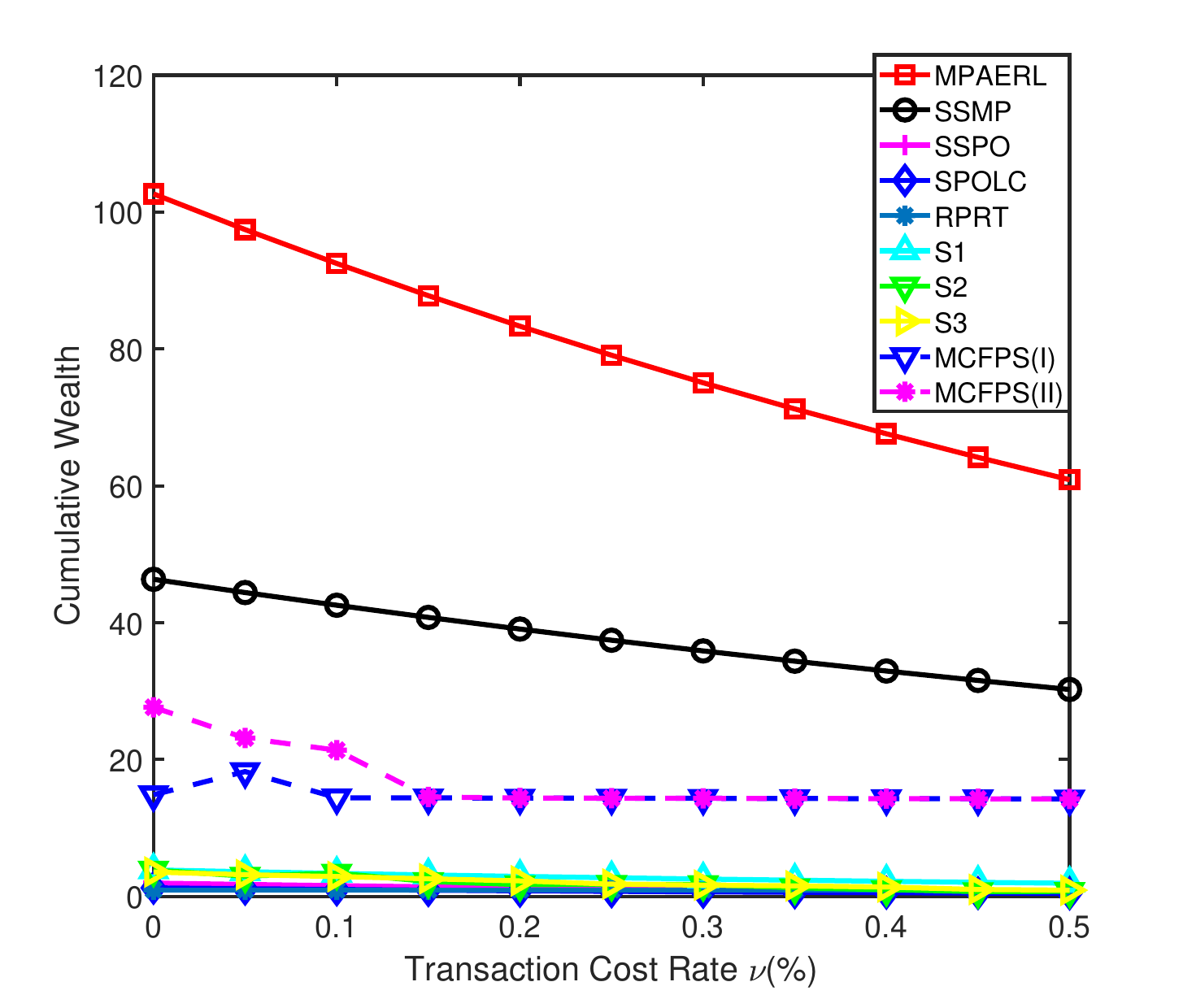}}
\hspace{-1.2em}\subfigure[FF32]{
\includegraphics[width=0.343\linewidth]{./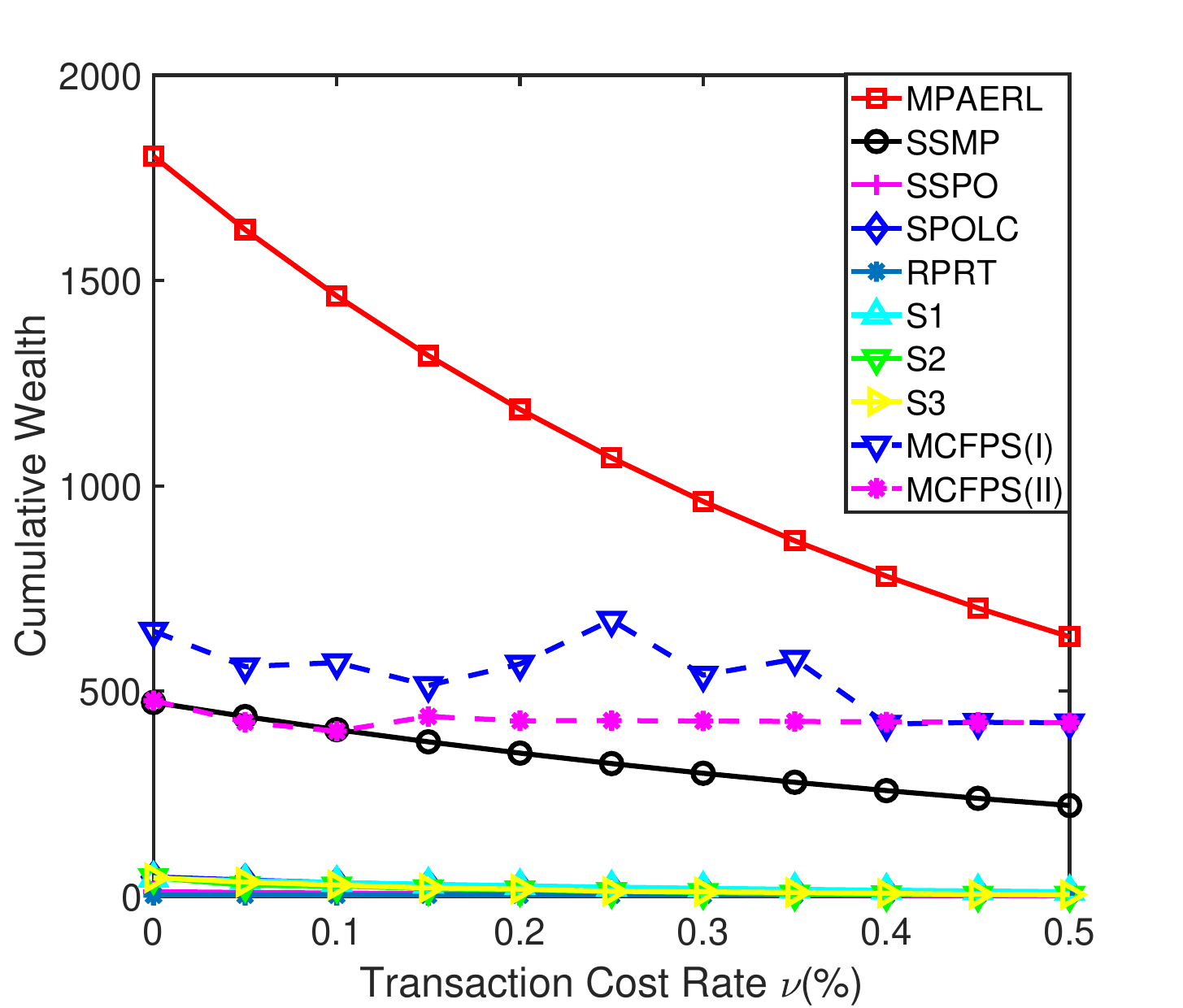}}\\
\hspace{-0.8em}\subfigure[FF48]{
\includegraphics[width=0.343\linewidth]{./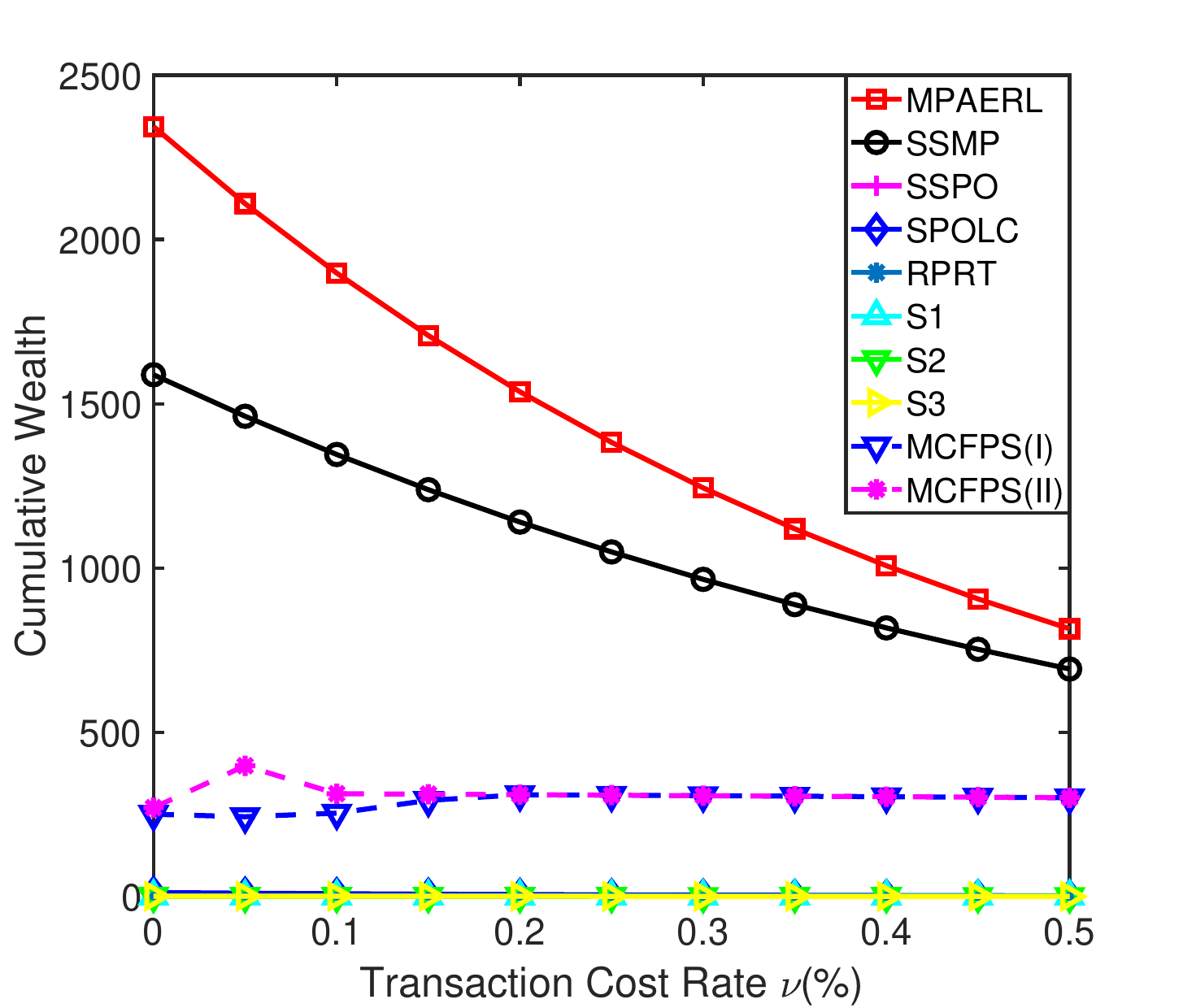}}
\hspace{-1.2em}\subfigure[FF100]{
\includegraphics[width=0.343\linewidth]{./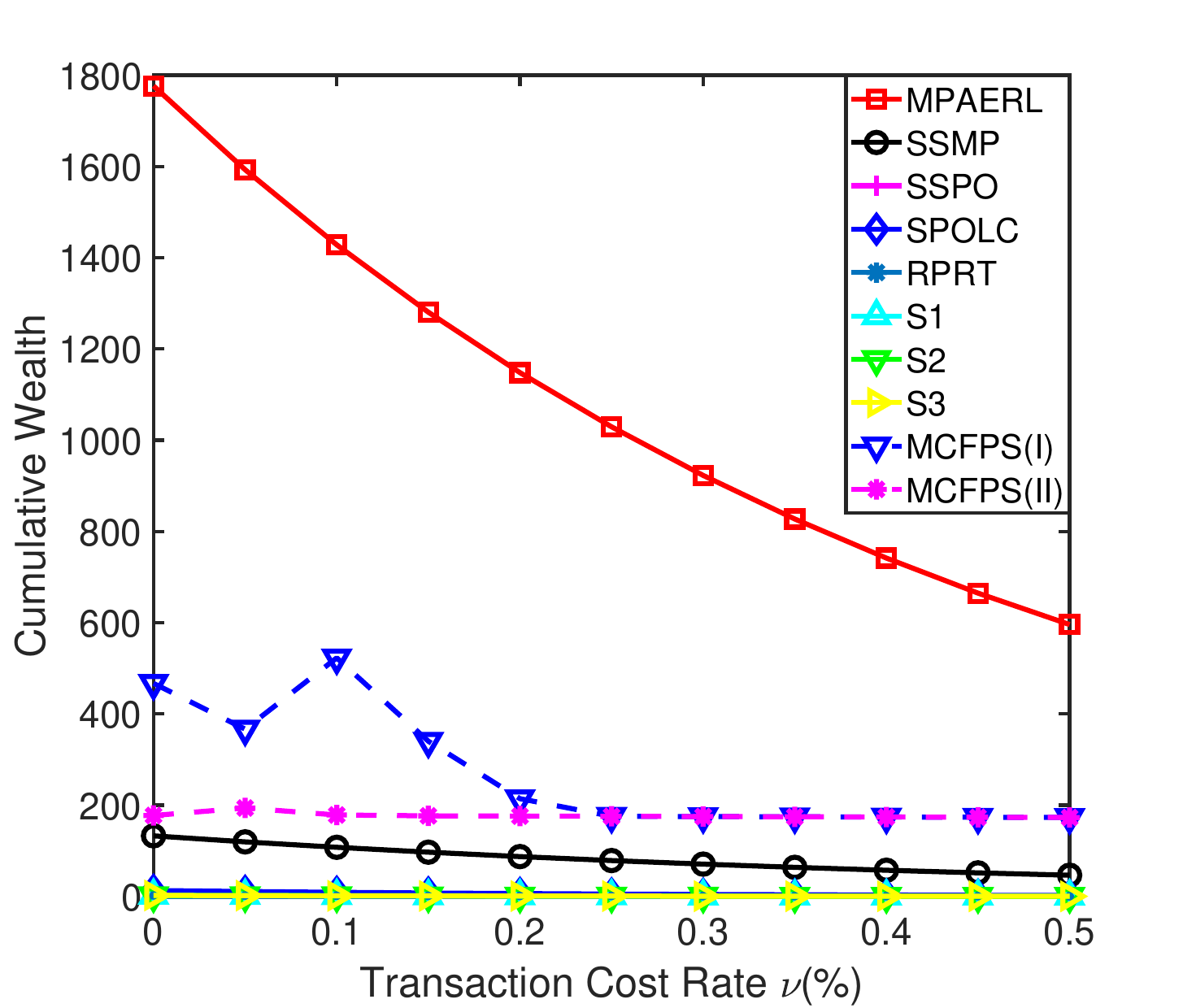}}
\hspace{-1.2em}\subfigure[FF100MEOP]{
\includegraphics[width=0.343\linewidth]{./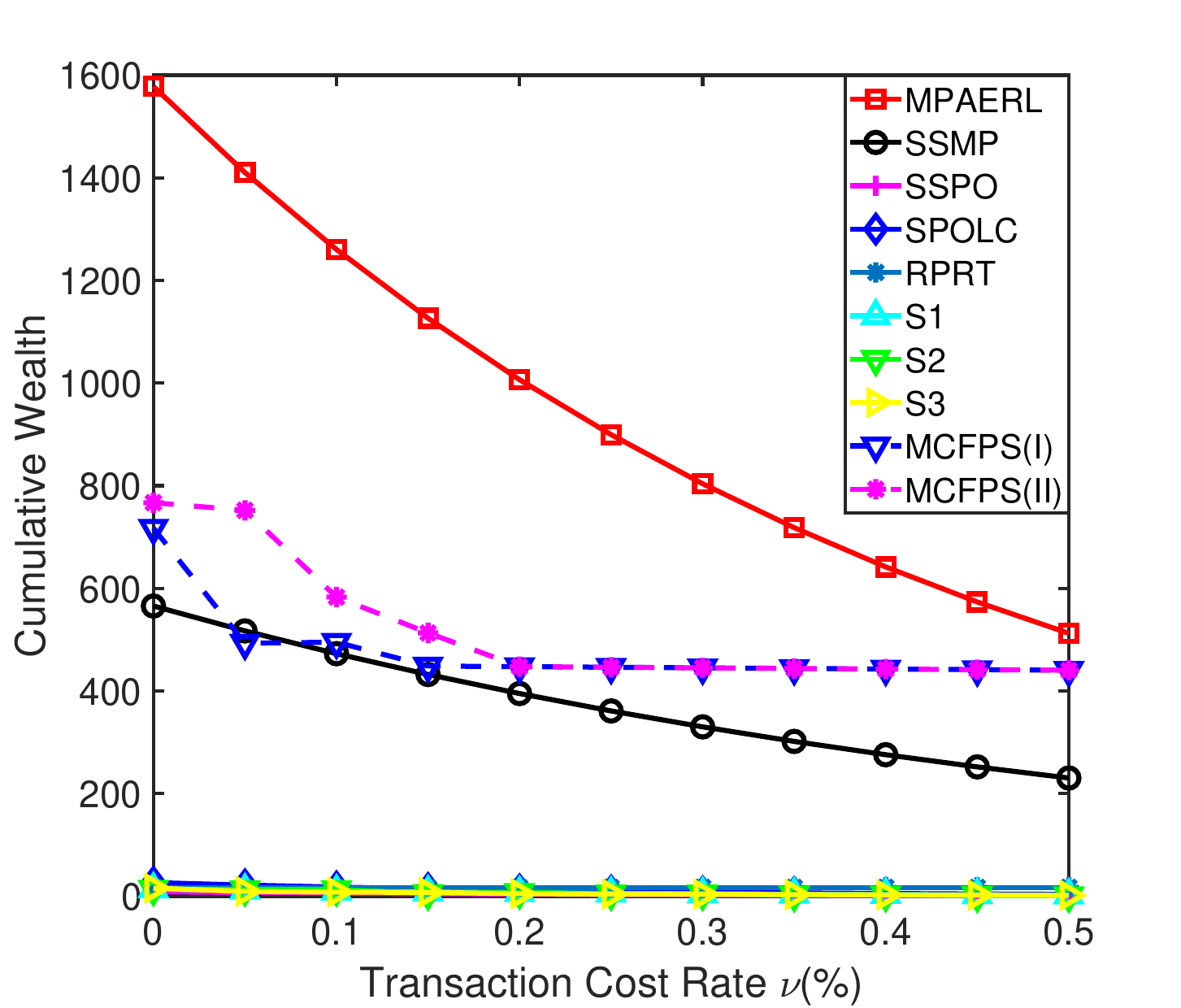}}
\caption{Final cumulative wealths of different strategies with respect to different transaction cost rates on 6 benchmark data sets.}
\label{fig:trancostCW}
\end{figure*}

\section{Conclusion}\label{sec:conclusion}
We propose a novel Markowitz Portfolio with Adaptive Expected Return Level (MPAERL) to improve the traditional return-risk balance scheme in the finance industry. Specifically, traditional portfolio management presets a fixed expected return level according to the preferred risk appetite, then tries to minimize the portfolio risk at this return level. Such a scheme may not favor nonprofessional investors that do not know their risk appetites well, and may not be adaptive to the ever-changing financial market. To fill this gap, we propose to optimize the expected return level and the portfolio simultaneously, in order to dynamically balance the return and the risk of a portfolio. Moreover, we propose an exact, convergent, and efficient Krasnoselskii-Mann Proximity Algorithm (KMPA) based on the proximity operator and the Krasnoselskii-Mann momentum technique to solve the proposed model. KMPA can solve not only the proposed model, but also a general two-term optimization problem with inequality constraints.

Extensive experiments are conducted on $6$ benchmark data sets from the French's widely-used public data library. The results show that MPAERL outperforms other state-of-the-art competitors in several major evaluation scores for investing performance, including the cumulative wealth, the $\alpha$ factor, and the Sharpe Ratio. MPAERL also has a competitive capability of downside risk control according to the maximum drawdown experiments, which indicates that its adaptive scheme can effectively balance return and risk. As for practical issues, MPAERL outperforms other state-of-the-art competitors in most cases of the transaction cost experiments. Therefore, this adaptive expected return level approach merits further exploration, with potential future research efforts focused on developing novel return-risk balancing mechanisms. The limitation of MPAERL may lie in the following aspect. Investing strategies based on mathematical finance assume that assets can be bought or sold according to the market price. But in the real world, the actual transaction price is affected by the impact cost. This may have a little influence in the investing performance.

\section{Appendices}
\subsection{Proof of Proposition \ref{prop_TGwelldef}}\label{sec:appendice1}
In this appendix, we provide the proof of Proposition \ref{prop_TGwelldef}. To this end, we first recall the definition of coercivity and prove in the following lemma that the proximity operator of a function in $\Gamma_0(\bbRm)$ is well-defined. Let $\psi:\bbRm\to[-\infty,+\infty]$. We say that $\psi$ is coercive if $\lim_{\|\bx\|_2\to+\infty}\psi(\bx)=+\infty$.
\begin{lemma}\label{lem:proxwelldef}
If $\psi\in\Gamma_0(\bbRm)$, then for any $\bx\in\bbRm$, $\prox_{\psi}(\bx)$ exists and is unique.
\end{lemma}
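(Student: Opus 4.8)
The plan is to establish the existence and uniqueness of $\prox_{\psi}(\bx)$ by analyzing the objective function $h(\bu):=\frac{1}{2}\|\bu-\bx\|_2^2+\psi(\bu)$ over $\bbRm$, showing it is proper, lower semicontinuous, strictly convex, and coercive, so that it attains its infimum at a unique point. First I would observe that since $\psi\in\Gamma_0(\bbRm)$, there exist $\bm{a}\in\bbRm$ and $b\in\bbR$ such that $\psi(\bu)\geqs\langle\bm{a},\bu\rangle+b$ for all $\bu\in\bbRm$ (a proper lsc convex function is minorized by a continuous affine function, by Fenchel--Moreau or a separation argument). This affine lower bound is the key tool for coercivity: it gives
\begin{equation*}
h(\bu)\geqs\tfrac{1}{2}\|\bu-\bx\|_2^2+\langle\bm{a},\bu\rangle+b,
\end{equation*}
and the right-hand side tends to $+\infty$ as $\|\bu\|_2\to+\infty$ since the quadratic term dominates the linear term. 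Hence $h$ is coercive.

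Next I would handle existence. The function $h$ is proper (its domain equals $\mathrm{dom}\,\psi\neq\varnothing$ and it never takes $-\infty$ thanks to the affine minorant) and lower semicontinuous (sum of the continuous function $\frac{1}{2}\|\cdot-\bx\|_2^2$ and the lsc function $\psi$). A proper, lsc, coercive function on $\bbRm$ attains its minimum: pick a minimizing sequence $\{\bu^n\}$, note it is bounded by coercivity, extract a convergent subsequence $\bu^{n_j}\to\hat{\bu}$, and use lower semicontinuity to conclude $h(\hat{\bu})\leqs\liminf_j h(\bu^{n_j})=\inf h$. So $\hat{\bu}$ is a minimizer, i.e.\ $\prox_{\psi}(\bx)$ exists.

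For uniqueness I would use strict convexity: $\frac{1}{2}\|\cdot-\bx\|_2^2$ is strictly convex and $\psi$ is convex, so $h$ is strictly convex on its (convex) domain; a strictly convex function has at most one minimizer. Concretely, if $\hat{\bu}_1\neq\hat{\bu}_2$ were two minimizers, then $h\big(\frac{1}{2}(\hat{\bu}_1+\hat{\bu}_2)\big)<\frac{1}{2}h(\hat{\bu}_1)+\frac{1}{2}h(\hat{\bu}_2)=\inf h$, a contradiction. I expect the main obstacle (or at least the only step requiring genuine care rather than routine verification) to be the existence of the affine minorant for $\psi$ — this is where properness together with lower semicontinuity and convexity must be combined, typically via a supporting-hyperplane argument to the epigraph of $\psi$ at a point of its relative interior, or by invoking that $\psi^{**}=\psi$ for $\psi\in\Gamma_0(\bbRm)$. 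Everything downstream (coercivity, compactness of sublevel sets, lsc of the sum, strict convexity) is then standard.
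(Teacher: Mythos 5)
Your proof is correct and follows essentially the same route as the paper: both work with the objective $\frac{1}{2}\|\bu-\bx\|_2^2+\psi(\bu)$, obtain existence from properness, lower semicontinuity and coercivity, and uniqueness from strict convexity. The only difference is that you prove the standard ingredients yourself (the affine minorant giving coercivity and the minimizing-sequence existence argument), whereas the paper simply invokes Proposition 11.15 of Bauschke--Combettes for the existence step.
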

\begin{proof}
Let $\widetilde{\psi}(\bu):=\frac{1}{2}\|\bu-\bx\|_2^2+\psi(\bu)$ for $\bu\in\bbRm$. Since $\psi\in\Gamma_0(\bbRm)$ and the quadratic term in $\widetilde{\psi}$ is coercive and strictly convex, we know that $\widetilde{\psi}\in\Gamma_0(\bbRm)$ and it is also coercive and strictly convex. Then the existence and the uniqueness of $\prox_{\psi}(\bx)$ follow from Proposition 11.15 of \cite{bauschke2017convex} and the strict convexity of $\widetilde{\psi}$ immediately.
\end{proof}

We then give the proof of Proposition \ref{prop_TGwelldef} as follows.\vspace{0.5em}

\noindent\textbf{\textit{Proof of Proposition \ref{prop_TGwelldef}.}} We first prove Item $(i)$. Let $\bu_1:=(u_1,u_2,\ldots,u_{m_1})^\top$ and\\ $\bu_2:=(u_{m_1+1},u_{m_1+2},\ldots,u_{m_1+m_2})^\top$ for $\bu\in\bbR^{m_1+m_2}$. For a given vector $\bz:=\bigg(\begin{array}{c}
\bv\\
\by
\end{array}\bigg)$
with $\bv\in\bbR^{m_1}$ and $\by\in\bbR^{m_2}$, the implicit fixed-point equation in \eqref{def_operTG} can be written as
\begin{subequations}
\begin{numcases}{}
\label{TGwelleq1}\bu_1=\prox_{\beta g}(\bv-\beta\bD^\top\by),\\
\label{TGwelleq2}\bu_2=\prox_{\eta\iota_{\bd}^*}(2\eta\bD\bu_1-\eta\bD\bv+\by).
\end{numcases}
\end{subequations}
Since vectors $\bv$ and $\by$ are given, the existence and uniqueness of $\bu_1$ in \eqref{TGwelleq1} follows from the fact $\beta g\in\Gamma_0(\bbR^{m_1})$ and Lemma \ref{lem:proxwelldef}. Now that a unique $\bu_1$ is given, to prove the existence and uniqueness of $\bu_2$ in \eqref{TGwelleq2}, it suffices to show that $\eta\iota_{\bd}^*\in\Gamma_0(\bbR^{m_2})$, which follows from Corollary 13.38 of \cite{bauschke2017convex} and the fact $\iota_{\bd}\in\Gamma_0(\bbR^{m_2})$. In conclusion, for any given $\bz\in\bbR^{m_1+m_2}$ in the equation contained in \eqref{def_operTG}, there exists a unique solution $\bu$.

We next prove Item $(ii)$.
\begin{align*}
&\bz\in\Fix(\mT_{\bW})\Leftrightarrow\bz=\mT_{\bG}\left(\bz-\bW^{-1}\nabla r(\bz)\right)\\
\Leftrightarrow\ &\bz=\mF\left((\bE-\bG)\bz+\bG\left(\bz-\bW^{-1}\nabla r(\bz)\right)\right)\\
\Leftrightarrow\ &\bz\in\Fix(\mT_{\beta,\eta}).
\end{align*}
The third equivalence above holds since the definition of $\bW$ in \eqref{defmatW} implies that $\bG\bW^{-1}=\bP$. This completes the proof.
\hspace*{\fill}~\QED

\subsection{Proof of Proposition \ref{prop1_TWaver}}\label{sec:appendice2}

In this appendix, we provide the proof of Proposition \ref{prop1_TWaver}. To this end, we first recall the Baillon-Haddad theorem \cite{baillon1977quelques} and Proposition 2.4 of \cite{combettes2015compositions} as the following Lemma \ref{BHthm} and Lemma \ref{CYavercomp}, and then prove the firm nonexpansiveness of operator $\mF$ in Lemma \ref{lem_mFfirm}.
\begin{lemma}\label{BHthm}
Suppose that $\psi:\bbRm\to\bbR$ is a differentiable convex function. Then $\nabla\psi$ is $L$-Lipschitz for some $L>0$ if and only if
$$
\|\nabla\psi(\bx)-\nabla\psi(\by)\|_2^2\leqs L\langle\bx-\by,\nabla\psi(\bx)-\nabla\psi(\by)\rangle,
$$
for all $\bx,\by\in\bbRm$.
\end{lemma}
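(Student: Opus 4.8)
The plan is to prove the two implications separately. The reverse direction (the displayed ``co-coercivity'' inequality implies $\nabla\psi$ is $L$-Lipschitz) is immediate, while the forward direction ($L$-Lipschitz gradient implies co-coercivity) is the substantive part.

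For the easy direction, assume the displayed inequality holds for all $\bx,\by\in\bbRm$. Applying the Cauchy--Schwarz inequality to its right-hand side gives $\|\nabla\psi(\bx)-\nabla\psi(\by)\|_2^2\leqs L\|\bx-\by\|_2\|\nabla\psi(\bx)-\nabla\psi(\by)\|_2$; dividing by $\|\nabla\psi(\bx)-\nabla\psi(\by)\|_2$ when it is nonzero (the inequality being trivial otherwise) yields $\|\nabla\psi(\bx)-\nabla\psi(\by)\|_2\leqs L\|\bx-\by\|_2$, i.e. $\nabla\psi$ is $L$-Lipschitz.

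For the forward direction, I would first establish the descent lemma: if $\nabla\psi$ is $L$-Lipschitz, then $\psi(\bu)\leqs\psi(\bv)+\langle\nabla\psi(\bv),\bu-\bv\rangle+\tfrac{L}{2}\|\bu-\bv\|_2^2$ for all $\bu,\bv$, obtained by writing $\psi(\bu)-\psi(\bv)=\int_0^1\langle\nabla\psi(\bv+t(\bu-\bv)),\bu-\bv\rangle\,dt$ and bounding the integrand via Lipschitz continuity. Then, fixing $\by$, I introduce the shifted potential $\varphi_{\by}(\bx):=\psi(\bx)-\langle\nabla\psi(\by),\bx\rangle$, which is convex and differentiable with $\nabla\varphi_{\by}=\nabla\psi-\nabla\psi(\by)$ still $L$-Lipschitz and vanishing at $\by$; by convexity, $\by$ is therefore a global minimizer of $\varphi_{\by}$. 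Applying the descent lemma to $\varphi_{\by}$ with $\bv=\bx$ and minimizing the right-hand side over $\bu$ (the minimizer being $\bu=\bx-\tfrac{1}{L}\nabla\varphi_{\by}(\bx)$) gives $\varphi_{\by}(\by)\leqs\varphi_{\by}(\bx)-\tfrac{1}{2L}\|\nabla\varphi_{\by}(\bx)\|_2^2$, i.e. $\tfrac{1}{2L}\|\nabla\psi(\bx)-\nabla\psi(\by)\|_2^2\leqs\psi(\bx)-\psi(\by)-\langle\nabla\psi(\by),\bx-\by\rangle$. Swapping the roles of $\bx$ and $\by$ and adding the two resulting inequalities cancels the function values, leaving $\tfrac{1}{L}\|\nabla\psi(\bx)-\nabla\psi(\by)\|_2^2\leqs\langle\bx-\by,\nabla\psi(\bx)-\nabla\psi(\by)\rangle$, which is the claim.

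The main obstacle is the forward direction, and within it the key step is introducing $\varphi_{\by}$ so that $\by$ becomes its global minimum: this is what converts the Lipschitz hypothesis, through the descent lemma, into a quadratic lower bound relative to the minimal value, which after symmetrization produces co-coercivity. The descent lemma and the Cauchy--Schwarz step are routine.
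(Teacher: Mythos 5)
Your proof is correct. Note, however, that the paper does not prove this statement at all: it is recalled verbatim as the Baillon--Haddad theorem with a citation to the literature, so there is no in-paper argument to compare against. What you supply is the standard self-contained derivation: the easy direction by Cauchy--Schwarz, and the substantive direction via the descent lemma applied to the shifted potential $\varphi_{\by}(\bx)=\psi(\bx)-\langle\nabla\psi(\by),\bx\rangle$, whose global minimizer is $\by$ by convexity, followed by minimizing the quadratic majorant and symmetrizing in $\bx$ and $\by$. All steps check out: the descent lemma needs only the Lipschitz gradient, convexity enters exactly once (to make $\by$ a minimizer of $\varphi_{\by}$), and the division-by-zero case in the reverse direction is handled. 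So your proposal is a valid proof of the cited fact, going beyond the paper, which simply invokes it as known.
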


\begin{lemma}\label{CYavercomp}
Let $\bH\in\bbR^{m\times m}$ be a symmetric positive definite matrix, $\alpha_1,\alpha_2\in(0,1)$. If $\mT_1:\bbRm\to\bbRm$ and $\mT_2:\bbRm\to\bbRm$ are $\alpha_1$-averaged nonexpansive and $\alpha_2$-averaged nonexpansive with respect to $\bH$, respectively, then $\mT_1\circ\mT_2$ is $\frac{\alpha_1+\alpha_2-2\alpha_1\alpha_2}{1-\alpha_1\alpha_2}$-averaged nonexpansive with respect to $\bH$.
\end{lemma}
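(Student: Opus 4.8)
The plan is to prove the statement directly from the standard quadratic characterization of averaged operators, carried out in the inner-product space induced by $\bH$. Since $\bH$ is symmetric positive definite, $\langle\bx,\by\rangle_{\bH}:=\langle\bx,\bH\by\rangle$ is a genuine inner product on $\bbRm$ with norm $\|\cdot\|_{\bH}$, so every manipulation below (expansion of squared norms, the Young/Cauchy--Schwarz inequality) is legitimate exactly as in the Euclidean case. The first step is to record the characterization: for $\alpha\in(0,1)$, an operator $\mT$ is $\alpha$-averaged nonexpansive with respect to $\bH$ if and only if
\[
\|\mT\bx-\mT\by\|_{\bH}^2+\frac{1-\alpha}{\alpha}\,\|(\mI-\mT)\bx-(\mI-\mT)\by\|_{\bH}^2\leqs\|\bx-\by\|_{\bH}^2\qquad\text{for all }\bx,\by\in\bbRm,
\]
which one obtains by writing $\mT=(1-\alpha)\mI+\alpha\mN$ and expanding the nonexpansiveness inequality $\|\mN\bx-\mN\by\|_{\bH}^2\leqs\|\bx-\by\|_{\bH}^2$ for the (uniquely determined) operator $\mN=\mI+\frac1\alpha(\mT-\mI)$ (cf. \cite{bauschke2017convex}). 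It is convenient to set $c_i:=\frac{1-\alpha_i}{\alpha_i}\in(0,+\infty)$ and $c:=\frac{1-\alpha}{\alpha}$ for the target constant $\alpha:=\frac{\alpha_1+\alpha_2-2\alpha_1\alpha_2}{1-\alpha_1\alpha_2}$; a short computation shows $\frac1c=\frac1{c_1}+\frac1{c_2}$, equivalently $c=\frac{c_1c_2}{c_1+c_2}$, which in particular confirms $\alpha\in(0,1)$ so that the claimed averagedness constant is admissible.

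Next I would chain the two hypotheses. Fix $\bx,\by\in\bbRm$, put $\bp:=\mT_2\bx$, $\bq:=\mT_2\by$, and abbreviate
\[
\ba:=(\mI-\mT_2)\bx-(\mI-\mT_2)\by,\qquad\bb:=(\mI-\mT_1)\bp-(\mI-\mT_1)\bq.
\]
Applying the characterization to $\mT_2$ at the pair $(\bx,\by)$ and to $\mT_1$ at the pair $(\bp,\bq)$, then adding the two resulting inequalities, gives
\[
\|(\mT_1\circ\mT_2)\bx-(\mT_1\circ\mT_2)\by\|_{\bH}^2\leqs\|\bx-\by\|_{\bH}^2-c_2\|\ba\|_{\bH}^2-c_1\|\bb\|_{\bH}^2.
\]
A direct expansion of the definitions yields $(\mI-\mT_1\circ\mT_2)\bx-(\mI-\mT_1\circ\mT_2)\by=\ba+\bb$. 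Hence, by the characterization again (applied now to $\mT_1\circ\mT_2$ with constant $\alpha$), it only remains to establish
\[
c\,\|\ba+\bb\|_{\bH}^2\leqs c_2\|\ba\|_{\bH}^2+c_1\|\bb\|_{\bH}^2.
\]

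The final step is Young's inequality with the right weight: for any $t>0$ one has $2\langle\ba,\bb\rangle_{\bH}\leqs t\|\ba\|_{\bH}^2+\frac1t\|\bb\|_{\bH}^2$, so expanding $\|\ba+\bb\|_{\bH}^2$ and choosing $t:=c_2/c_1$ gives $\|\ba+\bb\|_{\bH}^2\leqs\frac{c_1+c_2}{c_1}\|\ba\|_{\bH}^2+\frac{c_1+c_2}{c_2}\|\bb\|_{\bH}^2$; multiplying through by $c=\frac{c_1c_2}{c_1+c_2}$ produces exactly the desired bound. Assembling the pieces shows that $\mT_1\circ\mT_2$ satisfies the averaged-operator characterization with parameter $\alpha$, i.e. it is $\frac{\alpha_1+\alpha_2-2\alpha_1\alpha_2}{1-\alpha_1\alpha_2}$-averaged nonexpansive with respect to $\bH$. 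No part of this is analytically hard; the one place that demands care — and the likeliest source of a slip — is the constant bookkeeping, namely verifying that the relation $\frac1c=\frac1{c_1}+\frac1{c_2}$ is precisely equivalent to the stated closed form for $\alpha$ and that the Young weight $t=c_2/c_1$ is the one that makes the two sides coincide.
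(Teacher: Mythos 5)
Your proof is correct. Note, though, that the paper does not prove this lemma at all: it is recalled verbatim as Proposition 2.4 of the cited work of Combettes and Yamada on compositions of averaged operators, so there is no in-paper argument to compare against. What you have written is essentially the standard proof of that cited result, transported to the inner product $\langle\bx,\by\rangle_{\bH}=\langle\bx,\bH\by\rangle$, which is legitimate since $\bH$ is symmetric positive definite and all the tools you use (the quadratic characterization of $\alpha$-averagedness, i.e. $\|\mT\bx-\mT\by\|_{\bH}^2+\frac{1-\alpha}{\alpha}\|(\mI-\mT)\bx-(\mI-\mT)\by\|_{\bH}^2\leqs\|\bx-\by\|_{\bH}^2$, and Young's inequality) hold in any inner-product space. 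The bookkeeping you flag as the delicate point does check out: $1-\alpha=\frac{(1-\alpha_1)(1-\alpha_2)}{1-\alpha_1\alpha_2}$ gives $\frac{1-\alpha}{\alpha}=\frac{(1-\alpha_1)(1-\alpha_2)}{\alpha_1(1-\alpha_2)+\alpha_2(1-\alpha_1)}$, which is exactly $\frac{c_1c_2}{c_1+c_2}$, the identity $(\mI-\mT_1\circ\mT_2)\bx-(\mI-\mT_1\circ\mT_2)\by=\ba+\bb$ is a telescoping cancellation, and with $t=c_2/c_1$ one indeed gets $c\|\ba+\bb\|_{\bH}^2\leqs c_2\|\ba\|_{\bH}^2+c_1\|\bb\|_{\bH}^2$ after multiplying by $c=\frac{c_1c_2}{c_1+c_2}$. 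So your contribution relative to the paper is a self-contained elementary proof of a black-boxed ingredient, matching the approach of the reference the paper leans on.
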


\begin{lemma}\label{lem_mFfirm}
Let $\mF:\bbR^{m_1+m_2}\to\bbR^{m_1+m_2}$ and $\bP$ be defined by \eqref{def_randmF} and \eqref{def_EandP}, respectively. Then $\mF$ is firmly nonexpansive with respect to $\bP^{-1}$.
\end{lemma}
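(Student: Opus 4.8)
The plan is to work block-by-block, since both $\mF$ and $\bP$ are block-diagonal: $\mF$ acts as $\prox_{\beta g}$ on the first $m_1$ coordinates and as $\prox_{\eta\iota_{\bd}^*}$ on the last $m_2$ coordinates, while $\bP^{-1}=\mathrm{diag}\big(\tfrac1\beta\bI_{m_1},\tfrac1\eta\bI_{m_2}\big)$. Firm nonexpansiveness with respect to $\bP^{-1}$ means $\|\mF\bz-\mF\bz'\|_{\bP^{-1}}^2\leqs\langle\mF\bz-\mF\bz',\bz-\bz'\rangle_{\bP^{-1}}$; because the weighted inner product also splits across the two blocks with weights $\tfrac1\beta$ and $\tfrac1\eta$, it suffices to prove the corresponding inequality for each block separately, i.e.\ that $\prox_{\beta g}$ is firmly nonexpansive with respect to $\tfrac1\beta\bI_{m_1}$ and $\prox_{\eta\iota_{\bd}^*}$ is firmly nonexpansive with respect to $\tfrac1\eta\bI_{m_2}$. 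Since a positive scalar multiple of the identity does not affect the firm-nonexpansiveness inequality (the scalar cancels on both sides), each of these reduces to the \emph{standard} firm nonexpansiveness of a proximity operator with respect to the ordinary Euclidean inner product.

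The key step is therefore to invoke the classical fact that for any $\psi\in\Gamma_0(\bbRm)$, the operator $\prox_{\psi}$ is firmly nonexpansive (with respect to $\bI_m$); this is a standard result (e.g.\ Proposition 12.28 of \cite{bauschke2017convex}) and can also be derived quickly from Fact $(ii)$ of Lemma \ref{lem_proxsubdiff} together with the monotonicity of the subdifferential. First I would verify the membership hypotheses: $\beta g\in\Gamma_0(\bbR^{m_1})$ (which holds because $g\in\Gamma_0(\bbR^{m_1})$ was established in the paragraph around \eqref{def_fandg} and $\beta>0$), and $\eta\iota_{\bd}^*\in\Gamma_0(\bbR^{m_2})$ (which holds because $\iota_{\bd}\in\Gamma_0(\bbR^{m_2})$ and conjugation preserves $\Gamma_0$, cf.\ the argument already used in the proof of Proposition \ref{prop_TGwelldef}). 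Then I would apply the classical result to each block, and finally reassemble: for $\bz=(\bv,\by)$ and $\bz'=(\bv',\by')$,
\begin{equation*}
\|\mF\bz-\mF\bz'\|_{\bP^{-1}}^2=\tfrac1\beta\|\prox_{\beta g}(\bv)-\prox_{\beta g}(\bv')\|_2^2+\tfrac1\eta\|\prox_{\eta\iota_{\bd}^*}(\by)-\prox_{\eta\iota_{\bd}^*}(\by')\|_2^2,
\end{equation*}
and bounding each summand by the corresponding cross term $\tfrac1\beta\langle\prox_{\beta g}(\bv)-\prox_{\beta g}(\bv'),\bv-\bv'\rangle$ and $\tfrac1\eta\langle\prox_{\eta\iota_{\bd}^*}(\by)-\prox_{\eta\iota_{\bd}^*}(\by'),\by-\by'\rangle$ gives exactly $\langle\mF\bz-\mF\bz',\bz-\bz'\rangle_{\bP^{-1}}$.

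There is no serious obstacle here; the only thing requiring a little care is keeping the bookkeeping of the weighted norm straight and making explicit that a scalar rescaling of the identity weight is harmless — that is, that $\prox_\psi$ being firmly nonexpansive in the Euclidean metric is equivalent to its being firmly nonexpansive with respect to $c\,\bI$ for any $c>0$. This is immediate since both sides of the defining inequality scale by $c$. So the proof is essentially an assembly of the block structure plus citation of the standard proximity-operator fact.
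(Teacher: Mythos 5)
Your proposal is correct and follows essentially the same route as the paper's proof: both reduce the claim block-by-block to the standard firm nonexpansiveness of $\prox_{\beta g}$ and $\prox_{\eta\iota_{\bd}^*}$ in the Euclidean metric (the paper cites Lemma 2.4 of \cite{combettes2005signal} for this fact) and then reassemble the two inequalities with the weights $\tfrac{1}{\beta}$ and $\tfrac{1}{\eta}$ coming from $\bP^{-1}$. No gaps; the scalar-rescaling remark you add is harmless bookkeeping.
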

\begin{proof}
Since $\beta g\in\Gamma_{0}(\bbR^{m_1})$ and $\eta\iota_{\bd}^*\in\Gamma_{0}(\bbR^{m_2})$, it follows from Lemma 2.4 of \cite{combettes2005signal} that $\prox_{\beta g}$ and $\prox_{\eta\iota_{\bd}^*}$ are both firmly nonexpansive with respect to $\bI$. For $\bu:=\left(\begin{array}{c}
\bu_1\\
\bu_2
\end{array}\right)$ with $\bu_1\in\bbR^{m_1}$, $\bu_2\in\bbR^{m_2}$ and $\bv:=\left(\begin{array}{c}
\bv_1\\
\bv_2
\end{array}\right)$ with $\bv_1\in\bbR^{m_1}$, $\bv_2\in\bbR^{m_2}$, by letting $\bp_1:=\prox_{\beta g}(\bu_1)-\prox_{\beta g}(\bv_1)$, $\bp_2:=\prox_{\eta\iota_{\bd}^*}(\bu_2)-\prox_{\eta\iota_{\bd}^*}(\bv_2)$,
and using the firm nonexpansiveness of $\prox_{\beta g}$ and $\prox_{\eta\iota_{\bd}^*}$, we have $\|\bp_1\|_2^2\leqs\langle\bp_1,\bu_1-\bv_1\rangle$ and $\|\bp_2\|_2^2\leqs\langle\bp_2,\bu_2-\bv_2\rangle$. Let $\bp:=\bigg(\begin{array}{c}
\bp_1\\
\bp_2
\end{array}\bigg)$. Then
\begin{align*}
&\|\mF(\bu)-\mF(\bv)\|_{\bP^{-1}}^2\\
=&\|\bp\|_{\bP^{-1}}^2=\frac{1}{\beta}\|\bp_1\|_2^2+\frac{1}{\eta}\|\bp_2\|_2^2\\
\leqs&\frac{1}{\beta}\langle\bp_1,\bu_1-\bv_1\rangle+\frac{1}{\eta}\langle\bp_2,\bu_2-\bv_2\rangle\\
=&\langle\bp,\bu-\bv\rangle_{\bP^{-1}}=\langle\mF(\bu)-\mF(\bv),\bu-\bv\rangle_{\bP^{-1}},
\end{align*}
which implies the desired result.
\end{proof}

We are now in a position to prove Proposition \ref{prop1_TWaver}.\vspace{0.5em}

\noindent\textbf{\textit{Proof of Proposition \ref{prop1_TWaver}.}} It is obvious that $\bW$ is symmetric. We know from $\lambda_{\min}(\bW)>\frac{L}{2}>0$ that $\bW$ is positive definite. According to the definition of $\mT_{\bW}$ in \eqref{def_operTW} and Lemma \ref{CYavercomp}, to prove the averaged nonexpansiveness of $\mT_{\bW}$, it suffices to show that $\mT_{\bG}$ and $\mI-\bW^{-1}\nabla r$ are both averaged nonexpansive.

We first show the averaged nonexpansiveness of $\mT_{\bG}$. Let $\bu=\mT_{\bG}(\bx)$, $\bv=\mT_{\bG}(\by)$ for $\bx,\by\in\bbR^{m_1+m_2}$, and $\ba_1=\bG(\bx-\bu)$, $\ba_2=\bG(\by-\bv)$. Then
\begin{equation}\label{eq_uTGz}
\begin{cases}
\bu=\mF\left((\bE-\bG)\bu+\bG\bx\right)=\mF\left(\bE\bu+\ba_1\right),\\ \bv=\mF\left((\bE-\bG)\bv+\bG\by\right)=\mF\left(\bE\bv+\ba_2\right).
\end{cases}
\end{equation}
From Lemma \ref{lem_mFfirm}, we know that $\mF$ is firmly nonexpansive with respect to $\bP^{-1}$, where $\bP$ is defined by \eqref{def_EandP}, which together with \eqref{eq_uTGz} yields that
$$
\|\bu-\bv\|_{\bP^{-1}}^2\leqs\langle\bu-\bv,\bE(\bu-\bv)+(\ba_1-\ba_2)\rangle_{\bP^{-1}},
$$
that is,
\begin{equation}\label{neq1_TGfirm}
\langle\bu-\bv,\ba_1-\ba_2\rangle_{\bP^{-1}}\geqs\langle\bu-\bv,\widetilde{\bE}(\bu-\bv)\rangle,
\end{equation}
where $\widetilde{\bE}:=\bP^{-1}(\bI-\bE)=\left(\begin{array}{cc}
{\bm0} & -\bD^\top\\
\bD & {\bm0}
\end{array}\right)$. Note that $\widetilde{\bE}^\top=-\widetilde{\bE}$. For any $\bz\in\bbR^{m_1+m_2}$,
$$
\langle\bz,\widetilde{\bE}\bz\rangle=\bz^\top\widetilde{\bE}^\top\bz=-\bz^\top\widetilde{\bE}\bz=-\langle\bz,\widetilde{\bE}\bz\rangle,
$$
which implies that $\langle\bz,\widetilde{\bE}\bz\rangle=0$. Then \eqref{neq1_TGfirm} becomes
$$
\langle\bu-\bv,\ba_1-\ba_2\rangle_{\bP^{-1}}\geqs0,
$$
that is, $\|\bu-\bv\|_{\bW}^2\leqs\langle\bu-\bv,\bx-\by\rangle_{\bW}$. Hence $T_{\bG}$ is firmly nonexpansive with respect to $\bW$. It shows in Remark 4.34 of \cite{bauschke2017convex} that firm nonexpansiveness is equivalent to $\frac{1}{2}$-averaged nonexpansiveness. Therefore, $T_{\bG}$ is $\frac{1}{2}$-averaged nonexpansive with respect to $\bW$.

We next show that operator $\mI-\bW^{-1}\nabla r$ is averaged nonexpansive with respect to $\bW$. Let $\alpha:=\frac{L}{2\lambda_{\min}(\bW)}$ and $\widetilde{\mN}:=\mI-\frac{1}{\alpha}\bW^{-1}\nabla r$. Then
$$
\alpha\in(0,1)\ \ \mbox{and}\ \ \mI-\bW^{-1}\nabla r=(1-\alpha)\mI+\alpha\widetilde{\mN}.
$$
By the definition of averaged nonexpansiveness, it suffices to show that $\widetilde{\mN}$ is nonexpansive with respect to $\bW$. For this purpose, we verify that matrix $\frac{2\alpha}{L}\bI-\bW^{-1}$ is positive semi-definite. We note that $\bW^{-1}$ is symmetric positive definite with the maximum eigenvalue $\lambda_{\max}(\bW^{-1})=\frac{1}{\lambda_{\min}(\bW)}$. Then $\frac{2\alpha}{L}\bI-\bW^{-1}$ is symmetric with the minimum eigenvalue
$$
\lambda_{\min}\left(\frac{2\alpha}{L}\bI-\bW^{-1}\right)=\frac{2\alpha}{L}-\frac{1}{\lambda_{\min}(\bW)}=0,
$$
which implies that $\frac{2\alpha}{L}\bI-\bW^{-1}$ is positive semi-definite. It is easy to see from the definition of function $r$ in \eqref{def_randmF} that $r$ is convex and differentiable with an $L$-Lipschitz continuous gradient. By Lemma \ref{BHthm}, for any $\bx,\by\in\bbR^{m_1+m_2}$, defining $\bz:=\nabla r(\bx)-\nabla r(\by)$, we have $\|\bz\|_2^2\leqs L\langle\bx-\by,\bz\rangle$, which together with the positive semi-definiteness of matrix $\frac{2\alpha}{L}\bI-\bW^{-1}$ gives
\begin{align*}
&2\alpha\langle\bx-\by,\bz\rangle-\langle\bz,\bW^{-1}\bz\rangle\\
\geqs&\frac{2\alpha}{L}\|\bz\|_2^2-\langle\bz,\bW^{-1}\bz\rangle\\
=&\left\langle\bz,\left(\frac{2\alpha}{L}\bI-\bW^{-1}\right)\bz\right\rangle\geqs0.
\end{align*}
We then have that
\begin{align*}
&\left\|\widetilde{\mN}\bx-\widetilde{\mN}\by\right\|_{\bW}^2=\left\|(\bx-\by)-\frac{1}{\alpha}\bW^{-1}\bz\right\|_{\bW}^2\\
=&\|\bx-\by\|_{\bW}^2+\frac{1}{\alpha^2}\|\bW^{-1}\bz\|_{\bW}^2-\frac{2}{\alpha}\langle\bx-\by,\bW^{-1}\bz\rangle_{\bW}\\
=&\|\bx-\by\|_{\bW}^2-\frac{1}{\alpha^2}\left(2\alpha\langle\bx-\by,\bz\rangle-\langle\bz,\bW^{-1}\bz\rangle\right)\\
\leqs&\|\bx-\by\|_{\bW}^2,
\end{align*}
that is, $\widetilde{\mN}$ is nonexpansive with respect to $\bW$, and hence $\mI-\bW^{-1}\nabla r$ is $\alpha$-averaged nonexpansive with respect to $\bW$.

Now by employing Lemma \ref{CYavercomp}, we conclude from the $\frac{1}{2}$-averaged nonexpansiveness of $\mT_{\bG}$ and the $\frac{L}{2\lambda_{\min}(\bW)}$-averaged nonexpansiveness of $\mI-\bW^{-1}\nabla r$ (with respect to $\bW$) that $\mT_{\bW}$ is $\frac{2\lambda_{\min}(\bW)}{4\lambda_{\min}(\bW)-L}$-averaged nonexpansive with respect to $\bW$, which completes the proof.
\hspace*{\fill}~\QED

\section*{Acknowledgments}
This work was supported in part by the National Natural Science Foundation of China under Grants 12401120 and 62176103, in part by Guangdong Basic and Applied Basic Research Foundation under Grant 2021A1515110541, and in part by the Science and Technology Planning Project of Guangzhou under Grants 2024A04J3940 and 2024A04J9896.

\bibliographystyle{elsarticle-num}
\bibliography{bibfile}

\end{document}